\documentclass[11pt,a4paper]{article}
\usepackage{hyperref}
\usepackage{amsmath,amsfonts,amssymb,amsthm,mathtools}
\usepackage{authblk}
\usepackage{latexsym,graphicx}
\usepackage{dsfont}
\usepackage{comment}
\usepackage{amscd}
\usepackage[dvipsnames]{xcolor}
\usepackage{extarrows} 
	\numberwithin{equation}{section} 
\usepackage{yfonts}
\usepackage{authblk}

\usepackage{appendix}

\newcommand{\bphi}{\boldsymbol{\phi}}


\usepackage{todonotes}

\usepackage{tikz}
	 \usetikzlibrary{calc} 
	\usetikzlibrary{matrix,arrows,decorations.pathmorphing} 
	\usepackage{tikz-3dplot}

\allowdisplaybreaks

\textheight 23.5cm
\textwidth 16.3cm
\oddsidemargin 0pt
\evensidemargin 0pt
\topmargin -40pt
\jot = .5ex
\setlength{\parskip}{.3cm} 

\def\Xint#1{\mathchoice
{\XXint\displaystyle\textstyle{#1}}%
{\XXint\textstyle\scriptstyle{#1}}%
{\XXint\scriptstyle\scriptscriptstyle{#1}}%
{\XXint\scriptscriptstyle\scriptscriptstyle{#1}}%
\!\int}
\def\XXint#1#2#3{{\setbox0=\hbox{$#1{#2#3}{\int}$}
\vcenter{\hbox{$#2#3$}}\kern-.5\wd0}}

\def\pvint{\Xint-}

\newcommand{\ii}{{\rm i}}

\newcommand{\R}{{\mathbb R}}
\newcommand{\C}{{\mathbb C}}
\newcommand{\Z}{{\mathbb Z}}

\newcommand{\cN}{{\mathcal  N}}

\newtheorem{theorem}{Theorem}
\newtheorem{corollary}{Corollary}[theorem]
\newtheorem{proposition}{Proposition}[section]
\newtheorem{lemma}[proposition]{Lemma}
\newtheorem{remark}{Remark}[section]
\newtheorem{IVP}{Initial value problem}

\newcommand{\ba}{\mathbf{a}}
\newcommand{\bb}{\mathbf{b}}

\newcommand{\bn}{\mathbf{n}}

\newcommand{\bs}{\mathbf{s}}
\newcommand{\bt}{\mathbf{t}}
\newcommand{\bu}{\mathbf{u}}

\newcommand{\bcu}{\mathbf{U}}

\newcommand{\bv}{\mathbf{v}}

\newcommand{\bx}{\mathbf{x}}
\newcommand{\br}{\mathbf{r}}
\newcommand{\by}{\mathbf{y}}

\newcommand{\bz}{\mathbf{z}}

\newcommand{\bS}{\mathbf{S}}
\newcommand{\bT}{\mathbf{T}}

\newcommand{\cT}{\mathcal{T}}

\newcommand{\ta}{\tilde{a}}


\newcommand{\mR}{\mathsf{R}}
\newcommand{\mS}{\mathsf{S}}

\newcommand{\cn}{\mathrm{cn}}
\newcommand{\sn}{\mathrm{sn}}

\newcommand{\trans}{\top}

\newcommand{\Tt}{\tilde{T}}

\newcommand{\im}{\mathrm{Im} }

\newcommand{\dotcirc}{\mathrel{ \!
\tikz[baseline={([yshift=-2.7pt]current bounding box.center)},scale=0.10]{
\draw (0,0) circle (1);
\filldraw[black] (0,0) circle (0.11);
}				
							\! }}
							 
\newcommand{\wedgecirc}{\mathrel{\!
\tikz[baseline={([yshift=-2.7pt]current bounding box.center)},scale=0.10]{
\def\a{40};
\pgfmathsetmacro{\c}{cos(\a) };
\pgfmathsetmacro{\s}{sin(\a) };
\draw (0,0) circle (1);
\draw[line width=0.4pt] (-\c,-\s ) -- (0,1);
\draw[line width=0.4pt] (\c,-\s ) -- (0,1);
}				
							\! }}

\newcommand{\dbb}{ \!\!\dot{\,\,\bb}}
\newcommand{\dbphi}{ \mspace{4mu}\dot{\mspace{-4mu}\bphi} }

\title{\Large{Periodic solutions of the non-chiral intermediate Heisenberg ferromagnet equation described by elliptic spin Calogero-Moser dynamics}}
\date{\vspace{-0.5cm}\small\today\vspace{-0.5cm}}

\author[1]{Bjorn K. Berntson}
\author[2]{Rob Klabbers}

\affil[1]{Department of Physics, KTH Royal Institute of Technology, SE-100 44 Stockholm, Sweden}
\affil[2]{Institut f\"{u}r Physik, Humboldt-Universit\"{a}t zu Berlin,
Zum Gro{\ss}en Windkanal 2, 12489 Berlin, Germany}

\vspace{2mm}

\date{\today}

\begin{document}

\begin{flushright}

	\footnotesize
	HU-EP-22/15 \\
	HU-Mathematik-2022-02
\end{flushright}

\vspace{-15pt}
\bigskip

{\let\newpage\relax\maketitle}

\maketitle

\newcommand{\E}[1]{{\color{blue}{#1}}}

\begin{abstract}
We present a class of periodic solutions of the non-chiral intermediate Heisenberg ferromagnet (ncIHF) equation, which was recently introduced by the authors together with Langmann as a classical, continuum limit of an Inozemtsev-type spin chain. These exact analytic solutions are constructed via a spin-pole ansatz written in terms of certain elliptic functions. The dynamical parameters in our solutions solve an elliptic spin Calogero-Moser (CM) system subject to certain constraints. In the course of our construction, we establish a novel Bäcklund transformation for this constrained elliptic spin CM system.
\end{abstract} 

\noindent
{\small{\sc AMS Subject Classification (2020)}: 33E05, 35B10, 35Q51, 35Q70, 37J35, 37K20, 37K40}
 
\noindent
{\small{\sc Keywords}: periodic waves, elliptic functions, B\"{a}cklund transformation, integrable system, nonlocal partial differential equation, nonlinear wave equation}

\tableofcontents

\section{Introduction} 

A classic result in the theory of integrable systems \cite{airault1977,choodnovsky1977} states that the soliton dynamics of the Korteweg-de Vries equation is governed by an ($A$-type) Calogero-Moser (CM) system. This relation between two of the best-known integrable systems is but one instance of a \textit{soliton-CM correspondence}, whereby integrable PDEs are linked to many-body systems of CM type. For many such PDEs, including the Korteweg-de Vries \cite{airault1977,choodnovsky1977}, nonlinear Schr\"{o}dinger \cite{hone1997}, Benjamin-Ono \cite{chen1979}, and intermediate long wave \cite[Chapter~3]{ablowitz1981} equations, this is accomplished by making an ansatz for the solution with time-dependent poles in the complex plane and showing that the locations of these poles evolve according to a (complexified) CM system. As such CM systems are exactly-solvable \cite{olshanetsky1983}, this process provides classes of exact analytic solutions to the PDEs. 

A complementary approach is to construct integrable systems with infinite degrees of freedom by taking continuum limits of CM systems. The long-range character of the interactions in the CM system corresponds to nonlocal terms in the continuum description, resulting in partial integro-differential equations of Benjamin-Ono type \cite[Chapter~4]{ablowitz1991}. This concept was pioneered by Abanov, Bettelheim, and Wiegmann \cite{abanov2009}, who showed that the continuum dynamics of the rational CM system is described by Euler hydrodynamic equations that are equivalent to an integro-differential variant of the nonlinear Schr\"{o}dinger equation \cite{matsuno2002}. Recent studies have applied this idea to CM-type systems with spin degrees of freedom, first introduced by Gibbons and Hermsen \cite{gibbons1984}; see also \cite{wojciechowski1985}. The half-wave maps (HWM) equation was derived in \cite{zhou2015} and \cite{lenzmann2018,lenzmann2020} as a continuum limit of a classical Haldane-Shastry spin chain \cite{haldane1988,shastry1988}, a limiting case of the trigonometric spin CM system \cite{polychronakos1993}. Lax integrability and an infinite number of conservation laws were established for the HWM equation in \cite{gerard2018} and multi-soliton solutions were constructed in \cite{berntson2020,matsuno2022}. Moreover, the HWM equation admits a family of periodic solutions governed by a trigonometric spin CM system \cite{berntson2020}. Thus, the HWM equation is linked in two distinct ways to the trigonometric spin CM system. In the present paper, we show that this twofold relation can be lifted to the elliptic setting. 

The non-chiral intermediate Heisenberg ferromagnet (ncIHF) equation is a generalization of the HWM equation related to the elliptic spin CM system. Together with Langmann, we introduced the ncIHF equation in \cite{berntsonklabbers2021} as a continuum limit of a classical Inozemtsev spin chain \cite{inozemtsev1990}; the latter is simultaneously an elliptic generalization of the Haldane-Shastry spin chain and a limiting case of the elliptic spin CM system. It is important to note that the ncIHF equation comes in two related variants: (i) an equation with periodic boundary conditions and (ii) an equation posed on the real line, which may be obtained as an infinite-period limit of the first. In this paper, we study the former, which we call the \textit{periodic ncIHF equation}. Basic integrability results for the ncIHF equation on the real line, where the analysis is technically simpler, have already been obtained in \cite{berntsonklabbers2021}: the (aperiodic) ncIHF equation admits a Lax pair, an infinite number of conservation laws, and multi-soliton solutions governed by the hyperbolic spin CM system. One major result of this paper is that the periodic ncIHF equation admits a family of solutions, analogous to the multi-solitons of the (aperiodic) ncIHF equation, governed by the elliptic spin CM system. As the elliptic spin CM system is exactly-solvable \cite{krichever1995spin}, this gives a new class of exact analytic solutions to the periodic ncIHF equation. 

The periodic ncIHF equation describes the time evolution of two coupled spin densities propagating on the circle of circumference $2\ell>0$; these spin densities are represented by functions\footnote{In this paper, we consider generally complex solutions of the periodic ncIHF equation. See Remark~\ref{rem:complex} for a discussion of this strategy.}  $\bu,\bv: \mathbb{R}\times \mathbb{R}\to \C^3$ of $(x,t)\in \R\times \R$ satisfying $\bu(x+2\ell,t)=\bu(x,t)$, $\bv(x+2\ell,t)=\bv(x,t)$, and $\bu(x,t)^2=\bv(x,t)^2=\rho^2$ for some constant $\rho\in\C$. The periodic ncIHF equation reads 
\begin{equation}
\label{eq:ncIHF}
\begin{split}
\bu_t=&+\bu\wedge T\bu_x - \bu\wedge \tilde{T}\bv_x, \\
\bv_t=& - \bv\wedge T\bv_x+\bv\wedge \tilde{T}\bu_x,
\end{split}
\end{equation}
where $T$ and $\tilde{T}$ are integral operators which act componentwise on three-vectors and are defined by 
\begin{equation} 
\label{eq:TTe}
\begin{split} 
(Tf)(x)\coloneqq &\;\frac{1}{\pi}\pvint_{-\ell}^{\ell}\zeta_1(x'-x;\ell,\ii\delta)f(x')\,\mathrm{d}x', \\
(\Tt f)(x)\coloneqq  &\;\frac{1}{\pi}\int_{-\ell}^{\ell} \zeta_1(x'-x+\ii\delta;\ell,\ii\delta) f(x')\, \mathrm{d}x', 
\end{split} 
\end{equation} 
where the dashed integral indicates a principal value prescription and 
\begin{equation}\label{eq:zeta1}
\zeta_1(z;\ell,\ii\delta)\coloneqq \zeta(z;\ell,\ii\delta)-\frac{\zeta(\ell; \ell,\ii\delta)}{\ell}z \quad (z\in\C),
\end{equation}
with $\zeta(z;\ell,\ii\delta)$ the Weierstrass $\zeta$-function with half-periods $\ell$ and $\ii\delta$ ($\delta>0$). The function $\zeta_1(z;\ell,\ii\delta)$ is $2\ell$-periodic and satisfies
\begin{equation}\label{eq:zetalimits}
\lim_{\ell\to\infty} \zeta_1(z;\ell;\ii\delta)=\frac{\pi}{2\delta}\coth\bigg(\frac{\pi}{2\delta}z\bigg),\qquad \lim_{\ell\to\infty} \zeta_1(z+\ii\delta;\ell;\ii\delta)=\frac{\pi}{2\delta}\tanh\bigg(\frac{\pi}{2\delta}z\bigg).
\end{equation}
The ncIHF equation on the real line \cite{berntsonklabbers2021} is obtained in the $\ell\to\infty$ limit; it is given by \eqref{eq:ncIHF} with the $\ell \to \infty$ limit of the operators \eqref{eq:TTe} obtained using \eqref{eq:zetalimits}. We refer to \cite{berntsonlangmann2021} for further details on the relationship between the periodic and real-line versions of \eqref{eq:TTe}. Similarly, in the limit $\delta\to\infty$, it can be shown that \eqref{eq:ncIHF} reduces to two decoupled HWM equations related by a parity transformation $(\bu,\bv)\to (-P\bv,-P\bu)$, where $(Pf)(x,t)\coloneqq f(-x,t)$. More generally, the non-chirality of the ncIHF equation refers to the invariance of \eqref{eq:ncIHF} under this same parity transformation \cite{berntsonklabbers2021}.

The periodic ncIHF equation generalizes known integrable systems (the HWM and real-line ncIHF equations) and originates from another (the elliptic spin CM system) but a Lax pair for it has not yet been established. While we regard the construction of such a Lax pair as an interesting question for future work, the class of exact solutions presented in this paper provides evidence for the integrability of the periodic ncIHF equation. The construction of these solutions is more involved than that of analogous solutions for the ncIHF equation on the real line \cite{berntsonklabbers2021} due to the presence of elliptic functions in both \eqref{eq:TTe} and the spin-pole ansatz \eqref{eq:ansatz} given below. More specifically, a dynamical background vector is necessitated in the spin-pole ansatz and the resulting spin-pole dynamics must satisfy extra constraints versus the real-line case. To overcome these complications and link the spin-pole dynamics to an elliptic spin CM system, we prove a new B\"{a}cklund transformation for the latter. This B\"{a}cklund transformation is a key result of this paper; we believe it is also of independent interest for its striking difference from known B\"{a}cklund transformations in degenerate cases \cite{gibbons1983,berntsonlangmannlenells2021,berntsonklabbers2020,berntsonklabbers2021}; namely, a new degree of freedom, corresponding to the background vector in the spin-pole ansatz, is required to mediate the transformation between two solutions of spin CM systems.

In the remainder of this section, we focus on stating and describing our two main results, a B\"{a}cklund transformation for the elliptic spin CM system and a class of exact solutions of the periodic ncIHF equation, and describe the organization of the paper. Before proceeding, we introduce notation used in this section and throughout the paper.

\subsection{Notation}
We use the shorthand notation $\sum_{k\neq j}^N$ for sums $\sum_{k=1,k\neq j}^N$, etc. The components of a three-vector $\bs\in\C^3$ are denoted by $(s^1,s^2,s^3)$ and the dot and cross products of two vectors $\bs, \bt\in \C^3$ are defined as
$\bs\cdot\bt=\sum_{a=1}^3 s^at^a$ and $\bs\wedge\bt=(s^2t^3-s^3t^2,s^3t^1-s^1t^3,s^1t^2-s^2t^1)$, respectively. The set of real vectors $\bs \in \R^3$ satisfying $\bs\cdot\bs=1$, i.e., the two-sphere, is denoted by $S^2$. We write the zero vector as $\mathbf{0}=(0,0,0)$.

Dots above a variable indicate differentiation with respect to time while primes indicate differentiation with respect to the argument of a function. Complex conjugation and matrix transposition are denoted by $*$ and $\trans$, respectively.

\subsection{B\"{a}cklund transformation for an elliptic spin CM system}

(Complexified) spin CM systems describe the time evolution of a system of $N\in\Z_{\geq 1}$ particles with internal degrees of freedom moving in the complex plane. We consider the case where the internal degrees of freedom can be represented by complex three vectors, which is a special case of more general systems introduced by Gibbons and Hermsen \cite{gibbons1984} and Wojciechowski \cite{wojciechowski1985}; see \cite{berntsonklabbers2020} for the precise relation. Each particle is represented by a position $a_j=a_j(t)\in\C$ and a spin vector $\bs_j=\bs_j(t)\in \C^3$. We define the elliptic spin CM system to be the following system of equations,
\begin{subequations}\label{eq:sCM1}
\begin{align}
\ddot{a}_j=&\; -2\sum_{k\neq j}^N \bs_j\cdot\bs_k \wp_2'(a_j-a_k) \quad (j=1,\ldots,N), \label{eq:sCMa} \\
\dot{\bs}_j=&\;  -2\sum_{k\neq j}^N \bs_j\wedge\bs_k \wp_2(a_j-a_k) \quad (j=1,\ldots,N), \label{eq:sCMs}
\end{align}
\end{subequations}
where $\wp_2(z)$ is, up to an additive constant, the Weierstrass $\wp$-function with half-periods $\ell$ and $\ii\delta$,
\begin{equation}\label{eq:wp2}
\wp_2(z;\ell,\ii\delta)\coloneqq \wp(z;\ell,\ii\delta)+\frac{\zeta(\ii\delta;\ell,\ii\delta)}{\ii\delta} \quad (z\in\C).
\end{equation}

\begin{remark}\label{rem:equivalent}
Our definition of the elliptic spin CM system \eqref{eq:sCM1} differs from others in the literature, e.g. \cite{krichever1995spin,berntsonlangmannlenells2021}. More specifically, we use the potential $\wp_2(z)$ in place of either $\wp(z)$ or $\wp_1(z)\coloneqq \wp(z)+\zeta(\ell)/\ell$, which differ from $\wp_2(z)$ by additive constants. However, by multiplying each $\bs_j$ in \eqref{eq:sCM1} by an appropriate time-dependent complex rotation $\mR=\mR(t)\in \mathrm{SO}(3;\C)$, the potential $\wp_2(z)$ can be shifted to $\wp_2(z)+c$ for any constant $c\in\C$. A proof of this claim can be found in Appendix~\ref{app:rotation}.
\end{remark}

Elliptic spin CM systems are known to be exactly-solvable \cite{krichever1995spin}, which gives, in principle, exact analytic solutions of the periodic ncIHF equation via our main result, Theorem~\ref{thm:main}, presented below. From our current perspective, the most important property of \eqref{eq:sCM1} is the existence of a Bäcklund transformation relating certain distinct solutions of the elliptic spin CM system; later we will employ this B\"{a}cklund transformation to link the periodic ncIHF equation to the elliptic spin CM system. A Bäcklund transformation valid for the rational, trigonometric, and hyperbolic Gibbons-Hermsen spin CM systems \cite{gibbons1984} was presented in \cite{gibbons1983}; see \cite{berntsonlangmannlenells2021} for a detailed proof. We now describe a Bäcklund transformation for the elliptic spin CM system \eqref{eq:sCM1} subjected to certain constraints which arise in our analysis of the periodic ncIHF equation \eqref{eq:ncIHF}. 

Consider a second elliptic spin CM system for $M\in\Z_{\geq 1}$ particles described by positions $b_j=b_j(t)$ and spin vectors $\bt_j=\bt_j(t)\in\C^3$; the equations of motion read
\begin{subequations}\label{eq:sCM2}
\begin{align}
\ddot{b}_j=&\; -2\sum_{k\neq j}^M \bt_j\cdot\bt_k \wp_2'(b_j-b_k) \quad (j=1,\ldots,M), \label{eq:sCMb} \\
\dot{\bt}_j=&\;  -2\sum_{k\neq j}^M \bt_j\wedge\bt_k \wp_2(b_j-b_k) \quad (j=1,\ldots,M). \label{eq:sCMt}
\end{align}
\end{subequations}
Under appropriate circumstances, solutions of \eqref{eq:sCM1} and \eqref{eq:sCM2} may be related via a system of first-order differential equations involving also a vector $\bphi=\bphi(t)\in\C^3$,
\begin{equation}
\begin{split}\label{eq:ajdot} 
\bs_j\dot{a}_j=&\; -\bs_j\wedge \Bigg(\ii\bphi-\sum_{k\neq j}^N\bs_k\zeta_2(a_j-a_k)+\sum_{k=1}^M \bt_k \zeta_2(a_j-b_k+\ii\delta)\Bigg)  \quad (j=1,\ldots,N),     \\
\bt_j\dot{b}_j=&\; +\bt_j\wedge \Bigg(\ii\bphi+\sum_{k\neq j}^M \bt_k\zeta_2(b_j-b_k)-\sum_{k=1}^N \bs_k \zeta_2(b_j-a_k+\ii\delta)\Bigg) \quad (j=1,\ldots,M)
\end{split}
\end{equation}
and 
\begin{equation}\label{eq:phidot}
\dbphi=\frac{\ii}{2}\sum_{j=1}^N\sum_{k\neq j}^N \bs_j\wedge\bs_k f_2'(a_j-a_k)-\frac{\ii}{2}\sum_{j=1}^M\sum_{k\neq j}^M \bt_j\wedge\bt_k f_2'(b_j-b_k), 
\end{equation}
where
\begin{equation}\label{eq:zeta2}
\zeta_2(z;\ell,\ii\delta)\coloneqq \zeta(z;\ell,\ii\delta)-\frac{\zeta(\ii\delta;\ell,\ii\delta)}{\ii\delta}z \quad (z\in \C)
\end{equation}
(note that $\wp_2(z)=-\zeta_2'(z)$) and
\begin{equation}\label{eq:f2}
f_2(z;\ell,\ii\delta)\coloneqq \zeta_2(z;\ell,\ii\delta)^2-\wp_2(z;\ell,\ii\delta) \quad (z\in\C).
\end{equation}
The precise statement is as follows.

\begin{theorem}\label{thm:backlund}
Let $N,M\in \Z_{\geq 1}$ and $T>0$. Suppose that $\bphi$, $\{a_j,\bs_j\}_{j=1}^N$, and $\{b_j,\bt_j\}_{j=1}^M$ is a solution of the first-order system \eqref{eq:sCMs}, \eqref{eq:sCMt}, \eqref{eq:ajdot}--\eqref{eq:phidot} on the interval $[0,T)$ and that the following constraints hold at $t=0$, 
\begin{align}
&\bs_j^2=0 \quad (j=1,\ldots,N),\qquad  \bt_j^2=0 \quad     (j=1,\ldots,M),   \label{eq:constraint1}\\
\begin{split}
&\bs_j\cdot \Bigg(\ii \bphi- \sum_{k\neq j}^N \bs_k\zeta_2(a_j-a_k) + \sum_{k=1}^M \bt_k \zeta_2(a_j-b_j+\ii\delta) \Bigg)=0 \quad (j=1,\ldots,N), \label{eq:constraint2} \\
&\bt_j\cdot \Bigg(\ii \bphi+ \sum_{k\neq j}^M \bt_k\zeta_2(b_j-b_k) - \sum_{k=1}^N \bs_k \zeta_2(b_j-a_k+\ii\delta) \Bigg)=0 \quad (j=1,\ldots,M),
\end{split} \\
&\sum_{j=1}^N \bs_j-\sum_{j=1}^M \bt_j=\boldsymbol{0}. \label{eq:constraint3}
\end{align}
Then, the second-order equations \eqref{eq:sCMa} and \eqref{eq:sCMb} hold on $[0,T)$.
\end{theorem}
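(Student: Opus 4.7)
The plan is to treat the proof in two phases. First, I would verify that the constraints \eqref{eq:constraint1}--\eqref{eq:constraint3}, which are assumed only at $t=0$, in fact propagate under the first-order flow \eqref{eq:sCMs}, \eqref{eq:sCMt}, \eqref{eq:ajdot}--\eqref{eq:phidot} and therefore hold on all of $[0,T)$. Second, I would differentiate the defining relation \eqref{eq:ajdot} in time and use these constraints, together with the spin equation \eqref{eq:sCMs} and the $\bphi$-evolution \eqref{eq:phidot}, to extract the second-order equation \eqref{eq:sCMa}. The derivation of \eqref{eq:sCMb} is symmetric, with the roles of the two particle families exchanged.

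Preservation of $\bs_j^2=0$ and $\bt_j^2=0$ is automatic from \eqref{eq:sCMs}--\eqref{eq:sCMt}, since $\bs_j\cdot(\bs_j\wedge\bs_k)=0$; and the sum constraint \eqref{eq:constraint3} propagates because each of $\sum_j\dot{\bs}_j$ and $\sum_j\dot{\bt}_j$ vanishes by antisymmetry of the summands in \eqref{eq:sCMs}, \eqref{eq:sCMt}. Writing $G_j^{(s)}$ for the bracketed expression in the $a$-equation of \eqref{eq:ajdot}, the scalar orthogonality \eqref{eq:constraint2} reduces to showing $\tfrac{d}{dt}(\bs_j\cdot G_j^{(s)})=0$. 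I would compute this using $\zeta_2'=-\wp_2$ together with the identity $(\bs_j\wedge\bs_k)\cdot G_j^{(s)} = (\bs_j\cdot\bs_k)\,\dot{a}_j$, which itself follows by dotting \eqref{eq:ajdot} with $\bs_k$ and invoking \eqref{eq:constraint1}. The $\dbphi$-contribution should then exactly compensate the residual $\zeta_2\zeta_2$ cross-terms generated between the two particle families; this is the purpose of \eqref{eq:phidot}, via the relation $f_2'=-2\zeta_2\wp_2-\wp_2'$ inferred from \eqref{eq:f2}.

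For the second phase, differentiating $\bs_j\dot{a}_j = -\bs_j\wedge G_j^{(s)}$ yields
\begin{equation*}
\bs_j\ddot{a}_j \;=\; -\dot{\bs}_j\dot{a}_j - \dot{\bs}_j\wedge G_j^{(s)} - \bs_j\wedge\dot{G}_j^{(s)}.
\end{equation*}
Substituting $\dot{\bs}_j$ from \eqref{eq:sCMs}, expanding $(\bs_j\wedge\bs_k)\wedge G_j^{(s)}$ via BAC--CAB and discarding the $\bs_k(\bs_j\cdot G_j^{(s)})$ piece by \eqref{eq:constraint2}, and re-using \eqref{eq:ajdot} in the form $(\bs_j\wedge\bs_k)\,\dot{a}_j = \bs_j(\bs_k\cdot G_j^{(s)}) - G_j^{(s)}(\bs_j\cdot\bs_k)$, the first two terms on the right collapse to $-2\,G_j^{(s)}\sum_{k\neq j}\wp_2(a_j-a_k)(\bs_j\cdot\bs_k)$. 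It then remains to expand $\bs_j\wedge\dot{G}_j^{(s)}$ using \eqref{eq:sCMs}--\eqref{eq:phidot}: the $\dbphi$-term is designed to cancel the $G_j^{(s)}$-proportional remainder, while differentiating $\zeta_2(a_j-a_k)$ and $\zeta_2(a_j-b_k+\ii\delta)$ against $\dot{a}_j-\dot{a}_k$ and $\dot{a}_j-\dot{b}_k$, respectively, produces (once the first-order equation and \eqref{eq:constraint1}--\eqref{eq:constraint2} are reapplied) precisely $-2\bs_j\sum_{k\neq j}(\bs_j\cdot\bs_k)\wp_2'(a_j-a_k)$, giving \eqref{eq:sCMa} after dividing out the nonzero direction $\bs_j$.

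The principal obstacle at both stages is the elliptic bookkeeping: organizing the double sums over the $a$-family and (shifted) $b$-family so that they telescope correctly against $\dbphi$. The appearance of $f_2'$ rather than $\wp_2'$ on the right of \eqref{eq:phidot} is the key signal that the cancellation proceeds via the identity $f_2'=-2\zeta_2\wp_2-\wp_2'$ and not by mere antisymmetry; the interchange of summation indices in the $\zeta_2\zeta_2$ double sums, combined with the quasi-periodic behavior of $\zeta_2$ under the shift by $\ii\delta$ that links the $a$- and $b$-particles, is what I expect to require the most delicate handling.
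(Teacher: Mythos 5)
Your overall strategy coincides with the paper's: first show that \eqref{eq:constraint1}--\eqref{eq:constraint3} propagate under the first-order flow, then differentiate $\bs_j\dot{a}_j=-\bs_j\wedge G_j^{(s)}$ in time and reduce using the spin equations, BAC--CAB, and the constraints. Your phase-one claims for \eqref{eq:constraint1} and \eqref{eq:constraint3} are correct, and your computation that the first two terms of $\bs_j\ddot{a}_j$ collapse to $-2G_j^{(s)}\sum_{k\neq j}(\bs_j\cdot\bs_k)\wp_2(a_j-a_k)$ matches the paper's intermediate result exactly.

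There is, however, a genuine gap in your cancellation mechanism, and it affects both the conservation of \eqref{eq:constraint2} and the second phase. You assert that ``the $\dbphi$-term is designed to cancel the $G_j^{(s)}$-proportional remainder.'' It cannot: the right-hand side of \eqref{eq:phidot} is built solely from $\bs_j\wedge\bs_k\,f_2'$ and $\bt_j\wedge\bt_k\,f_2'$ terms and contains no $\bphi$, whereas $G_j^{(s)}$ contains $\ii\bphi$, so no term proportional to $G_j^{(s)}$ can be absorbed by $\dbphi$. What actually happens is that the $-2G_j^{(s)}\sum_{k}(\bs_j\cdot\bs_k)\wp_2$ contribution pairs with the terms generated by $\dot{a}_k$ and $\dot{b}_k$ inside $-\bs_j\wedge\dot{G}_j^{(s)}$ to form differences $(\bs_j\cdot\bs_k)\big(G_j^{(s)}-G_k^{(s)}\big)\wp_2$, and one must then establish a closed-form expression for $\big(G_j^{(s)}-G_k^{(s)}\big)\wp_2$ in terms of the spins and elliptic functions alone. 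That reduction is where the real work lies: it needs not only the two-point identity $f_2'=-2\zeta_2\wp_2-\wp_2'$ that you cite, but also the three-point addition-type identity obtained by differentiating \eqref{eq:Idmain}, namely $\big(\zeta_2(u-w)-\zeta_2(v-w)\big)\wp_2(u-v)=-\big(\zeta_2(u-v)-\zeta_2(u-w)\big)\wp_2(v-w)-\tfrac12\big(f_2'(u-v)-f_2'(v-w)\big)$, together with the total-spin constraint \eqref{eq:constraint3} to eliminate a stray term proportional to $\sum_l r_l\bs_l\,f_2'$. Only after this reduction do residual triple sums in $f_2'$ remain, and it is these --- not the $G_j^{(s)}$-proportional terms --- that \eqref{eq:phidot} is engineered to cancel. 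Without the three-point identity, the telescoping by index interchange that you anticipate will not close. (A minor point: your identity $(\bs_j\wedge\bs_k)\cdot G_j^{(s)}=(\bs_j\cdot\bs_k)\,\dot{a}_j$ follows from cyclicity of the triple product alone; \eqref{eq:constraint1} is not needed there.)
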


A proof of Theorem~\ref{thm:backlund} is given in Section~\ref{sec:Backlund}. Each of the constraints \eqref{eq:constraint1}-\eqref{eq:constraint3} corresponds to a conserved quantity; if the constraints are satisfied at $t=0$, they also hold at future times when the first-order equations \eqref{eq:phidot}, \eqref{eq:ajdot}, \eqref{eq:sCMs}, and \eqref{eq:sCMt} are satisfied; this fact is proven for \eqref{eq:constraint1} and \eqref{eq:constraint2} in Proposition~\ref{prop:conserved} and for \eqref{eq:constraint3} in Lemma~\ref{lem:totalspin}.

\begin{remark}
The terms $\ii\delta$ appearing in arguments of functions in \eqref{eq:ajdot} and \eqref{eq:constraint2} can be removed by the transformation
\begin{equation}\label{eq:transformation}
a_j\to a_j-\ii\delta/2 \quad (j=1,\ldots,N),\qquad b_j\to b_j+\ii\delta/2 \quad (j=1,\ldots,M),
\end{equation}
using the fact that $\zeta_2(z)$ is $2\ii\delta$-periodic \eqref{eq:imperiod}. The transformation \eqref{eq:transformation} leaves \eqref{eq:sCM1}, \eqref{eq:sCM2}, \eqref{eq:phidot}, \eqref{eq:constraint1}, and \eqref{eq:constraint3} unchanged. We will use Theorem~\ref{thm:backlund} in the proof of Theorem~\ref{thm:main} below; for this application, it is convenient to have the terms $\ii\delta$ in place. 
\end{remark}

\subsection{A class of elliptic solutions of the periodic ncIHF equation}

We construct solutions of the periodic ncIHF equation with dynamics governed by a pair of elliptic spin CM systems, which are related to each other through the B\"{a}cklund transformation of Theorem~\ref{thm:backlund}. More specifically, we make the ansatz for solutions of the periodic ncIHF equation,
\begin{align}\label{eq:ansatz}
\left(\begin{array}{c} \bu(x,t) \\ \bv(x,t)   \end{array}\right)=  \bphi(t)\left(\begin{array}{c} 1 \\ 1    \end{array}\right) &\; +\ii \sum_{j=1}^N \bs_j(t) \left(\begin{array}{c} \zeta_2(x-a_j(t)+\ii\delta/2) \\ \zeta_2(x-a_j(t)-\ii\delta/2)    \end{array}\right) \nonumber\\
&\; -\ii \sum_{j=1}^M \bt_j(t) \left(\begin{array}{c} \zeta_2(x-b_j(t)-\ii\delta/2) \\ \zeta_2(x-b_j(t)+\ii\delta/2)    \end{array}\right),
\end{align}
where $\bphi(t), \bs_j(t),\bt_j(t)\in \C^3$ and $a_j(t),b_j(t)\in \C$ and show that these parameters must satisfy the assumptions of Theorem~\ref{thm:backlund}. In this case, the ansatz \eqref{eq:ansatz} will satisfy $\bu(x,t)^2=\bv(x,t)^2=\rho^2$, for some constant $\rho\in \C$, provided certain constraints on the initial values of the parameters are fulfilled. Theorem~\ref{thm:backlund} and standard results concerning the existence and uniqueness of solutions to systems of ODEs allow us to formulate our result as a relation between (i) certain solutions of the elliptic spin CM systems \eqref{eq:sCM1}, \eqref{eq:sCM2} and background dynamics \eqref{eq:phidot} and (ii) a class of solutions of the periodic ncIHF equation satisfying $\bu(x,t)^2=\bv(x,t)^2=\rho^2$. The precise statement is now given.

\begin{theorem}\label{thm:main}
For $N,M\in \Z_{\geq 1}$ and $T>0$, let $\bphi$, $\{a_j,\bs_j\}_{j=1}^N$, and $\{b_j,\bt_j\}_{j=1}^M$ be a solution of the system of equations \eqref{eq:sCM1}, \eqref{eq:sCM2}, and \eqref{eq:phidot} on the interval $[0,T)$ with initial conditions that satisfy \eqref{eq:ajdot}, \eqref{eq:constraint1}-\eqref{eq:constraint3}, and
\begin{align}\label{eq:constraint4}
\bphi^2=&\; \rho^2+\frac12 \sum_{j=1}^N\sum_{k\neq j}^N \bs_{j}\cdot \bs_{k} f_2(a_{j}-a_{k})+\frac12\sum_{j=1}^M\sum_{k\neq j}^M \bt_{j}\cdot \bt_{k} f_2(b_{j}-b_{k}) \nonumber\\
&\;  -\sum_{j=1}^N\sum_{k=1}^M \bs_{j}\cdot \bt_{k} f_2(a_{j}-b_{k}+\ii\delta)
\end{align}
for some constant $\rho\in \C$ at $t=0$. Moreover, suppose that the conditions
\begin{equation}\label{eq:imaj}
\frac{\delta}{2}<\im\,a_j(t)<\frac{3\delta}{2} \quad (j=1,\ldots,N),\qquad -\frac{3\delta}{2}<\im\,b_j(t)< -\frac{\delta}{2} \quad (j=1,\ldots,M),
\end{equation}
\begin{equation}\label{eq:ajakbjbk}
a_j(t)\neq a_k(t) \quad (1\leq j<k\leq N),\qquad b_j(t)\neq b_k(t) \quad (1\leq j<k\leq M),
\end{equation}
and
\begin{equation}\label{eq:sjtj}
\bs_j\neq \boldsymbol{0} \quad (j=1,\ldots,N),\qquad \bt_j\neq \boldsymbol{0} \quad (j=1,\ldots,M)
\end{equation}
hold for $t\in [0,T)$. Then, for all $t\in [0,T)$ such that the functions $\bu(x,t)$ and $\bv(x,t)$ in \eqref{eq:ansatz} are differentiable with respect to $x$ and $t$ for all $x\in [-\ell,\ell)$, \eqref{eq:ansatz} provides an exact solution of the periodic ncIHF equation \eqref{eq:ncIHF} satisfying $\bu(x,t)^2=\bv(x,t)^2=\rho^2$.
\end{theorem}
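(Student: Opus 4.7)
The plan is to substitute the ansatz \eqref{eq:ansatz} directly into both the pointwise norm condition $\bu^2=\bv^2=\rho^2$ and the periodic ncIHF equation \eqref{eq:ncIHF}, and show that the first-order dynamics plus the constraints are exactly what is needed for both to hold. I would split the work into (I) the $2\ell$-periodicity and conserved-norm condition, and (II) the PDE itself.

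For (I), I would first observe that \eqref{eq:constraint3} is precisely what guarantees $2\ell$-periodicity of the ansatz: under $x\to x+2\ell$, each $\zeta_2$-summand picks up the quasi-period constant $2\zeta(\ell)-2\ell\zeta(\ii\delta)/(\ii\delta)$, so the shift of the full ansatz is proportional to $\sum_j\bs_j-\sum_j\bt_j$, which vanishes by hypothesis. Next I would expand $\bu(x,t)^2$. The diagonal quadratic terms $\bs_j^2\zeta_2(x-a_j+\ii\delta/2)^2$ and the analogous $\bt_j$-terms vanish by \eqref{eq:constraint1}; this cancellation is essential because otherwise $\bu^2$ would have double poles in the complex $x$-plane. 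For each off-diagonal product such as $\bs_j\cdot\bs_k\zeta_2(x-a_j+\ii\delta/2)\zeta_2(x-a_k+\ii\delta/2)$, as well as the mixed $\bs$--$\bt$ products, I would apply the standard Weierstrass three-term identity, translated to $\zeta_2,\wp_2$, which rewrites each product as a combination of single $\zeta_2(x-\cdot)$ and $\wp_2(x-\cdot)$ terms plus $x$-independent contributions involving $f_2=\zeta_2^2-\wp_2$. The $\wp_2(x-\cdot)$ pieces collect into the (already vanishing) diagonal linear combinations, and the coefficients of the surviving $\zeta_2(x-a_j\pm\ii\delta/2)$ and $\zeta_2(x-b_j\pm\ii\delta/2)$ terms are exactly the left-hand sides of \eqref{eq:constraint2}, so they vanish as well. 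The remaining $x$-independent piece, together with $\bphi^2$, matches $\rho^2$ by \eqref{eq:constraint4}; the $\bv^2$ calculation is identical with the $\ii\delta$-shifts reversed.

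For (II), I would compute $\bu_t$ by time-differentiating \eqref{eq:ansatz}, substituting $\dbphi$ from \eqref{eq:phidot}, $\dot{\bs}_j,\dot{\bt}_j$ from \eqref{eq:sCMs}--\eqref{eq:sCMt}, and the products $\bs_j\dot{a}_j, \bt_j\dot{b}_j$ from \eqref{eq:ajdot}. Only these products appear in $\bu_t$ (multiplying $\wp_2$-terms), so $\dot a_j$ is not needed individually. Separately, I would evaluate $(T\bu_x)(x)$ and $(\tilde{T}\bv_x)(x)$ by contour integration: $\bu_x$ is a sum of $\wp_2$-terms, and using the $2\ell$-periodicity of $\zeta_1(\cdot;\ell,\ii\delta)$ I can deform the real-axis integration contour $[-\ell,\ell]$ vertically by $2\ii\delta$, picking up simple-pole residues at $x'=a_j-\ii\delta/2$ and the analogous $\bt$-sites, which by \eqref{eq:imaj} lie in the correct strips. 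The residues produce precisely the $\zeta_2(a_j-a_k)$ and $\zeta_2(a_j-b_k+\ii\delta)$ sums appearing in \eqref{eq:ajdot}, plus constant corrections coming from the quasi-periodicity of $\zeta$. I would then assemble $\bu\wedge T\bu_x-\bu\wedge\tilde{T}\bv_x$ and use the same Weierstrass product identities from (I) to expand cross products of two $\zeta_2$-terms. Matching the result pole-by-pole against $\bu_t$---the $\bphi$-piece reduces to \eqref{eq:phidot}, the residue at $x=a_j-\ii\delta/2$ to the first line of \eqref{eq:ajdot} combined with \eqref{eq:sCMa}, and the residue at $x=b_j+\ii\delta/2$ to the second line of \eqref{eq:ajdot} combined with \eqref{eq:sCMb}---closes the verification for $\bu$. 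The second-order CM equations enter because matching the $\wp_2$-coefficients in $\bu_t$ forces a $t$-derivative of \eqref{eq:ajdot}, producing $\ddot a_j$ and $\ddot b_j$; Theorem~\ref{thm:backlund} is exactly what guarantees these agree with \eqref{eq:sCMa}, \eqref{eq:sCMb}. The $\bv$-equation follows by the parity transformation $(\bu,\bv)\to(-P\bv,-P\bu)$, which leaves \eqref{eq:ncIHF} invariant.

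The main obstacle I anticipate is the combined residue and algebraic reduction in (II): managing the principal-value prescription for $T$, bookkeeping the constant shifts produced by the quasi-periodicity $\zeta(z+2\ii\delta)=\zeta(z)+2\zeta(\ii\delta)$ when deforming contours, and reducing the resulting lengthy cross-product expression back into the canonical form of the ansatz so that coefficient-matching is possible. Cross products of two $\zeta_2$-terms produce triple-wedge structures $\bu\wedge(\bs_j\wedge\bs_k)$ which must be rearranged via the vector triple-product identity before the constraints \eqref{eq:constraint2} and the isotropy $\bs_j^2=0$ from \eqref{eq:constraint1} can be applied to simplify. Once the reduction is complete and the matching with $\bu_t$ succeeds, the preservation of all constraints on $[0,T)$---noted immediately after Theorem~\ref{thm:backlund}---ensures the verification remains valid throughout the interval, and the conclusion follows.
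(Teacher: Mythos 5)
There is a genuine gap at the heart of part (II): you treat \eqref{eq:ajdot} as available for all $t\in[0,T)$ when you substitute $\bs_j\dot a_j$ and $\bt_j\dot b_j$ into $\bu_t$, but the theorem only assumes \eqref{eq:ajdot} as a condition on the \emph{initial data}. The given solution evolves under the second-order system \eqref{eq:sCM1}, \eqref{eq:sCM2}, \eqref{eq:phidot}, and nothing in your argument shows that the first-order relations \eqref{eq:ajdot} persist for $t>0$. Theorem~\ref{thm:backlund} cannot be invoked the way you use it, because it runs in the opposite direction: it says that a solution of the \emph{first-order} system (plus constraints) automatically satisfies the second-order equations, not that a second-order solution with first-order initial data keeps satisfying \eqref{eq:ajdot}. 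The paper closes this gap with an ODE uniqueness argument: it sets up the first-order flow \eqref{eq:sCMs}, \eqref{eq:sCMt}, \eqref{eq:phidot}, \eqref{eq:ajdotIVP} as a separate initial value problem, uses Theorem~\ref{thm:backlund} to show its maximal solution also solves the second-order problem, invokes local Lipschitz continuity and uniqueness for the second-order problem to identify the two solutions, and then uses the standing hypotheses \eqref{eq:imaj}, \eqref{eq:ajakbjbk}, \eqref{eq:sjtj} in a contradiction argument to show the first-order solution cannot break down before time $T$. Without some version of this (or a direct Gronwall-type argument showing that the defect $\bs_j\dot a_j+\bs_j\wedge\bb_j$ stays zero), the verification of the PDE is only justified at $t=0$.

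Two smaller points. First, your claim that matching the $\wp_2$-coefficients in $\bu_t$ ``forces a $t$-derivative of \eqref{eq:ajdot}, producing $\ddot a_j$'' is not accurate: $\bu_t$ contains only first time-derivatives of the parameters, and the coefficient matching (against $A_{r_j}'$, $A_{r_j}$, $F_{r_j}'$, and the constant two-vector) yields exactly the first-order equations \eqref{eq:phidot}, \eqref{eq:sCMs}--\eqref{eq:sCMt}, \eqref{eq:ajdot} together with \eqref{eq:constraint3}; no second derivatives appear in the PDE verification. The second-order equations enter only through the Bäcklund/uniqueness step described above. Second, your appeal to the preservation of the constraints is itself conditional on the first-order flow holding throughout $[0,T)$ (that is the hypothesis of Proposition~\ref{prop:conserved}), so it inherits the same gap. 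The remainder of your outline --- the expansion of $\bu^2$ via the addition identities to extract \eqref{eq:constraint1SH}--\eqref{eq:constraint4SH}, and the evaluation of $\cT\bcu_x$ by contour deformation (equivalent to the eigenfunction property \eqref{eq:cTA} the paper cites) --- matches the paper's Propositions~\ref{prop:constraints} and \ref{prop:firstorder} and is sound.
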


\begin{remark}
It is not obvious that the ansatz \eqref{eq:ansatz} is $2\ell$-periodic. The function $\zeta_2(z)$ is $2\ell$-quasi-periodic \eqref{eq:realperiod} and hence $\bu(x+2\ell)-\bu(x)$ and $\bv(x+2\ell)-\bv(x)$ are proportional to $\sum_{j=1}^N \bs_j-\sum_{j=1}^M \bt_j$, i.e., the left hand side of \eqref{eq:constraint3}. We later show that \eqref{eq:constraint3} corresponds to a conserved quantity of the elliptic spin CM system: if it is satisfied at $t=0$, as required in Theorem~\ref{thm:main}, then it holds for $t\in [0,T)$, see Lemma~\ref{lem:totalspin}. The constraint \eqref{eq:constraint3} is also required for $\bu(x,t)^2=\bv(x,t)^2=\rho^2$, see Proposition~\ref{prop:constraints}.
\end{remark}

\begin{remark}\label{rem:complex}
We emphasize that the solutions in Theorem~\ref{thm:main} are generically complex-valued, i.e., $\bu(x,t),\bv(x,t)\in \C^3$ and satisfy $\bu(x,t)^2=\bv(x,t)^2=\rho^2$ for some constant $\rho\in \C$. Real-valued solutions of unit length are described by the consistent reduction $M=N$, $\rho=1$, $\bphi^*=\bphi$, $b_j=a_j^*$, and $\bt_j=\bs_j^*$ of the theorem, which is given as Corollary~\ref{cor:main} in Section~\ref{sec:explicit}, where examples of such solutions are presented. We have chosen our approach because (i) the proofs in the generic, complex case are no more difficult than in the real case and (ii) at least one interesting class of solutions, considered in Section~\ref{subsec:tw}, is necessarily complex: in the case $N=M=1$ of Theorem~\ref{thm:main}, which contains one-soliton, traveling wave solutions in the analogous real-line case \cite[Section~6.1]{berntsonklabbers2021}, there is no solution of the constraints \eqref{eq:constraint1} and \eqref{eq:constraint3} satisfying $\bs_1^*=\bt_1$; all solutions obtained under these conditions from Theorem~\ref{thm:main} are complex.
\end{remark}

\subsection{Plan of the paper}

We prove Theorem~\ref{thm:main} by establishing a sequence of intermediate results including Theorem~\ref{thm:backlund}. In Section~\ref{sec:solitons}, we derive constraints on the parameters in \eqref{eq:ansatz} and we show that the parameters satisfy the first-order system of ODEs of Theorem~\ref{thm:backlund}. We show that this system of ODEs preserves the constraints \eqref{eq:constraint1}--\eqref{eq:constraint3} and \eqref{eq:constraint4} in Section~\ref{sec:conserved}.  In Section~\ref{sec:Backlund}, we prove the Bäcklund transformation, Theorem~\ref{thm:backlund}, in the course of proving Theorem~\ref{thm:main}. Examples of solutions of the ncIHF equation from Theorem~\ref{thm:main} are constructed in Section~\ref{sec:explicit}. Appendix~\ref{app:elliptic} contains identities for the special functions used in the paper. Appendix~\ref{app:rotation} contains a formal statement and proof of the claim in Remark~\ref{rem:equivalent}. 

\section{Constraints and first-order dynamics}\label{sec:solitons}

We derive conditions under which the ansatz \eqref{eq:ansatz} satisfies (i) $\bu(x,t)^2=\bv(x,t)^2=\rho^2$ and (ii) solves \eqref{eq:ncIHF}. The first requirement yields a number of nonlinear constraints on the parameters appearing in \eqref{eq:ansatz}, which are obtained in Section~\ref{subsec:constraints}. In Section~\ref{subsec:firstorder}, we show that when the ansatz \eqref{eq:ansatz} is subjected to one of these constraints and inserted into \eqref{eq:ncIHF}, the latter is reduced to a system of first-order ODEs.

To prove results in this section, we employ certain notation developed in \cite{berntsonklabbers2021}. Given $\C$-valued functions $F_j,G_j$, $j=1,2$, we form two-vectors and define the following product,
\begin{equation}
\left(\begin{array}{c} F_1 \\ F_2  \end{array}\right)\circ \left(\begin{array}{c} G_1 \\ G_2   \end{array}\right)\coloneqq  \left(\begin{array}{c} F_1G_1 \\ F_2G_2\end{array}\right).
\end{equation}
Similarly, we can combine pairs of three-vectors $\ba_j,\bb_j\in \C^3$, $j=1,2$, and define analogs of the dot and wedge products,
\begin{equation}\label{eq:dotwedgecirc}
\left(\begin{array}{c} \ba_1 \\ \ba_2 \end{array}\right)\dotcirc\left(\begin{array}{c} \bb_1 \\ \bb_2 \end{array}\right)=\left(\begin{array}{c} \ba_1\cdot\bb_1 \\ \ba_2\cdot\bb_2 \end{array}\right),\qquad \left(\begin{array}{c} \ba_1 \\ \ba_2 \end{array}\right)\wedgecirc\left(\begin{array}{c} \bb_1 \\ \bb_2 \end{array}\right)=\left(\begin{array}{c} \ba_1\wedge\bb_1 \\ \ba_2\wedge\bb_2 \end{array}\right). 
\end{equation}
By defining
\begin{equation}
\bcu(x,t)=\left(\begin{array}{c} \bu(x,t) \\ \bv(x,t) \end{array}\right)
\end{equation}
and using \eqref{eq:dotwedgecirc} we may write the periodic ncIHF equation \eqref{eq:ncIHF} as
\begin{equation}\label{eq:ncIHF2}
\bcu_t=\bcu\wedgecirc \cT\bcu_x,
\end{equation}
where
\begin{equation}\label{eq:cT}
\cT=\left(\begin{array}{cc} T & -\tilde{T} \\ \tilde{T} & -T     \end{array}\right), \qquad \cT:  \left(\begin{array}{c} \bu_x \\ \bv_x\end{array}\right)\mapsto  \left(\begin{array}{c} T\bu_x-\tilde{T}\bv_x \\ \tilde{T}\bu_ x-T\bv_x \end{array}\right)
\end{equation}
with $T$ and $\tilde{T}$ as defined in \eqref{eq:TTe}. 

It is also useful to write the ansatz \eqref{eq:ansatz} using this two-vector notation. We define 
\begin{equation}\label{eq:EA}
E\coloneqq \left(\begin{array}{c} 1 \\ 1\end{array}\right),\qquad A_{\pm}(z)\coloneqq \left(\begin{array}{c} \zeta_2(z\pm\ii\delta/2) \\ \zeta_2(z\mp\ii\delta/2) \end{array}\right) \quad (z\in \C),
\end{equation}
so that \eqref{eq:ansatz} can be written as
\begin{equation}\label{eq:ansatzSH}
\bcu(x,t)=\bphi(t) E+\sum_{j=1}^{\cN} r_j \bs_j(t) A_{r_j}(x-a_j(t)),
\end{equation}
using also the shorthand notation
\begin{equation}\label{eq:shorthand}
(a_j,\bs_j,r_j)\coloneqq \begin{cases}
(a_j,\bs_j,+) & j=1,\ldots,N, \\
(b_j,\bt_j,-) & j=N+1,\ldots,\cN,
\end{cases} \qquad \cN=N+M.
\end{equation}

\subsection{Constraints} \label{subsec:constraints}

The following proposition establishes the conditions required for the functions in ansatz \eqref{eq:ansatz} to have constant length.

\begin{proposition}\label{prop:constraints}
The functions $\bu(x,t)$ and $\bv(x,t)$ in \eqref{eq:ansatz} satisfy $\bu(x,t)^2=\bv(x,t)^2=\rho^2$ if and only if the parameters $\bphi$, $\{a_j,\bs_j\}_{j=1}^N$, and $\{b_j,\bt_j\}_{j=1}^M$ satisfy the conditions \eqref{eq:constraint1}--\eqref{eq:constraint3} and \eqref{eq:constraint4}.
\end{proposition}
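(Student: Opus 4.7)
The plan is to analyze $\bu(x,t)^2$ and $\bv(x,t)^2$ as meromorphic functions of $x$ on the torus of periods $2\ell$ and $2\ii\delta$, so that each of \eqref{eq:constraint1}--\eqref{eq:constraint4} arises as the equivalent condition for a specific structural property: vanishing double poles, vanishing simple poles, periodicity, and the value of the resulting constant.

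First I would compute the Laurent expansion of $\bu^2$ at its potential poles, which sit at $x = a_{j_0} - \ii\delta/2$ and $x = b_{j_0} + \ii\delta/2$ modulo the lattice. Near $x = a_{j_0} - \ii\delta/2$, setting $y = x - a_{j_0} + \ii\delta/2$ and using $\zeta_2(y) = 1/y + O(y)$, the ansatz yields $\bu = \ii\bs_{j_0}/y + \mathbf{W}_{j_0} + O(y)$, where $\mathbf{W}_{j_0}$ collects the regular contributions at $y = 0$; employing the $2\ii\delta$-periodicity of $\zeta_2$ to rewrite the $\bt_k$ terms, one finds $\mathbf{W}_{j_0} = \bphi + \ii\sum_{k\neq j_0}\bs_k\zeta_2(a_{j_0}-a_k) - \ii\sum_k \bt_k\zeta_2(a_{j_0}-b_k+\ii\delta)$, which equals $-\ii$ times the bracketed vector in the first line of \eqref{eq:constraint2}. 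Hence $\bu^2 = -\bs_{j_0}^2/y^2 + 2\ii\bs_{j_0}\cdot\mathbf{W}_{j_0}/y + O(1)$, so the double-pole vanishes iff $\bs_{j_0}^2 = 0$ and the simple-pole vanishes iff the first line of \eqref{eq:constraint2} holds. Analogous expansions at $x = b_{j_0}+\ii\delta/2$ yield the second lines of \eqref{eq:constraint1}--\eqref{eq:constraint2}, and the analysis of $\bv^2$ produces the same set of conditions by $2\ii\delta$-periodicity of $\zeta_2$. Meanwhile $\zeta_2$ is only $2\ell$-quasi-periodic (by Legendre's relation, $\zeta_2(z+2\ell) = \zeta_2(z) + \pi/\delta$), so $\bu(x+2\ell,t) - \bu(x,t) = (\ii\pi/\delta)\bigl(\sum_{j=1}^N \bs_j - \sum_{j=1}^M \bt_j\bigr)$, and $2\ell$-periodicity of $\bu^2$ forces \eqref{eq:constraint3}.

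With \eqref{eq:constraint1}--\eqref{eq:constraint3} in force, $\bu^2$ is a pole-free elliptic function of $x$ and hence constant in $x$ by Liouville's theorem; likewise for $\bv^2$. To identify this constant I would invoke the classical identity $(\zeta(a)+\zeta(b)+\zeta(c))^2 = \wp(a)+\wp(b)+\wp(c)$ valid for $a+b+c = 0$; since $\zeta_2(z)-\zeta(z)$ is linear in $z$, the identity transfers directly to $\zeta_2$ and $\wp_2$ up to an absolute constant, yielding an expression for $2\zeta_2(z)\zeta_2(z')$ ($z\neq z'$ mod lattice) in terms of $f_2(z)$, $f_2(z')$, $f_2(z-z')$, and single $\zeta_2$ factors. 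Substituting this into each of the three species of $\zeta_2\zeta_2$ products appearing in $\bu^2$ ($\bs$-$\bs$, $\bt$-$\bt$, $\bs$-$\bt$) and using \eqref{eq:constraint1}--\eqref{eq:constraint3} together with the $\bphi$-linear cross terms to eliminate all remaining $x$-dependent contributions, the constant value collapses to the right-hand side of \eqref{eq:constraint4}; the $+\ii\delta$ shift inside $f_2(a_j - b_k + \ii\delta)$ arises because the cross $\bs_j\bt_k$ products come with shift difference $(+\ii\delta/2)-(-\ii\delta/2) = \ii\delta$. Matching with $\rho^2$ gives \eqref{eq:constraint4}, and since each step is an equivalence, the converse direction (each constraint is necessary) follows by reversing.

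The main obstacle will be the bookkeeping in this final step: each transformed $\zeta_2\zeta_2$ product produces single-$\zeta_2$ terms at two distinct shifted arguments together with an $f_2$ contribution, and one must verify carefully that the $x$-dependent pieces (such as $f_2(x - a_j + \ii\delta/2)$ weighted by $\bs_j \cdot \sum_k \bs_k$) cancel against the $\bphi$-linear terms using \eqref{eq:constraint3} and $\bs_j^2 = 0$, leaving only the prescribed constant.
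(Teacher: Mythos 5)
Your proposal is correct, and it takes a genuinely different route from the paper's. The paper works globally: it expands $\bcu\dotcirc\bcu$ using the product identities \eqref{eq:Aj2Id}--\eqref{eq:AjAkId} (which encode \eqref{eq:Idmain}), reorganizes the result as a linear combination of the linearly independent functions $E$, $A_{r_j}$, $A_{r_j}'$, $F_{r_j}$, and reads off \eqref{eq:constraint1}, \eqref{eq:constraint2}, \eqref{eq:constraint3}, \eqref{eq:constraint4} as the vanishing of the respective coefficients. You instead argue locally and complex-analytically: the double- and simple-pole Laurent coefficients of $\bu^2$ at $x=a_{j_0}-\ii\delta/2$ and $x=b_{j_0}+\ii\delta/2$ give \eqref{eq:constraint1} and \eqref{eq:constraint2} (your identification of $\mathbf{W}_{j_0}$ as $-\ii$ times the bracket in \eqref{eq:constraint2}, after using $2\ii\delta$-periodicity to shift $-\ii\delta$ to $+\ii\delta$, is exactly right), the $2\ell$-quasi-periodicity defect gives \eqref{eq:constraint3}, and Liouville's theorem then reduces \eqref{eq:constraint4} to evaluating a constant. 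What your approach buys is a conceptual separation of the four constraints into four structural properties, and it avoids the global bookkeeping of \eqref{eq:UdotU2}--\eqref{eq:UdotU3} for everything except the final constant, where you in effect redo the paper's computation since your identity $(\zeta(a)+\zeta(b)+\zeta(c))^2=\wp(a)+\wp(b)+\wp(c)$ is precisely \eqref{eq:Idmain}. Two small caveats. First, your derivation needs the poles $a_j-\ii\delta/2$, $b_k+\ii\delta/2$ to be pairwise distinct modulo the lattice; this follows from \eqref{eq:imaj}--\eqref{eq:ajakbjbk}, which the paper also invokes (for linear independence), so you should state it. Second, in the ``only if'' direction, $2\ell$-periodicity of $\bu^2$ only forces $\bigl(\sum_j\bs_j-\sum_j\bt_j\bigr)\cdot\bu(x)$ to be constant, hence orthogonality of $\sum_j\bs_j-\sum_j\bt_j$ to $\bphi$ and to all spins, which over $\C^3$ is strictly weaker than \eqref{eq:constraint3}; but the paper's own extraction of \eqref{eq:constraint3} from the $F_{r_j}$-coefficients has exactly the same weakness (it yields only $\bs_j\cdot\sum_k r_k\bs_k=0$), and only the ``if'' direction is used downstream, so this does not put you below the paper's standard.
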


\begin{proof}

Using \eqref{eq:ansatzSH} and \eqref{eq:dotwedgecirc}, we compute
\begin{align}\label{eq:UdotU1}
\bcu\dotcirc\bcu= &\; \bphi^2 E+2\ii \bphi\cdot \sum_{j=1}^{\cN} r_j\bs_jA_{r_j}(x-a_j)-\sum_{j=1}^{\cN}\sum_{k=1}^{\cN} r_j r_k \bs_j\cdot\bs_k A_{r_j}(x-a_j)\circ A_{r_k}(x-a_k).
\end{align}
To proceed, we need the identities
\begin{equation}\label{eq:Aj2Id}
A_{r_j}(x-a_j)\circ A_{r_j}(x-a_j)=-A_{r_j}'(x-a_j)+F_{r_j}(x-a_j)
\end{equation}
and
\begin{align}\label{eq:AjAkId}
A_{r_j}(x-a_j)\circ A_{r_k}(x-a_k)=&\; \zeta_2(\ta_j-\ta_k)\big(A_{r_j}(x-a_j)-A_{r_k}(x-a_k)\big) \nonumber \\
&\; +\frac12 \big(F_{r_j}(x-a_j)+F_{r_k}(x-a_k)\big)+\frac12 f_2(\ta_j-\ta_k)E +\frac{3\zeta(\ii \delta)}{2\delta}E,
\end{align}
where
\begin{equation}\label{eq:at}
\tilde{a}_j\coloneqq a_j-\ii r_j\delta/2 \quad (j=1,\ldots,\cN)
\end{equation}
and
\begin{equation}\label{eq:F}
F_{\pm}(z)\coloneqq \left(\begin{array}{c} f_2(z\pm\ii\delta/2) \\ f_2(z\mp\ii\delta/2) \end{array}\right) \quad (z\in\C).
\end{equation}
The identities \eqref{eq:Aj2Id} and \eqref{eq:AjAkId} follow from the elliptic identities \eqref{eq:IdV} and \eqref{eq:Idmain}, respectively together with the definitions of $E$, $A_{\pm}(z)$ \eqref{eq:EA} and $F_{\pm}(z)$ \eqref{eq:F}. We evaluate the double sum in \eqref{eq:UdotU1} using \eqref{eq:Aj2Id} for $j=k$ and \eqref{eq:AjAkId} for $j\neq k$:
\begin{align}\label{eq:UdotU2}
\bcu\dotcirc\bcu=&\; \bphi^2 E +2\ii \bphi\cdot \sum_{j=1}^{\cN} r_j\bs_j A_{r_j}(x-a_j)+\sum_{j=1}^{\cN} \bs_j^2 A_{r_j}'(x-a_j)-\sum_{j=1}^{\cN} \bs_j^2 F_{r_j}(x-a_j) \nonumber\\
&\; -\sum_{j=1}^{\cN}\sum_{k\neq j}^{\cN} r_j r_k \bs_j\cdot\bs_k \zeta_2(\ta_j-\ta_k) \big(A_{r_j}(x-a_j)-A_{r_k}(x-a_k)\big) \nonumber\\
&\; -\frac12 \sum_{j=1}^{\cN}\sum_{k\neq j}^{\cN} r_jr_k\bs_j\cdot\bs_k \big(F_{r_j}(x-a_j)+F_{r_k}(x-a_k)+f_2(\ta_j-\ta_k)E \big) \nonumber \\
&\;
-\frac{3\zeta(\ii \delta)}{2\delta}\sum_{j=1}^{\cN}\sum_{k\neq j}^{\cN} r_j r_k\bs_j\cdot\bs_k E.
\end{align}
Next, we recall that the functions $\zeta_2(z)$ and $f_2(z)$ appearing in $A_{\pm}(z)$ and $F_{\pm}(z)$ are odd and even, respectively \eqref{eq:parity}. Using this symmetry to rewrite the double sums in the second and third lines of \eqref{eq:UdotU2} and collecting terms, we find
\begin{align}\label{eq:UdotU3}
\bcu\dotcirc\bcu=&\; \bphi^2 E+2\ii \bphi\cdot \sum_{j=1}^{\cN} r_j\bs_j A(x-\ta_j)+\sum_{j=1}^{\cN} \bs_j^2 A_{r_j}'(x-a_j)-\sum_{j=1}^{\cN} \bs_j^2 F_{r_j}(x-a_j) \nonumber\\
&\; -2\sum_{j=1}^{\cN}\sum_{k\neq j}^{\cN} r_j r_k \bs_j\cdot\bs_k \zeta_2(\ta_j-\ta_k)A_{r_j}(x-a_j) - \sum_{j=1}^{\cN}\sum_{k\neq j}^{\cN}  r_jr_k\bs_j\cdot\bs_k F_{r_j}(x-a_j) \nonumber \\
&\; -\frac12\sum_{j=1}^{\cN}\sum_{k\neq j}^{\cN}      r_jr_k\bs_j\cdot\bs_k f_2(\ta_j-\ta_k) E -\frac{3\zeta(\ii \delta)}{2\delta}\sum_{j=1}^{\cN}\sum_{k\neq j}^{\cN}    r_j r_k\bs_j\cdot\bs_k E \nonumber\\
=&\;  \Bigg( \bphi^2-\frac12\sum_{j=1}^{\cN}\sum_{k\neq j}^{\cN} r_jr_k\bs_j\cdot\bs_k f_2(\ta_j-\ta_k)-\frac{3\zeta(\ii \delta)}{2\delta}\sum_{j=1}^{\cN} \sum_{k\neq j}^{\cN} r_j r_k\bs_j\cdot\bs_k\Bigg) E \nonumber\\
&\; +2\sum_{j=1}^{\cN}r_j\bs_j \cdot \Bigg(\ii\bphi-\sum_{k\neq j}^{\cN} r_k\bs_k \zeta_2(\ta_j-\ta_k)    \Bigg) A_{r_j}(x-a_j) \nonumber \\
&\; -\sum_{j=1}^{\cN} \bs_j^2 A_{r_j}'(x-a_j)-\sum_{j=1}^{\cN}\Bigg(\bs_j^2-\sum_{k\neq j}^{\cN} r_jr_k\bs_j\cdot \bs_k\Bigg) F_{r_j}(x-a_j).
\end{align}
We set \eqref{eq:UdotU3} equal to $\rho^2E$ and note that $E$, $\{A(x-a_j)\}_{j=1}^{\cN}$, $\{A'(x-a_j)\}_{j=1}^{\cN}$, and $\{F(x-a_j)\}_{j=1}^{\cN}$ are linearly independent as a consequence of \eqref{eq:imaj}--\eqref{eq:ajakbjbk}. The terms proportional to $A'_{r_j}(x-a_j)$ and $A_{r_j}(x-a_j)$ give the conditions
\begin{equation}
\bs_j^2=0 \quad (j=1,\ldots,\cN), \label{eq:constraint1SH}
\end{equation}
and
\begin{equation}
\bs_j\cdot\Bigg(\bphi+\ii\sum_{k\neq j}^{\cN} r_k \bs_k\zeta_2(\ta_j-\ta_k)\Bigg)=0 \quad (j=1,\ldots,\cN), \label{eq:constraint2SH}
\end{equation}
respectively. By inserting \eqref{eq:constraint1SH} into the sum in $F_{r_j}(x-a_j)$ in \eqref{eq:UdotU3}, we obtain
\begin{align}\label{eq:constraint3SH}
\sum_{j=1}^{\cN} r_j\bs_j=\boldsymbol{0},
\end{align}
and by inserting \eqref{eq:constraint3SH} into the sum proportional to $E$ in \eqref{eq:UdotU3}, we obtain
\begin{equation}\label{eq:constraint4SH}
\bphi^2-\frac12 \sum_{j=1}^{\cN} \sum_{k\neq j}^{\cN} r_jr_k \bs_j\cdot\bs_k f_2(\ta_j-\ta_k)=\rho^2.
\end{equation}
The constraints \eqref{eq:constraint1SH}--\eqref{eq:constraint4SH} are seen to be equivalent to (\ref{eq:constraint1}-\ref{eq:constraint3}) and \eqref{eq:constraint4} after recalling the notation \eqref{eq:shorthand} and \eqref{eq:at}.
\end{proof}

\subsection{First-order dynamics}\label{subsec:firstorder}

The following proposition describes conditions, in the form of a system of ODEs, when the ansatz \eqref{eq:ansatz} solves the periodic ncIHF equation \eqref{eq:ncIHF} (without the requirement that $\bu(x,t)^2=\bv(x,t)^2=\rho^2$). 

\begin{proposition}\label{prop:firstorder}
Suppose that $\bphi$, $\{a_j,\bs_j\}_{j=1}^N$, and $\{b_j,\bt_j\}_{j=1}^M$ is a solution of the system of equations \eqref{eq:sCMs}, \eqref{eq:sCMt}, and \eqref{eq:ajdot}--\eqref{eq:phidot} on an interval $[0,T)$
with initial conditions that satisfy \eqref{eq:constraint3} at $t=0$. Moreover, suppose that \eqref{eq:imaj} and \eqref{eq:ajakbjbk} hold for all $t\in [0,T)$. Then, for $t\in[0,T)$ such that the functions $\bu(x,t),\bv(x,t)$ in \eqref{eq:ansatz} are differentiable with respect to $x$ and $t$ for all $x\in [-\ell,\ell)$, \eqref{eq:ansatz} provides an exact solution of the periodic ncIHF equation \eqref{eq:ncIHF}.
\end{proposition}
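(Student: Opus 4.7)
The strategy is to substitute the ansatz in its compact form \eqref{eq:ansatzSH} into both sides of the periodic ncIHF equation \eqref{eq:ncIHF2}, decompose each side as a linear combination of the basis of $x$-dependent functions $\{E,\,A_{r_j}(x-a_j),\,A'_{r_j}(x-a_j),\,F_{r_j}(x-a_j)\}_{j=1}^{\cN}$, and match coefficients. Linear independence of this basis under the hypotheses \eqref{eq:imaj}--\eqref{eq:ajakbjbk} was already established in the course of proving Proposition~\ref{prop:constraints}, so coefficient matching is justified.

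Computing $\bcu_t$ by direct differentiation of \eqref{eq:ansatzSH} produces $\dot\bphi\, E$, spin-varying terms $r_j\dot\bs_j\,A_{r_j}(x-a_j)$, and pole-varying terms $-r_j\bs_j\dot a_j\,A'_{r_j}(x-a_j)$, so only $E$, $A$, and $A'$ terms appear on the left. The main technical ingredient is the evaluation of $\cT\bcu_x$: the action of $T$ and $\tilde T$ on $\zeta_2(\cdot - a\pm\ii\delta/2)$ can be computed by contour integration against the $\zeta_1$-kernel, where \eqref{eq:imaj} controls the location of the poles relative to the integration contour $[-\ell,\ell]$ and the block structure of $\cT$ in \eqref{eq:cT} matches the $\pm\ii\delta/2$ shifts so that the output is again a combination of $A_{r_j}(x-a_j)$, $A'_{r_j}(x-a_j)$, and a background piece proportional to $E$. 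I would isolate this calculation into a small lemma before proceeding.

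Forming $\bcu\wedgecirc\cT\bcu_x$ then expands into a $\bphi$-wedge piece (contributing $A$- and $E$-terms) and a $\bs_j$-wedge double sum. The latter is handled by the product identity \eqref{eq:AjAkId} (and its counterpart for $A\circ A'$); crucially, the coefficient of $F_{r_j}(x-a_j)$ produced by \eqref{eq:AjAkId} is proportional to $\bs_j\wedge\sum_{k\neq j}r_k\bs_k$, which by the conserved constraint \eqref{eq:constraint3} collapses to $-r_j\,\bs_j\wedge\bs_j=\mathbf 0$. This is exactly where the hypothesis \eqref{eq:constraint3} is used, and why no $\bs_j^2=0$ condition needs to be imposed here. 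Matching the remaining coefficients yields: the $A'_{r_j}(x-a_j)$ match (with $\bs_j$ attached) gives \eqref{eq:ajdot}; the $A_{r_j}(x-a_j)$ match gives the spin equations \eqref{eq:sCMs}--\eqref{eq:sCMt}; and the $E$ match gives the background equation \eqref{eq:phidot}.

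The principal obstacle will be the combinatorial bookkeeping in the $\bs_j$-wedge expansion and the careful tracking of the $\pm\ii\delta/2$ shifts so that the outputs of $T$ and $\tilde T$ land in the correct $A_\pm$ slot; the definitions of $\tilde a_j$ in \eqref{eq:at} and of $\zeta_2$, $\zeta_1$ with their particular normalizations are designed to make this alignment work. I would organize the wedge-product calculation by first symmetrizing the $j,k$ double sums using the parity identities \eqref{eq:parity}, then systematically peeling off $A$-, $A'$-, $F$-, and $E$-contributions, mirroring the structure of the calculation \eqref{eq:UdotU1}--\eqref{eq:UdotU3} but with $\dotcirc$ replaced by $\wedgecirc$.
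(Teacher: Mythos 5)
Your proposal follows essentially the same route as the paper: write the ansatz in the compact form \eqref{eq:ansatzSH}, use that the $A'_{r_j}$ are eigenfunctions of $\cT$ with eigenvalues $-\ii r_j$ (the paper cites \cite[Appendix A]{berntson2020} for this, where your contour-integration lemma is carried out), expand $\bcu\wedgecirc\cT\bcu_x$ via the derivative of \eqref{eq:AjAkId}, and match coefficients in the linearly independent system $E$, $A_{r_j}$, $A'_{r_j}$, $F'_{r_j}$; your observation that the $F'$-coefficient $\bs_j\wedge\sum_{k\neq j}r_k\bs_k$ collapses to $-r_j\,\bs_j\wedge\bs_j=\mathbf{0}$ under \eqref{eq:constraint3} is exactly the paper's mechanism. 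The one step you take for granted is that \eqref{eq:constraint3}, assumed only at $t=0$, actually holds on all of $[0,T)$: you call it ``the conserved constraint,'' but the proof must include the (one-line) verification that $\sum_j\bs_j-\sum_j\bt_j$ is conserved under \eqref{eq:sCMs}--\eqref{eq:sCMt}, which follows from the antisymmetry of the summand since $\wp_2$ is even; this is the paper's Lemma~\ref{lem:totalspin} and should be supplied explicitly.
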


\begin{proof}

We compute both terms in the periodic ncIHF equation in the form \eqref{eq:ncIHF2}, again making use of the form \eqref{eq:ansatzSH} of the pole ansatz \eqref{eq:ansatz} and the shorthand notation \eqref{eq:shorthand} and \eqref{eq:at}. First, we have
\begin{align}\label{eq:Ut}
\bcu_t=&\; \dbphi E +\ii \sum_{j=1}^{\cN} r_j\big(\dot{\bs}_j  A_{r_j}(x-a_j) -\bs_j\dot{a}_j  A_{r_j}'(x-a_j) \big),
\end{align}
using that $\dot{\tilde{a}}_j=\dot{a}_j$. We compute the remaining term in \eqref{eq:ncIHF2} in several steps. To compute $\cT \bcu_x$, we use the fact that the $A_{r_j}'(x-a_j)$ are eigenfunctions of $\cT$ when \eqref{eq:imaj} holds,
\begin{equation}\label{eq:cTA}
(\cT A_{r_j}'(\cdot-a_j))(x)=-\ii r_j A_{r_j}'(x-a_j) \quad (j=1,\ldots,\cN).
\end{equation}
The identity \eqref{eq:cTA} is established by verifying that the functions $\wp_2(z-a_j)$ appearing in $A_{\pm}'(z)$ \eqref{eq:EA} satisfy the conditions of the following result proved in \cite[Appendix A]{berntson2020}:\footnote{Note that the definition of $\cT$ in this paper differs from that in \cite{berntson2020} by a similarity transformation, $\cT\to D\cT D^{-1}$ with $D\coloneqq \mathrm{diag}(1,-1)$; we have modified the statement of the result in \cite{berntson2020} accordingly to meet our present needs.} \textit{for a $2\ell$-periodic function $g(z)$ analytic in a strip $-d<\im\,z<d$ with $d>\delta/2$ and satisfying $\int_{-\ell}^{\ell}g(x)\,\mathrm{d}x=0$, the functions $G_{\pm}(x)\coloneqq (g(x\pm\ii\delta/2),g(x\mp\ii\delta/2))^{\trans}$} are eigenfunctions of the operator $\cT$ with eigenvalues $\mp \ii$. 

Differentiating \eqref{eq:ansatzSH} with respect to $x$ gives
\begin{equation}\label{eq:Ux}
\bcu_x=\ii\sum_{j=1}^{\cN} r_j \bs_j A_{r_j}'(x-a_j)
\end{equation}
and hence \eqref{eq:cTA} implies
\begin{equation}\label{eq:TUx}
\cT \bcu_x=\ii\sum_{j=1}^{\cN} r_j\bs_j\cT A_{r_j}'(x-a_j)=\sum_{j=1}^{\cN} \bs_j A_{r_j}'(x-a_j).
\end{equation}

We compute $\bcu\wedgecirc\cT\bcu_x$ by combining \eqref{eq:ansatzSH} with \eqref{eq:TUx}:
\begin{align}\label{eq:UwedgeTUx1}
\bcu \wedgecirc \cT\bcu_x=&\;  \bphi E  \wedgecirc \sum_{j=1}^N \bs_j  A_{r_j}'(x-a_j)+\ii\sum_{j=1}^N r_j \bs_j A(x-a_j)  \wedgecirc \sum_{k=1}^N \bs_k  A_{r_k}(x-a_k)  \nonumber\\
=&\; \sum_{j=1}^N \bphi\wedge\bs_j A_{r_j}'(x-a_j)  +\ii\sum_{j=1}^N\sum_{k\neq j}^N r_j\bs_j\wedge\bs_k A_{r_j}(x-a_j)\circ A_{r_k}'(x-a_k).
\end{align}
Differentiating \eqref{eq:AjAkId} with respect to $\ta_k$ gives 
\begin{align}\label{eq:ArjArk}
A_{r_j}(x-a_j)\circ A_{r_k}'(x-a_k)=&\; -\wp_2(\ta_j-\ta_k)\big(A_{r_j}(x-a_j)-A_{r_k}(x-a_k)\big)-\zeta_2(\ta_j-\ta_k)A_{r_k}'(x-a_k) \nonumber \\
&\;+\frac12 F_{r_k}'(x-a_k) +\frac12 f_2(\ta_j-\ta_k)E;
\end{align}
inserting this identity into \eqref{eq:UwedgeTUx1}, we have
\begin{align}\label{eq:UwedgeTUx2}
\bcu\wedgecirc\cT\bcu_x=&\;  \sum_{j=1}^{\cN} \bphi\wedge\bs_j A_{r_j}(x-a_j)  -\ii \sum_{j=1}^{\cN}\sum_{k\neq j}^{\cN} r_j\bs_j\wedge\bs_k \wp_2(\ta_j-\ta_k)\big(A_{r_j}(x-a_j)-A_{r_k}(x-a_k)\big)\nonumber \\
&\; -\ii\sum_{j=1}^{\cN}\sum_{k\neq j}^{\cN}r_j\bs_j\wedge\bs_k \zeta_2(\ta_j-\ta_k)A_{r_k}'(x-a_k)+\frac{\ii}{2}\sum_{j=1}^{\cN}\sum_{k\neq j}^{\cN}r_j\bs_j\wedge\bs_k F_{r_k}'(x-a_k) \nonumber \\
&\; +\frac{\ii}{2}\sum_{j=1}^{\cN}\sum_{k\neq j}^{\cN}r_j\bs_j\wedge\bs_k f_{2}(\ta_j-\ta_k)E. 
\end{align}
Next, since $\wedge$ is antisymmetric and $\wp_2(z)$ is an even function \eqref{eq:parity}, we can rewrite the double sum in the first line of \eqref{eq:UwedgeTUx2} according to
\begin{equation}
\begin{aligned}
&\sum_{j=1}^{\cN}\sum_{k\neq j}^{\cN}r_j\bs_j\wedge\bs_k \wp_2(\ta_j-\ta_k)\big(A_{r_j}(x-a_j)-A_{r_k}(x-a_k)\big) \\
&= \frac12\sum_{j=1}^{\cN}\sum_{k\neq j}^{\cN}(r_j+r_k)\bs_j\wedge\bs_k \wp_2(\ta_j-\ta_k)\big(A_{r_k}(x-a_j)-A_{r_k}(x-a_k)\big)  \\
&= \sum_{j=1}^{\cN}\sum_{k\neq j}^{\cN}(r_j+r_k)\bs_j\wedge\bs_k \wp_2(\ta_j-\ta_k)A_{r_j}(x-a_j).
\end{aligned}
\end{equation}
Hence, inserting this and swapping some indices $j\leftrightarrow k$ (using the antisymmetry of $\wedge$ and the fact that $\zeta_2(z)$ is an odd function \eqref{eq:parity}) in \eqref{eq:UwedgeTUx2}, we obtain
\begin{align}\label{eq:UwedgeTUx3}
\bcu\wedgecirc\cT\bcu_x=&\;  \sum_{j=1}^{\cN} \bphi\wedge\bs_j A_{r_j}'(x-a_j)  -\ii \sum_{j=1}^{\cN}\sum_{k\neq j}^{\cN} (r_j+r_k)\bs_j\wedge\bs_k \wp_2(\ta_j-\ta_k)A_{r_j}(x-a_j) \nonumber \\
&\; -\ii\sum_{j=1}^{\cN}\sum_{k\neq j}^{\cN}r_k\bs_j\wedge\bs_k \zeta_2(\ta_j-\ta_k)A_{r_j}(x-a_j)-\frac{\ii}{2}\sum_{j=1}^{\cN}\sum_{k\neq j}^{\cN}r_k\bs_j\wedge\bs_k F_{r_j}'(x-a_j) \nonumber \\
&\; +\frac{\ii}{2}\sum_{j=1}^{\cN}\sum_{k\neq j}^{\cN}r_j\bs_j\wedge\bs_k f_{2}(\ta_j-\ta_k)E,
\end{align}
which may be rearranged to 
\begin{align}\label{eq:UwedgeTUx4}
\bcu\wedgecirc\cT\bcu_x=&\;  \frac{\ii}{2}\sum_{j=1}^{\cN}\sum_{k\neq j}^{\cN}r_j\bs_j\wedge\bs_k f_{2}(\ta_j-\ta_k)E -\ii \sum_{j=1}^{\cN}\sum_{k\neq j}^{\cN} (r_j+r_k)\bs_j\wedge\bs_k \wp_2(\ta_j-\ta_k)A_{r_j}(x-a_j)     \nonumber \\
&\; - \sum_{j=1}^{\cN} \bs_j \wedge \Bigg(\bphi + \ii \sum_{k=1}^{\cN} r_k \bs_k \zeta_2(\ta_j-\ta_k)\Bigg) A_{r_j}'(x-a_j) -\frac{\ii}{2}\sum_{j=1}^{\cN}r_j \Bigg(\sum_{k=1}^{\cN}r_k\bs_k\Bigg) F_{r_j}'(x-a_j),
\end{align}
where we have used $\bs_k\wedge\bs_k=\mathbf{0}$ to rewrite the final sum. 

Inserting \eqref{eq:Ut} and \eqref{eq:UwedgeTUx4} into \eqref{eq:ncIHF2} and using the linear independence of $E$, $\{A_{r_j}(x-a_j)\}_{j=1}^{\cN}$, $\{A_{r_j}'(x-a_j)\}_{j=1}^{\cN}$, and $\{F_{r_j}'(x-a_j)\}_{j=1}^{\cN}$ as a consequence of \eqref{eq:imaj}--\eqref{eq:ajakbjbk}, we obtain the equations of motion 
\begin{align}
\dbphi=&\; \frac{\ii}{4}\sum_{j=1}^{\cN}\sum_{k\neq j}^{\cN} (r_j+r_k)\bs_j\wedge\bs_k f_2'(\ta_j-\ta_k), \label{eq:phidotSH}\\
\dot{\bs}_j=&\; -\sum_{k\neq j}^{\cN} (1+r_jr_k)\bs_j\wedge\bs_k \wp_2(\ta_j-\ta_k) \quad (j=1,\ldots,\cN), \label{eq:sjdotSH}\\
\dot{a}_j \bs_j=&\; -r_j \bs_j \wedge \Bigg(\ii\bphi-\sum_{k\neq j}^{\cN} r_k \bs_k \zeta_2(\ta_j-\ta_k)\Bigg)    \quad (j=1,\ldots,\cN) \label{eq:ajdotSH}
\end{align}
and the constraint \eqref{eq:constraint3SH}. Equations \eqref{eq:phidotSH}--\eqref{eq:ajdotSH} are equivalent to \eqref{eq:phidot}, \eqref{eq:sCMs} and \eqref{eq:sCMt}, and \eqref{eq:ajdot}, respectively via the notation \eqref{eq:shorthand} and \eqref{eq:at}. To prove the proposition, it remains to show that if \eqref{eq:constraint3SH}, equivalent to \eqref{eq:constraint3}, is satisfied at $t=0$, it holds on $[0,T)$ when the variables $\{a_j,\bs_j\}_{j=1}^{\cN}$ evolve according to \eqref{eq:phidotSH}--\eqref{eq:ajdotSH}. We prove the following stronger result.

\begin{lemma}\label{lem:totalspin}
The total spins
\begin{equation}\label{eq:totalspin}
\bS\coloneqq \sum_{j=1}^N \bs_j, \qquad \bT \coloneqq \sum_{j=1}^M \bt_j
\end{equation}
are conserved by the equations of motion \eqref{eq:sCMs} and \eqref{eq:sCMt}. 
\end{lemma}
\begin{proof}
We differentiate $\bS$ in \eqref{eq:totalspin} with respect to $t$ and insert \eqref{eq:sCMs} to find
\begin{align*}
\bS_t=\sum_{j=1}^{N} \dot{\bs}_j=-2\sum_{j=1}^N\sum_{k\neq j}^N \bs_j\wedge \bs_k \wp_2(a_j-a_k).
\end{align*}
The sum vanishes because $\wp_2(z)$ is an even function \eqref{eq:parity} and hence the summand is antisymmetric under the interchange of $j$ and $k$. The proof for $\bT$ is similar. 
\end{proof}
Because $\bS$ and $\bT$ are conserved quantities, so is their difference and hence, \eqref{eq:constraint3} holds on $[0,T)$ if it is satisfied at $t=0$. 
\end{proof}

\section{Conserved quantities}\label{sec:conserved}

This section is devoted to proving that the constraints in Proposition~\ref{prop:constraints} correspond to conserved quantities of the ODE system of Proposition~\ref{prop:firstorder}, i.e., if the constraints are satisfied at $t=0$, as required by Theorem~\ref{thm:main}, they hold at all future times. We note that the constancy of the total spins appearing in the constraint \eqref{eq:constraint3} was already proved in Lemma~\ref{lem:totalspin}.

We prove the following.

\begin{proposition}\label{prop:conserved}
Under the assumptions of Proposition~\ref{prop:firstorder}, the following quantities are conserved:
\begin{align}
&P_j\coloneqq \bs_j^2 \quad (j=1,\ldots,N),\qquad  P_{N+j}\coloneqq \bt_j^2 \quad (j=1,\ldots,M), \label{eq:Pj}\\
\begin{split}\label{eq:Qj}
&Q_j\coloneqq\bs_j\cdot\Bigg(\ii\bphi-\sum_{k\neq j}^{N}\bs_k\zeta_2(a_j-a_k)+\sum_{k=1}^M \bt_k\zeta_2(a_j-b_k+\ii\delta)\Bigg) \quad (j=1,\ldots,N),\\
&Q_{N+j}\coloneqq \bt_j\cdot\Bigg(\ii\bphi+\sum_{k\neq j}^{M} \bt_k\zeta_2(b_j-b_k)-\sum_{k=1}^N \bs_k\zeta_2(b_j-a_k+\ii\delta)\Bigg) \quad (j=1,\ldots,M), 
\end{split} \\
&R\coloneqq \bphi^2-\frac12\sum_{j=1}^N\sum_{k\neq j}^N \bs_j\cdot\bs_k f_2(a_j-a_k)-\frac12\sum_{j=1}^M\sum_{k\neq j}^M \bt_j\cdot\bt_k f_2(b_j-b_k) \nonumber \\
&\phantom{R\coloneqq} +\sum_{j=1}^N\sum_{k=1}^M \bs_j\cdot\bs_k f_2(a_j-b_k+\ii\delta).    \label{eq:Rj}
\end{align}
\end{proposition}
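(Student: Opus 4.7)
The plan is to compute $\dot P_j$, $\dot Q_j$, and $\dot R$ directly using the first-order equations \eqref{eq:sCMs}, \eqref{eq:sCMt}, \eqref{eq:phidot}, \eqref{eq:ajdot}, and to show that each vanishes. The $M$-labelled quantities $P_{N+j},Q_{N+j}$ follow from the $N$-labelled ones by the evident $(\bs,a,+\ii\delta)\leftrightarrow(\bt,b,-\ii\delta)$ symmetry built into the system, so I restrict attention to the case $j\le N$.

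Conservation of $P_j=\bs_j^2$ is immediate: differentiating and substituting \eqref{eq:sCMs} gives
\begin{equation*}
\dot P_j=2\bs_j\cdot\dot{\bs}_j=-4\sum_{k\neq j}^N\wp_2(a_j-a_k)\,\bs_j\cdot(\bs_j\wedge\bs_k)=0,
\end{equation*}
because each scalar triple product vanishes.

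For $Q_j$, let $\boldsymbol{\xi}_j$ denote the vector in parentheses in \eqref{eq:Qj}, so $Q_j=\bs_j\cdot\boldsymbol{\xi}_j$. I would differentiate and use $\zeta_2'=-\wp_2$ together with \eqref{eq:sCMs}, \eqref{eq:sCMt}, and \eqref{eq:phidot} to expand $\dot{\bs}_j$ and $\dot{\boldsymbol{\xi}}_j$. The resulting expression splits into three kinds of terms: (i) $\wp_2$-weighted scalar triple products arising from $\dot{\bs}_j,\dot{\bs}_k,\dot{\bt}_k$; (ii) an $f_2'$-weighted scalar triple product from $\ii\bs_j\cdot\dbphi$; and (iii) position-derivative contributions of the form $(\bs_j\cdot\bs_k)\wp_2(a_j-a_k)(\dot a_j-\dot a_k)$ and $(\bs_j\cdot\bt_k)\wp_2(a_j-b_k+\ii\delta)(\dot a_j-\dot b_k)$. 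The terms of type (iii) are the only ones still carrying explicit time derivatives, and the key manoeuvre is to write $(\bs_j\cdot\bs_k)\dot a_j=\bs_k\cdot(\bs_j\dot a_j)=-\bs_k\cdot(\bs_j\wedge\boldsymbol{\xi}_j)$ via \eqref{eq:ajdot}, and analogously to trade $\bs_k\dot a_k$ and $\bt_k\dot b_k$ for cross products with the corresponding $\boldsymbol{\xi}_k$-type vectors. After this substitution $\dot Q_j$ is an algebraic expression in scalar triple products of three spin vectors weighted by products of elliptic functions at arguments drawn from $\{a_j-a_k,\,a_j-a_l,\,a_j-b_l+\ii\delta,\ldots\}$; regrouping via the antisymmetry of $\wedge$ and invoking the $\zeta_2$ three-term relation from Appendix~\ref{app:elliptic} then produces pairwise cancellation.

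The argument for $R$ is structurally identical but bulkier. Differentiating $\bphi^2$ and using \eqref{eq:phidot} yields scalar triple products $\bphi\cdot(\bs_j\wedge\bs_k)f_2'(a_j-a_k)$ and $\bphi\cdot(\bt_j\wedge\bt_k)f_2'(b_j-b_k)$. Differentiating the $f_2$ double sums produces spin-derivative terms (to be combined with \eqref{eq:sCMs} and \eqref{eq:sCMt}) together with position-derivative terms $(\bs_j\cdot\bs_k)f_2'(a_j-a_k)(\dot a_j-\dot a_k)$ and analogues; the latter are again converted into cross products via \eqref{eq:ajdot}, and the $\bphi$-components of those cross products are exactly absorbed by the contribution from $2\bphi\cdot\dbphi$. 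What remains is a combination of $f_2'$- and $\wp_2$-weighted triple products, which should cancel using $f_2'(z)=-\wp_2'(z)-2\zeta_2(z)\wp_2(z)$ (which follows directly from \eqref{eq:f2}) together with the $\zeta_2$ three-term relation and its derivative. The main obstacle is essentially bookkeeping: once \eqref{eq:ajdot} has eliminated every explicit $\dot a_j$ and $\dot b_j$, many terms remain in $\dot R$, and one must group them by their pole pattern (distinguishing triples involving only $a$'s, only $b$'s, and mixed $a$-$b$ arguments where the $\ii\delta$ shifts interact with the three-term relation) and apply the correct elliptic identity in each group. Aside from this, the conceptual content is simply that \eqref{eq:ajdot} allows all time derivatives of positions to be traded for spin cross products, after which everything reduces to identities in the Weierstrass system.
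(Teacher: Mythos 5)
Your strategy for $P_j$ is exactly the paper's, and your overall route for $Q_j$ and $R$ (differentiate, eliminate $\dot a_j,\dot b_j$ via \eqref{eq:ajdot}, regroup, cancel with elliptic identities) is also the paper's. But there are two genuine gaps in the way you claim the final cancellations happen. First, you assert that after \eqref{eq:ajdot} has been used, ``everything reduces to identities in the Weierstrass system.'' It does not: the cancellations in both $\dot Q_j$ and $\dot R$ require the total-spin constraint \eqref{eq:constraint3}, i.e.\ $\sum_j\bs_j-\sum_j\bt_j=\boldsymbol{0}$, which is a hypothesis of Proposition~\ref{prop:firstorder} at $t=0$ and is propagated by Lemma~\ref{lem:totalspin}. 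Concretely, when the paper rewrites $(\bb_j-\bb_k)\wp_2(\ta_j-\ta_k)$ in \eqref{eq:bjbk2}, a term proportional to $\big(\sum_{l}r_l\bs_l\big)f_2'(\ta_j-\ta_k)$ survives the elliptic identities and is killed only by \eqref{eq:constraint3}; without invoking it, $\dot Q_j$ does not vanish, and the residual sums at the end of the $\dot R$ computation likewise fail to cancel. Your plan never mentions this ingredient.

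Second, and more seriously, the final triple sum in $\dot R$ is \emph{not} disposed of by ``the $\zeta_2$ three-term relation and its derivative.'' After all velocity eliminations and the cancellation of the $\bb_j$-terms against $2\bphi\cdot\dbphi$, one is left (see \eqref{eq:Rdot2}) with a sum of triple products weighted by $\partial_{\ta_l}g(\ta_j,\ta_k,\ta_l)$, where $g$ is the cubic combination \eqref{eq:g} built from $\zeta_2\cdot f_2$ products. Showing this vanishes requires a new addition-type identity: the paper constructs an auxiliary function $h(z)$ in \eqref{eq:h}, containing $\zeta_2(z)^3$ and $\wp_2'(z)$, such that $g-h(\ta_j-\ta_l)+h(\ta_k-\ta_l)$ is elliptic and pole-free in $\ta_l$ and hence constant by Liouville's theorem; the leftover $h$-sums then vanish using \eqref{eq:constraint1SH} and \eqref{eq:constraint3SH}. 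This identity is not a derivative of \eqref{eq:Idmain} (those give quadratic relations, not cubic ones), so your proposed toolkit is insufficient at precisely the step where the proof has its real content. To complete your argument you would need either to state and prove this cubic identity directly or to reproduce the Liouville construction.
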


\subsection{Proof of Proposition~\ref{prop:conserved}}

We prove the proposition in three parts corresponding to the quantities \eqref{eq:Pj}, \eqref{eq:Qj}, and \eqref{eq:Rj}. 

\subsubsection{Conservation of $P_j$}

Using the notation \eqref{eq:shorthand}, we write \eqref{eq:Pj} as 
\begin{equation}\label{eq:Pj2}
P_j=\bs_j^2 \quad (j=1,\ldots,\cN).
\end{equation}
Differentiating this with respect to time and inserting \eqref{eq:sjdotSH}, we have
\begin{align}
\dot{P}_{j}=  \bs_j\cdot\dot{\bs}_j =&\; -\sum_{k=1}^{\cN}(1+r_jr_k) \bs_j\cdot (\bs_j\wedge\bs_k)\wp_2(\ta_j-\ta_k) \nonumber \\
=&\; -\sum_{k=1}^{\cN}(1+r_jr_k) \bs_k\cdot (\bs_j\wedge\bs_j)\wp_2(\ta_j-\ta_k)=0,
\end{align}
where we have used the invariance of the vector triple product under cyclic permutations,
\begin{equation}\label{eq:cyclic}
\bx\cdot(\by\wedge\bz)=\bz\cdot(\bx\wedge\by)=\by\wedge(\bz\wedge\bx).
\end{equation}

\subsubsection{Conservation of $Q_j$}

We write \eqref{eq:Qj} as
\begin{equation}\label{eq:Yj2}
Q_j=\bs_j\cdot\bb_j \quad (j=1,\ldots,\cN),
\end{equation}
where
\begin{equation}\label{eq:bj}
\bb_j\coloneqq \ii\bphi-\sum_{k\neq j}^{\cN} r_k\bs_k \zeta_2(\ta_j-\ta_k) \quad (j=1,\ldots,\cN),
\end{equation}
with the shorthand notation \eqref{eq:shorthand} and \eqref{eq:at}.

Differentiating \eqref{eq:Yj2} with respect to time and inserting \eqref{eq:sjdotSH}--\eqref{eq:ajdotSH}, we find
\begin{align}\label{eq:Qjdot1}
\dot{Q}_{j}=&\; \dot{\bs}_j\cdot\bb_j+\bs_j\cdot\dbb_j \nonumber \\
=&\; -\sum_{j=1}^{\cN} (1+r_jr_k)(\bs_j\wedge\bs_k)\cdot\bb_j \wp_2(\ta_j-\ta_k) \nonumber \\
&\;+\bs_j\cdot\Bigg(\ii\dbphi-\sum_{k\neq j}^{\cN} r_k \dot{\bs}_k\zeta_2(\ta_j-\ta_k)+\sum_{k\neq j}^{\cN} r_k \bs_k (\dot{a}_j-\dot{a}_k)\wp_2(\ta_j-\ta_j)\Bigg) \nonumber \\
=&\; -\sum_{j=1}^{\cN} (1+r_jr_k)(\bs_j\wedge\bs_k)\cdot\bb_j            \wp_2(\ta_j-\ta_k) \nonumber \\
&\;+\ii \bs_j\cdot \dbphi-\sum_{k\neq j}^{\cN}\sum_{l\neq k}^{\cN} r_k(1+r_kr_l) \bs_j\cdot (\bs_k\wedge\bs_l) \zeta_2(\ta_j-\ta_k)\wp_2(\ta_k-\ta_l) \nonumber \\
&\; +\sum_{k\neq j}^{\cN} r_k(-r_j (\bs_j\wedge\bb_j)\cdot{\bs_k}+r_k(\bs_k\wedge\bb_k)\cdot\bs_j) \wp_2(\ta_j-\ta_j) .
\end{align}
We use \eqref{eq:cyclic} again to reorder triple products,
\begin{align}\label{eq:Qjdot2}
\dot{Q}_j =&\; -\sum_{k\neq j}^{\cN} \big((1+r_jr_k) \bs_j\cdot (\bs_j\wedge\bs_k)+r_jr_k (\bs_j\wedge\bb_j)\cdot\bs_k-(\bs_k\wedge\bb_k)\cdot\bs_j\big)\wp_2(\ta_j-\ta_k) \nonumber\\
&\; +\ii \bs_j\cdot \dbphi-\sum_{k\neq j}^{\cN}\sum_{l\neq k}^{\cN} (r_k+r_l) \bs_j\cdot(\bs_k\wedge\bs_l)\zeta_2(\ta_j-\ta_k)\wp_2(\ta_k-\ta_l) \nonumber \\ 
=&\; -\sum_{k\neq j}^{\cN} (\bs_j\wedge\bs_k)\cdot(\bb_j-\bb_k)\wp_2(\ta_j-\ta_k) \nonumber \\
&\; +\ii\bs_j\cdot\dbphi-\sum_{k\neq j}^{\cN}\sum_{l\neq k}^{\cN} (r_k+r_l) \bs_j\cdot(\bs_k\wedge\bs_l)\zeta_2(\ta_j-\ta_k)\wp_2(\ta_k-\ta_l).
\end{align}

To proceed, we rewrite the quantity $(\bb_j-\bb_k)\wp_2(\ta_j-\ta_k)$ in a convenient way. By the definition of $\bb_j$ \eqref{eq:bj},
\begin{align}\label{eq:bjbk1} 
(\bb_j-\bb_k)\wp_2(\ta_j-\ta_k)=&\; -\Bigg(\sum_{l\neq j}^{\cN} r_l \bs_l \zeta_2(\ta_j-\ta_l)- \sum_{l\neq k}^{\cN} r_l \bs_l \zeta_2(\ta_k-\ta_l)\Bigg)\wp_2(\ta_j-\ta_k) \nonumber \\
=&\;  -(r_k\bs_k+r_j\bs_j)\zeta_2(\ta_j-\ta_k)\wp_2(\ta_j-\ta_k)           \nonumber \\
&\; -\sum_{l\neq j,k}^{\cN} r_l \bs_l \big(\zeta_2(\ta_j-\ta_l)-\zeta_2(\ta_k-\ta_l)\big)\wp_2(\ta_j-\ta_k),
\end{align}
where we have used the fact that $\zeta_2(z)$ is an odd function \eqref{eq:parity} in the second step. To proceed, we use the identities
\begin{equation}\label{eq:EllipticId1}
\zeta_2(z)\wp_2(z)=-\frac12\big(\wp_2'(z)+f_2'(z)\big)
\end{equation}
 and
\begin{align}\label{eq:EllipticId2}
\big(\zeta_2(\ta_j-\ta_l)-\zeta_2(\ta_k-\ta_l)\big)\wp_2(\ta_j-\ta_k)=&\; -\big(\zeta_2(\ta_j-\ta_k)-\zeta_2(\ta_j-\ta_l)\big)\wp_2(\ta_k-\ta_l) \nonumber \\
&\; -\frac12\big(f_2'(\ta_j-\ta_k)-f_2'(\ta_k-\ta_l)\big),
\end{align}
The first identity \eqref{eq:EllipticId1} is obtained by differentiating \eqref{eq:IdV} with respect to $z$ and the second identity \eqref{eq:EllipticId2} is obtained by differentiating \eqref{eq:Idmain} with respect to $a$ and setting $z=\ta_j$, $a=\ta_k$, and $b=\ta_l$.

Inserting \eqref{eq:EllipticId1} and \eqref{eq:EllipticId2} into \eqref{eq:bjbk1} and simplifying gives
\begin{align}\label{eq:bjbk2} 
& (\bb_j-\bb_k)\wp_2(\ta_j-\ta_k) \nonumber \\
&=    -\frac12(r_k\bs_k+r_j\bs_j)\wp_2'(\ta_j-\ta_k)  -\frac12(r_k\bs_k+r_j\bs_j)f_2'(\ta_j-\ta_k)            \nonumber \\
& \phantom{=\;}-\sum_{l\neq j,k}^{\cN} r_l \bs_l \big(\zeta_2(\ta_j-\ta_k)-\zeta_2(\ta_j-\ta_l)\big)\wp_2(\ta_k-\ta_l) \nonumber \\
& \phantom{=\;}-\frac12\sum_{l\neq j,k}^{\cN} r_l \bs_l f_2'(\ta_j-\ta_k)-\frac12\sum_{l\neq j,k}^{\cN} r_l\bs_l f_2'(\ta_k-\ta_l)  \nonumber \\
&=  -\frac12(r_k\bs_k+r_j\bs_j)\wp_2'(\ta_j-\ta_k)-\sum_{l\neq j,k}^{\cN} r_l \bs_l \big(\zeta_2(\ta_j-\ta_k)-\zeta_2(\ta_j-\ta_l)\big)\wp_2(\ta_k-\ta_l) \nonumber \\
& \phantom{=\;}-\frac12\sum_{l=1}^{\cN} r_l \bs_l f_2'(\ta_j-\ta_k)-\frac12\sum_{l\neq j,k}^{\cN} r_l \bs_l f_2'(\ta_k-\ta_l)  \nonumber \\
&=   -\frac12(r_k\bs_k+r_j\bs_j)\wp_2'(\ta_j-\ta_k) -\sum_{l\neq j,k}^{\cN} r_l \bs_l \big(\zeta_2(\ta_j-\ta_k)-\zeta_2(\ta_j-\ta_l)\big)\wp_2(\ta_k-\ta_l) \nonumber \\
& \phantom{=\;}-\frac12\sum_{l\neq j,k}^{\cN} r_l\bs_l f_2'(\ta_k-\ta_l),
\end{align}
where we have used Lemma~\ref{lem:totalspin} in the final step to replace $\sum_{l=1}^{\cN}r_l\bs_l$ by $\boldsymbol{0}$. 

Inserting \eqref{eq:bjbk2} into \eqref{eq:Qjdot2} gives
\begin{align}\label{eq:Qjdot3}
\dot{Q}_j=&\; \frac12 \sum_{k\neq j}^{\cN} (r_k\bs_k+r_j\bs_j)\cdot (\bs_j\wedge\bs_k)\wp_2'(\ta_j-\ta_k) \nonumber \\
&\; +\sum_{k\neq j}^{\cN}\sum_{l\neq j,k}^{\cN} r_l \bs_l\cdot(\bs_j\wedge\bs_k)\big(\zeta_2(\ta_j-\ta_k)-\zeta_2(\ta_j-\ta_l)\wp_2(\ta_k-\ta_l)           \nonumber \\
&\; +\frac12\sum_{k\neq j}^{\cN} \sum_{l\neq j,k}^{\cN} r_l \bs_l\cdot(\bs_j\wedge\bs_k)f_2'(\ta_j-\ta_k)     \nonumber \\
&\; +\ii\bs_j\cdot\dbphi-\sum_{k\neq j}^{\cN}\sum_{l\neq k}^{\cN} (r_k+r_l) \bs_j\cdot(\bs_k\wedge\bs_l)\zeta_2(\ta_j-\ta_k)\wp_2(\ta_k-\ta_l).
\end{align}
The sum in the first line of \eqref{eq:Qjdot3} vanishes as a consequence of \eqref{eq:cyclic}. The double sum in the second line of \eqref{eq:Qjdot3} may be symmetrized, 
\begin{multline}\label{eq:Qjdotsum1}
\sum_{k\neq j}^{\cN}\sum_{l\neq j,k}^{\cN} r_l \bs_l\cdot(\bs_j\wedge\bs_k)\big(\zeta_2(\ta_j-\ta_k)-\zeta_2(\ta_j-\ta_l)\big)\wp_2(\ta_k-\ta_l)   \\
=\frac12\sum_{k\neq j}^{\cN}\sum_{l\neq j,k}^{\cN} (r_k+r_l) \bs_l\cdot(\bs_j\wedge\bs_k)\big(\zeta_2(\ta_j-\ta_k)-\zeta_2(\ta_j-\ta_l)\big)\wp_2(\ta_k-\ta_l),
\end{multline}
(using \eqref{eq:cyclic}, the antisymmetry of $\wedge$, and the fact that $\wp(z)$ is an even function \eqref{eq:parity}) and the final sum in \eqref{eq:Qjdot3} may be rewritten as
\begin{multline}\label{eq:Qjdotsum2}
\sum_{k\neq j}^{\cN}\sum_{l\neq k}^{\cN} (r_k+r_l) \bs_j\cdot(\bs_k\wedge\bs_l)\zeta_2(\ta_j-\ta_k)\wp_2(\ta_k-\ta_l) \\
= \sum_{k\neq j}^{\cN}(r_k+r_j)\bs_j\cdot(\bs_k\wedge\bs_j)\zeta_2(\ta_j-\ta_k)\wp_2(\ta_k-\ta_l) +\sum_{k\neq j}^{\cN}\sum_{l\neq k}^{\cN} (r_k+r_l) \bs_j\cdot(\bs_k\wedge\bs_l)\zeta_2(\ta_j-\ta_k)\wp_2(\ta_k-\ta_l) \\
=\frac12\sum_{k\neq j}^{\cN}\sum_{l\neq k}^{\cN} (r_k+r_l) \bs_j\cdot(\bs_k\wedge\bs_l)\big(\zeta_2(\ta_j-\ta_k)-\zeta_2(\ta_j-\ta_l)\big)\wp_2(\ta_k-\ta_l),
\end{multline}
where we have used $\bs_j\cdot(\bs_k\wedge\bs_j)=0$ and, similarly as in \eqref{eq:Qjdotsum1}, symmetrized the double sum in the final step. Hence, using \eqref{eq:cyclic}, we see that \eqref{eq:Qjdotsum1} and \eqref{eq:Qjdotsum2} are equal, leading to cancellation in \eqref{eq:Qjdot3}. We are left with
\begin{align}\label{eq:Qjdot4}
\dot{Q}_{j}=&\; \frac12\sum_{k\neq j}^{\cN}\sum_{l\neq j,k}^{\cN} r_l\bs_l\cdot(\bs_j\wedge\bs_k)f_2'(\ta_k-\ta_l)+\ii \bs_j\cdot\dbphi \nonumber\\
 =&\; \frac14\sum_{k\neq j}^{\cN}\sum_{l\neq j,k}^{\cN} (r_k+r_l)\bs_l\cdot(\bs_j\wedge\bs_k)f_2'(\ta_k-\ta_l)-\frac14\sum_{k=1}^{\cN}\sum_{l\neq k}^{\cN} (r_k+r_l)\bs_j\cdot(\bs_k\wedge\bs_l)f_2'(\ta_k-\ta_k),
\end{align}
where we have symmetrized the double sum (using \eqref{eq:cyclic}, the antisymmetry of $\wedge$, and the fact that $f_2'(z)$ is an odd function \eqref{eq:parity}) and inserted \eqref{eq:phidot} in the second step. Noting that all terms proportional to $\bs_j\cdot(\bs_k\wedge\bs_l)$ with $j=k$ and $j=l$ are zero in the second double sum in \eqref{eq:Qjdot4} and using \eqref{eq:cyclic}, we see that $\dot{Q}_j=0$. 

\subsubsection{Conservation of $R$}

We write
\begin{equation}\label{eq:Rsum}
R=R^{(1)}+R^{(2)},
\end{equation}
where
\begin{equation}\label{eq:R1R2}
R^{(1)}\coloneqq \bphi^2,\qquad R^{(2)}\coloneqq -\rho^2-\frac12\sum_{j=1}^{\cN}\sum_{k\neq j}^{\cN} r_jr_k\bs_j\cdot\bs_k f_2(\ta_j-\ta_k). 
\end{equation}

By differentiating $R^{(1)}$ with respect to $t$ and inserting \eqref{eq:phidotSH} and 
\begin{equation}
\bphi=-\ii \bb_j-\ii\sum_{k\neq j}^{\cN}r_k \bs_k \zeta_2(\ta_j-\ta_k) \quad (j=1,\ldots,\cN),
\end{equation}
which follows from \eqref{eq:bj}, we compute
\begin{align}\label{eq:R1dot}
\dot{R}^{(1)}= 2\bphi\cdot\dbphi=&\; \frac{\ii}{2}\sum_{j=1}^{\cN}\sum_{k\neq j}^{\cN} (r_j+r_k)\bphi\cdot(\bs_j\wedge\bs_k )f_2'(\ta_j-\ta_k) \nonumber \\
=&\; \frac{\ii}{2}\sum_{j=1}^{\cN}\sum_{k\neq j}^{\cN} r_j\Bigg(-\ii \bb_k-\ii \sum_{l\neq j}^{\cN} r_l\bs_l \zeta_2(\ta_k-\ta_l)\Bigg)\cdot(\bs_j\wedge\bs_k)f_2'(\ta_j-\ta_k)  \nonumber \\
&\; + \frac{\ii}{2}\sum_{j=1}^{\cN}\sum_{k\neq j}^{\cN} r_k\Bigg(-\ii \bb_j-\ii \sum_{l\neq k}^{\cN} r_l\bs_l \zeta_2(\ta_j-\ta_l)\Bigg)\cdot(\bs_j\wedge\bs_k)f_2'(\ta_j-\ta_k) \nonumber \\
=&\; \frac12 \sum_{j=1}^{\cN}\sum_{k\neq j}^{\cN} (r_j\bb_k+r_k\bb_j)\cdot(\bs_j\wedge\bs_k)f_2'(\ta_j-\ta_k) \nonumber\\
&\; +\frac12\sum_{j=1}^{\cN}\sum_{k\neq j}^{\cN}  \sum_{l\neq j,k}^{\cN} \bs_j \cdot(\bs_k\wedge\bs_l)\big(r_jr_l\zeta_2(\ta_k-\ta_l)+r_kr_l\zeta_2(\ta_j-\ta_l)\big)f_2'(\ta_j-\ta_k),
\end{align}
where we have used \eqref{eq:cyclic} in the last step. 

By differentiating $R^{(2)}$ with respect to $t$, using $\dot{\tilde{a}}_j=\dot{a}_j$ for $j=1,\ldots,\cN$, and inserting \eqref{eq:sjdotSH} and \eqref{eq:ajdotSH} in the form
\begin{equation}
\dot{a}_j\bs_j=-r_j\bs_j\wedge\bb_j \quad (j=1,\ldots,\cN),
\end{equation}
we compute
\begin{align}\label{eq:R2dot}
\dot{R}^{(2)}=&\; -\frac12 \sum_{j=1}^{\cN}\sum_{k\neq j}^{\cN} r_jr_k\big((\dot{\bs}_j\cdot\bs_k+\bs_j\cdot\dot{\bs}_k)f_2(\ta_j-\ta_k)+\bs_j\cdot\bs_k(\dot{a}_j-\dot{a}_k)f_2'(a_j-a_k)\big) \nonumber \\
=&\; \frac12 \sum_{j=1}^{\cN}\sum_{k\neq j}^{\cN} \Bigg\{ \sum_{l\neq j}^{\cN} (r_jr_k+r_kr_l) (\bs_j\wedge\bs_l)\cdot\bs_k \wp_2(\ta_j-\ta_l) \nonumber \\
&\;\phantom{\frac12 \sum_{j=1}^{\cN}\sum_{k\neq j}^{\cN} \Bigg(}+\sum_{l\neq k}^{\cN} (r_jr_k+r_jr_l)\bs_j\cdot(\bs_k\wedge\bs_l)\wp_2(\ta_k-\ta_l)\Bigg\} f_2(\ta_j-\ta_k) \nonumber \\
&\; +\frac12 \sum_{j=1}^{\cN} \sum_{k\neq j}^{\cN} \big(r_k(\bs_j\wedge\bb_j)\cdot\bs_k)-r_j\bs_j\cdot(\bs_k\wedge\bb_k)\big) f_2'(\ta_j-\ta_k) \nonumber \\
=&\; - \frac12\sum_{j=1}^{\cN}\sum_{k\neq j}^{\cN}\sum_{l\neq j,k}^{\cN} (\bs_j\wedge\bs_k)\cdot\bs_l \big((r_jr_k+r_kr_l)\wp_2(\ta_j-\ta_l)-(r_jr_k+r_jr_l)\wp_2(\ta_k-\ta_l)\big)f_2(\ta_j-\ta_k) \nonumber \\
&\; -\frac12\sum_{j=1}^{\cN}\sum_{k\neq j}^{\cN} (\bs_j\wedge\bs_k)\cdot(r_j\bb_j+r_j\bb_k)f_2'(\ta_j-\ta_k),
\end{align}
where we have again used \eqref{eq:cyclic} in the last step. Hence, differentiating \eqref{eq:Rsum} with respect to $t$ and inserting \eqref{eq:R1dot} and \eqref{eq:R2dot}, we see that the terms in $\bb_j$, $\bb_k$ cancel and we are left with
\begin{align}\label{eq:Rdot1}
\dot{R}=\dot{R}^{(1)}+\dot{R}^{(2)}=&\; \frac12\sum_{j=1}^{\cN}\sum_{k\neq j}^{\cN}  \sum_{l\neq j,k}^{\cN}  \bs_j \cdot(\bs_k\wedge\bs_l)\big(r_jr_l\zeta_2(\ta_k-\ta_l)+r_kr_l\zeta_2(\ta_j-\ta_l)\big)f_2'(\ta_j-\ta_k) \nonumber \\
&\; - \frac12\sum_{j=1}^{\cN}\sum_{k\neq j}^{\cN}\sum_{l\neq j,k}^{\cN} (\bs_j\wedge\bs_k)\cdot\bs_l \big\{(r_jr_k+r_kr_l)\wp_2(\ta_j-\ta_l) \nonumber \\
&\; \phantom{- \frac12\sum_{j=1}^{\cN}\sum_{k\neq j}^{\cN}\sum_{l\neq j,k}^{\cN} (\bs_j\wedge\bs_k)\cdot\bs_l \big(}-(r_jr_k+r_jr_l)\wp_2(\ta_k-\ta_l)\big\}f_2(\ta_j-\ta_k) \nonumber \\
=&\; \frac12\sum_{j=1}^{\cN}\sum_{k\neq j}^{\cN}  \sum_{l\neq j,k}^{\cN}  \bs_j \cdot(\bs_k\wedge\bs_l) \big\{-r_jr_l\partial_{\ta_k}\big( \zeta_2(\ta_k-\ta_l)f_2(\ta_j-\ta_k)   \big) \nonumber \\
&\; \phantom{\frac12\sum_{j=1}^{\cN}\sum_{k\neq j}^{\cN}  \sum_{l\neq j,k}^{\cN}  \bs_j \cdot(\bs_k\wedge\bs_l) \big\{} +r_kr_l \partial_{\ta_j}\big(\zeta_2(\ta_j-\ta_l)f_2(\ta_j-\ta_k)\big) \nonumber \\
&\;  \phantom{\frac12\sum_{j=1}^{\cN}\sum_{k\neq j}^{\cN}\sum_{l\neq j,k}^{\cN} \bs_j \cdot(\bs_k\wedge\bs_l) \big\{} -r_jr_k\partial_{\ta_l}\big(\zeta_2(\ta_j-\ta_l)-\zeta_2(\ta_k-\ta_l)\big)f_2(\ta_j-\ta_k)\big\}.
\end{align}
By permuting indices $k\leftrightarrow l$ and $j\leftrightarrow l$ in the first and second terms in the summand, respectively, we find that
\begin{equation}\label{eq:Rdot2}
\dot{R}= -\frac12 \sum_{j=1}^{\cN}\sum_{k\neq j}^{\cN}  \sum_{l\neq j,k}^{\cN} r_jr_k \bs_j\cdot(\bs_k\wedge\bs_l)\partial_{\ta_l} g(\ta_j,\ta_k,\ta_l),
\end{equation}
where
\begin{align}\label{eq:g}
g(\ta_j,\ta_k,\ta_l)\coloneqq &\; \zeta_2(\ta_k-\ta_l)f_2(\ta_j-\ta_l)-\zeta_2(\ta_j-\ta_l)f_2(\ta_k-\ta_l) \nonumber \\
&\; +\big(\zeta_2(\ta_j-\ta_l)-\zeta_2(\ta_k-\ta_l)\big)f_2(\ta_j-\ta_k).
\end{align}
The function $g(\ta_j,\ta_k,\ta_l)$ is a meromorphic function of $\ta_l$ with no poles in the parallelogram $\Pi$ defined by vertices at $(0,0)$, $(2\ell,0)$, $(0,-2\ii\delta)$, and $(2\ell,-2\ii\delta)$. It follows from \eqref{eq:realperiod}--\eqref{eq:imperiod} that $g(\ta_j,\ta_k,\ta_l+2\ii\delta)=g(\ta_j,\ta_j,\ta_l)$ and 
\begin{equation}\label{eq:gperiod}
g(\ta_j,\ta_k,\ta_l+2\ell)=g(\ta_j,\ta_k,\ta_l)-\frac{\pi}{\delta}\big(f_2(\ta_j-\ta_l)-f_2(\ta_k-\ta_l)\big)+\bigg(\frac{\pi}{\delta}\bigg)^2\big(\zeta_2(\ta_j-\ta_l)-\zeta_2(\ta_k-\ta_l)\big).
\end{equation} 

By adding certain terms to $g(\ta_j,\ta_k,\ta_l)$, we obtain a function $\tilde{g}(\ta_j,\ta_k,\ta_l)$ which is doubly-periodic with respect to $\ta_l$ and has no poles $\ta_l\in \Pi$ and thus, by Liouville's theorem, is a constant function of $\ta_l$. Let
\begin{equation}\label{eq:gt}
\tilde{g}(\ta_j,\ta_k,\ta_l)\coloneqq g(\ta_j,\ta_k,\ta_k)-h(\ta_j-\ta_l)+h(\ta_k-\ta_l)
\end{equation}
where
\begin{equation}\label{eq:h}
h(z)\coloneqq \zeta_2(z)\big(f_2(z)-f_2(0)\big)-\frac23\bigg(\zeta_2(z)^3+3\frac{\zeta(\ii\delta)}{\ii\delta}\zeta_2(z)+\frac12\wp_2'(z)\bigg).
\end{equation}
The function $h(z)$ is seen to be regular at $z=0$ using the Laurent series
\begin{equation}\label{eq:laurent}
\zeta_2(z)=\frac{1}{z}-\frac{\zeta(\ii\delta)}{\ii\delta} z+O(z^3),\qquad \wp_2(z)=\frac{1}{z^2}-\frac{\zeta(\ii\delta)}{\ii\delta}+O(z^2)
\end{equation}
as $z\to 0$; the series in \eqref{eq:laurent} follow from those for $\zeta(z)$ and $\wp(z)$ \cite[Chapter~23.9]{DLMF} and the definitions of $\zeta_2(z)$ \eqref{eq:zeta2} and $\wp_2(z)$ \eqref{eq:wp2}. Hence the functions $h(\ta_j-\ta_l)$ and $h(\ta_k-\ta_l)$ are regular for $\ta_l\in\Pi$. Moreover, $h(z)$ is $2\ii\delta$-periodic and satisfies the identity
\begin{align}\label{eq:hperiod}
h(\ta_j-\ta_l-2\ell)-h(\ta_k-\ta_l-2\ell)=&\; h(\ta_j-\ta_k)-h(\ta_k-\ta_l)-\frac{\pi}{\delta}\big(f_2(\ta_j-\ta_l)-f_2(\ta_k-\ta_l)\big) \nonumber \\
&\;+\bigg(\frac{\pi}{\delta}\bigg)^2\big(\zeta_2(\ta_j-\ta_l)-\zeta_2(\ta_k-\ta_l)\big),
\end{align}
by \eqref{eq:h} with \eqref{eq:realperiod}--\eqref{eq:imperiod}.

The function $\tilde{g}(\ta_j,\ta_k,\ta_l)$ \eqref{eq:gt} is thus analytic in $\ta_l$ when $\ta_l\in\Pi$. Using \eqref{eq:gperiod} and \eqref{eq:hperiod}, it follows that $\tilde{g}(\ta_j,\ta_k,\ta_l+2\ell)=\tilde{g}(\ta_j,\ta_k,\ta_l+2\ii\delta)=\tilde{g}(\ta_j,\ta_k,\ta_l)$. Hence $\tilde{g}(\ta_j,\ta_k,\ta_l)$ is constant with respect to $\ta_l$. 

Inserting \eqref{eq:gt} into \eqref{eq:Rdot2} gives 
\begin{align}
\dot{R}=&\; -\frac12 \sum_{j=1}^{\cN}\sum_{k\neq j}^{\cN}  \sum_{l\neq j,k}^{\cN} r_jr_k \bs_j\cdot(\bs_k\wedge\bs_l)\partial_{\ta_l}\tilde{g}(\ta_j,\ta_k,\ta_l) \nonumber \\
&\; -\frac{1}{2}\sum_{j=1}^{\cN}\sum_{k\neq j}^{\cN}  \sum_{l\neq j,k}^{\cN} r_jr_k \bs_j\cdot(\bs_k\wedge\bs_l)\partial_{\ta_l}\big(h(\ta_j-\ta_l)-h(\ta_k-\ta_l)   \big).
\end{align}
The sum in the first line vanishes because $\partial_{\ta_k}\tilde{g}(\ta_j,\ta_k,\ta_l)=0$. The second sum vanishes by \eqref{eq:constraint3SH} and \eqref{eq:constraint1SH}. We conclude that $\dot{R}=0$. 

\section{B\"{a}cklund transformation}\label{sec:Backlund}

We prove the Bäcklund transformation between the elliptic spin CM systems \eqref{eq:sCM1} and \eqref{eq:sCM2} stated in Theorem~\ref{thm:backlund} in Section~\ref{subsec:Backlundproof}. Building on this result and using the results of Sections~\ref{sec:solitons} and \ref{sec:conserved}, we prove Theorem~\ref{thm:main} in Section~\ref{subsec:mainproof}.

\subsection{Proof of Theorem~\ref{thm:backlund}}\label{subsec:Backlundproof}

This proof consists in deriving the deriving second-order equations
\begin{equation}\label{eq:ajddotSH}
\ddot{a}_j=-\sum_{k\neq j}^{\cN}(1+r_jr_k) \bs_j\cdot\bs_k \wp_2'(a_j-a_k) \quad (j=1,\ldots,\cN),
\end{equation}
equivalent to \eqref{eq:sCMa}, \eqref{eq:sCMb}, via the notation \eqref{eq:shorthand}, as a consequence of the first-order equations \eqref{eq:phidot}, \eqref{eq:sCMs}, \eqref{eq:sCMt}, and \eqref{eq:ajdot} in the form \eqref{eq:phidotSH}, \eqref{eq:sjdotSH}, and \eqref{eq:ajdotSH}. We use that the constraints \eqref{eq:constraint1}--\eqref{eq:constraint3} hold on $[0,T)$ by Proposition~\ref{prop:conserved} and Lemma~\ref{lem:totalspin}.

Recalling the definition of $\bb_j$ \eqref{eq:bj}, \eqref{eq:ajdotSH} can be written as
\begin{equation}\label{eq:ajdotSH2}
\dot{a}_j \bs_j= - r_j \bs_j \wedge \bb_j  \quad (j=1,\ldots,\cN).
\end{equation}
Differentiating \eqref{eq:ajdotSH2} with respect to $t$ and rearranging gives
\begin{equation}\label{eq:ajddot1}
\ddot{a}_j \bs_j = -\dot{a}_j\dot{\bs}_j - r_j \dot{\bs}_j\wedge\bb_j-r_j\bs_j\wedge\dbb_j. 
\end{equation}
We compute the terms on the right hand side of \eqref{eq:ajddot1}. Using \eqref{eq:sjdotSH} and then \eqref{eq:ajdotSH},  
\begin{align}\label{eq:ajddot2}
-\dot{a}_j\dot{\bs}_j-r_j \dot{\bs}_j\wedge\bb_j= &\; \sum_{k\neq j}^{\cN} (1+r_jr_k) \dot{a}_j \bs_j \wedge\bs_k \wp_2(a_j-a_k) + \sum_{k\neq j}^{\cN} r_j (1+r_j r_k) (\bs_j\wedge\bs_k)\wedge\bb_j \wp_2(a_j-a_k) \nonumber \\
=&\; -\sum_{k\neq j}^{\cN} r_j(1+r_jr_k)(\bs_j\wedge\bb_j)\wedge\bs_k \wp_2(a_j-a_k) \nonumber \\
&\; +\sum_{k\neq j}^{\cN}r_j (1+ r_jr_k) (\bs_j\wedge\bs_k)\wedge\bb_j \wp_2(a_j-a_k),
\end{align}
To simplify, we use $r_j^2=1$, the standard vector identities
\begin{equation}\label{eq:VecId}
(\bx\wedge\by)\wedge\bz=-(\by\cdot\bz)\bx+(\bx\cdot\bz)\by,\qquad \bx\wedge(\by\wedge\bz)=(\bx\cdot\bz)\by-(\bx\cdot\by)\bz,
\end{equation}
and \eqref{eq:constraint2} in the form $\bb_j\cdot\bs_j=0$. Hence, 
\begin{equation}\label{eq:ajddot4}
-\dot{a}_j\dot{\bs}_j-r_j \dot{\bs}_j\wedge\bb_j =- \sum_{k\neq j}^{\cN} (r_j+r_k) (\bs_j\cdot\bs_k)\bb_j \wp_2(a_j-a_k).
\end{equation}

To compute the remaining term in \eqref{eq:ajddot1}, we first differentiate \eqref{eq:bj} with respect to $t$ to find
\begin{equation}\label{eq:bjdot1}
\dbb_j= \ii \dbphi-\sum_{k\neq j}^{\cN} r_k\dot{\bs}_k\zeta_2(\ta_j-\ta_k)+\sum_{k\neq j}^{\cN} r_k\bs_k \wp_2(\ta_j-\ta_k)(\dot{a}_j-\dot{a}_k) 
\end{equation}
where we have used that $\dot{\tilde{a}}_j=\dot{a}_j$. Taking the cross product with $-r_j\bs_j$ and using \eqref{eq:sjdotSH} gives
\begin{align}\label{eq:bjdot2}
-r_j\bs_j\wedge \dbb_j=&\;  -\ii r_j \bs_j\wedge\dbphi+\sum_{k\neq j}^{\cN} r_j r_k\bs_j\wedge\dot{\bs}_k\zeta_2(\ta_j-\ta_k)-\sum_{k\neq j}^{\cN} r_jr_k\bs_j\wedge\bs_k \wp_2(\ta_j-\ta_k)(\dot{a}_j-\dot{a}_k) \nonumber \\
=&\;  -\ii r_j \bs_j\wedge\dbphi-\sum_{k\neq j}^{\cN}\sum_{l\neq k}^{\cN} r_j (r_k+r_l)\bs_j\wedge(\bs_k\wedge\bs_l)\zeta_2(\ta_j-\ta_k)\wp_2(\ta_k-\ta_l) \nonumber \\
&\; -\sum_{k\neq j}^{\cN}  r_jr_k\big((\dot{a}_j\bs_j)\wedge\bs_k-\bs_j\wedge(\dot{a}_k\bs_k)\big) \wp_2(\ta_j-\ta_k).
\end{align}
The double sum in \eqref{eq:bjdot2} can be rewritten as
\begin{multline}\label{eq:firstsum1}
\sum_{k\neq j}^{\cN}\sum_{l\neq k}^{\cN} r_j (r_k+r_l)\bs_j\wedge(\bs_k\wedge\bs_l)\zeta_2(\ta_j-\ta_k)\wp_2(\ta_k-\ta_l) \\
= \sum_{k\neq j}^{\cN} r_j(r_k+r_j) \bs_j\wedge(\bs_k\wedge\bs_j)\zeta_2(\ta_j-\ta_k)\wp_2(\ta_k-\ta_j) +\sum_{k\neq j}^{\cN}\sum_{l\neq j,k}^{\cN} r_j (r_k+r_l)\bs_j\wedge(\bs_k\wedge\bs_l)\zeta_2(\ta_j-\ta_k)\wp_2(\ta_k-\ta_l) \\
\begin{aligned}= &\; \sum_{k\neq j}^{\cN} (1+r_jr_k) \bs_j\wedge(\bs_k\wedge\bs_j)\zeta_2(\ta_j-\ta_k)\wp_2(\ta_j-\ta_k)\\
&\; +\sum_{k\neq j}^{\cN}\sum_{l\neq j,k}^{\cN} r_j (r_k+r_l)\bs_j\wedge(\bs_k\wedge\bs_l)\big(\zeta_2(\ta_j-\ta_k-\zeta_2(\ta_j-\ta_l)\wp_2(\ta_k-\ta_l),
\end{aligned}
\end{multline}
using $r_j^2=1$ and the parity properties of $\wp_2(z)$ and $\zeta_2(z)$ \eqref{eq:parity} in the second step. Then, the second identity in \eqref{eq:VecId} and the constraint \eqref{eq:constraint1} yield
\begin{multline}\label{eq:firstsum2}
\sum_{k\neq j}^{\cN}\sum_{l\neq k}^{\cN} r_j (r_k+r_l)\bs_j\wedge(\bs_k\wedge\bs_l)\zeta_2(\ta_j-\ta_k)\wp_2(\ta_k-\ta_l) \\
 \begin{aligned}= &\; -\sum_{k\neq j}^{\cN} (1+r_jr_k) (\bs_j\cdot\bs_k)\bs_j \zeta_2(\ta_j-\ta_k)\wp_2(\ta_k-\ta_j)\\
&\; +\sum_{k\neq j}^{\cN}\sum_{l\neq j,k}^{\cN} r_j (r_k+r_l)\big((\bs_j\cdot\bs_l)\bs_k-(\bs_j\cdot\bs_k)\bs_l  \big)\big(\zeta_2(\ta_j-\ta_k)-\zeta_2(\ta_j-\ta_l)\big)\wp_2(\ta_k-\ta_l).
\end{aligned}
\end{multline}
We simplify the remaining sum in \eqref{eq:bjdot2} using \eqref{eq:ajdotSH}, $r_j^2=r_k^2=1$, and \eqref{eq:VecId}:
\begin{align}\label{eq:secondsum}
r_jr_k\big((\dot{a}_j\bs_j)\wedge\bs_k-\bs_j\wedge(\dot{a}_k\bs_k)\big)=&\; -r_k(\bs_j\wedge\bb_j)\wedge\bs_k+r_j\bs_j\wedge(\bs_k\wedge\bb_k) \nonumber \\
=&\;  r_k(\bb_j\cdot\bs_k)\bs_j-r_k(\bs_j\cdot\bs_k)\bb_j +r_j(\bs_j\cdot\bb_k)\bs_k-r_j(\bs_j\cdot\bs_k)\bb_k.
\end{align}
By using \eqref{eq:firstsum2} and \eqref{eq:secondsum} in \eqref{eq:bjdot2}, we arrive at
\begin{align}\label{eq:bjdot3}
-r_j\bs_j\wedge \dbb_j=&\;  -\ii r_j \bs_j\wedge\dbphi+\sum_{k\neq j}^{\cN} (1+r_jr_k) (\bs_j\cdot\bs_k)\bs_j \zeta_2(\ta_j-\ta_k)\wp_2(\ta_k-\ta_j) \nonumber \\
&\; -\sum_{k\neq j}^{\cN}\sum_{l\neq j,k}^{\cN} r_j (r_k+r_l)\big((\bs_j\cdot\bs_l)\bs_k-(\bs_j\cdot\bs_k)\bs_l  \big)\big(\zeta_2(\ta_j-\ta_k)-\zeta_2(\ta_j-\ta_l)\big)\wp_2(\ta_k-\ta_l) \nonumber \\
&\; -\sum_{k\neq j}^{\cN} \big(r_k(\bb_j\cdot\bs_k)\bs_j-r_k(\bs_j\cdot\bs_k)\bb_j +r_j(\bs_j\cdot\bb_k)\bs_k-r_j(\bs_j\cdot\bs_k)\bb_k \big) \wp_2(\ta_j-\ta_k),
\end{align}
Then inserting \eqref{eq:ajddot4} and \eqref{eq:bjdot3} into \eqref{eq:ajddot1}, we get
\begin{align}\label{eq:ajddot5}
\ddot{a}_j \bs_j =&\; -\ii r_j \bs_j\wedge\dbphi+\sum_{k\neq j}^{\cN} (1+r_jr_k) (\bs_j\cdot\bs_k)\bs_j \zeta_2(\ta_j-\ta_k)\wp_2(\ta_k-\ta_j) \nonumber \\
&\; -\sum_{k\neq j}^{\cN}\sum_{l\neq j,k}^{\cN} r_j (r_k+r_l)\big((\bs_j\cdot\bs_l)\bs_k-(\bs_j\cdot\bs_k)\bs_l  \big)\big(\zeta_2(\ta_j-\ta_k)-\zeta_2(\ta_j-\ta_l)\big)\wp_2(\ta_k-\ta_l) \nonumber \\
&\; -\sum_{k\neq j}^{\cN} \big(r_k(\bb_j\cdot\bs_k)\bs_j+r_j(\bs_j\cdot\bb_k)\bs_k-r_j(\bs_j\cdot\bs_k)\bb_k \big) \wp_2(\ta_j-\ta_k),\nonumber \\
=&\; -\ii r_j \bs_j\wedge\dbphi+\sum_{k\neq j}^{\cN} (1+r_jr_k) (\bs_j\cdot\bs_k)\bs_j \zeta_2(\ta_j-\ta_k)\wp_2(\ta_k-\ta_j) \nonumber \\
&\; -\sum_{k\neq j}^{\cN}\sum_{l\neq j,k}^{\cN} r_j (r_k+r_l)\big((\bs_j\cdot\bs_l)\bs_k-(\bs_j\cdot\bs_k)\bs_l  \big)\big(\zeta_2(\ta_j-\ta_k)-\zeta_2(\ta_j-\ta_l)\big)\wp_2(\ta_k-\ta_l) \nonumber \\
&\; -\sum_{k\neq j}^{\cN} \big(r_k((\bb_j-\bb_k)\cdot\bs_k)\bs_j-r_j(\bs_j\cdot(\bb_j-\bb_k))\bs_k+r_j(\bs_j\cdot\bs_k)(\bb_j-\bb_k) \big) \wp_2(\ta_j-\ta_k),
\end{align}
using \eqref{eq:constraint2} in the form $\bs_j\cdot\bb_j=\bs_k\cdot\bb_k=0$ in the second step.

The remainder of the proof consists of using known expressions for $(\bb_j-\bb_k)\wp_2(\ta_j-\ta_k)$ \eqref{eq:bjbk2} and $\dbphi$ \eqref{eq:phidotSH} and elliptic function identities to show that \eqref{eq:ajddot5} becomes \eqref{eq:ajddotSH}.

We insert \eqref{eq:bjbk2} (which was derived under the assumption \eqref{eq:constraint3}) into \eqref{eq:ajddot5} to obtain, after combining some terms,
\begin{align}\label{eq:ajddot6}
\ddot{a}_j\bs_j =&\;  -\ii r_j \bs_j\wedge\dbphi+\sum_{k\neq j}^{\cN} (1+r_jr_k) (\bs_j\cdot\bs_k)\bs_j \zeta_2(\ta_j-\ta_k)\wp_2(\ta_j-\ta_k) \nonumber \\
&\;  -\frac12 \sum_{k\neq j}^{\cN} (1+r_jr_k) (\bs_j\cdot\bs_k)\bs_j \wp_2'(\ta_j-\ta_k)    \nonumber \\
&\;  +\sum_{k\neq j}^{\cN}\sum_{l\neq j,k}^{\cN} r_k r_l (\bs_l\cdot\bs_k)\bs_j \big(\zeta_2(\ta_j-\ta_k)-\zeta_2(\ta_j-\ta_l)\big)\wp_2(\ta_k-\ta_l) \nonumber \\
&\; +\frac12 \sum_{k\neq j}^{\cN} \sum_{l\neq j,k}^{\cN}r_k  r_l(\bs_l \cdot\bs_k)\bs_j f_2'(\ta_k-\ta_l)   -\sum_{k\neq j}^{\cN}\sum_{l\neq j,k}^{\cN} r_jr_l \big((\bs_j\cdot \bs_l)\bs_k-(\bs_j\cdot\bs_k)\bs_l  \big) f_2'(\ta_k-\ta_l).
\end{align}
Since $\wp(z)$ is an even function \eqref{eq:parity}, the double sum in the third line of \eqref{eq:ajddot6} is antisymmetric under the interchange of $k$ and $l$ and hence vanishes. The first double sum in the fourth line of \eqref{eq:ajddot6} similarly vanishes by symmetry, because $f_2'(z)$ is an odd function \eqref{eq:parity}. We again use the identity \eqref{eq:EllipticId1}, leading to, after some rearrangement,
\begin{align}\label{eq:ajddot7}
\ddot{a}_j\bs_j =&\;  -\ii r_j \bs_j\wedge\dbphi-\sum_{k\neq j}^{\cN} (1+r_jr_k) (\bs_j\cdot\bs_k)\bs_j \wp_2'(\ta_j-\ta_k)  -\frac12 \sum_{k\neq j}^{\cN} (1+r_jr_k)(\bs_j\cdot\bs_k)\bs_j f_2'(\ta_j-\ta_k) \nonumber \\
&\; -\frac12 \sum_{k\neq j}^{\cN}\sum_{l\neq j,k}^{\cN} r_j r_l \big((\bs_j\cdot\bs_l)\bs_k-(\bs_j\cdot\bs_k)\bs_l\big)f_2'(\ta_k-\ta_l).
\end{align}
The second identity in \eqref{eq:VecId} and
\begin{multline}
 \sum_{k\neq j}^{\cN}\sum_{l\neq j,k}^{\cN} r_j r_l \big((\bs_j\cdot\bs_l)\bs_k-(\bs_j\cdot\bs_k)\bs_l\big)f_2'(\ta_k-\ta_l) \\
 =\sum_{k=1}^{\cN}\sum_{l\neq k}^{\cN} r_j r_l \big((\bs_j\cdot\bs_l)\bs_k-(\bs_j\cdot\bs_k)\bs_l\big)f_2'(\ta_k-\ta_l)-\sum_{k\neq j}^{\cN} (1+r_jr_k)(\bs_j\cdot\bs_k)\bs_j f_2'(\ta_j-\ta_k)
\end{multline}
lead to 
\begin{align}\label{eq:ajddot8}
\ddot{a}_j\bs_j =&\;  -\ii r_j \bs_j\wedge\dbphi-\sum_{k\neq j}^{\cN} (1+r_jr_k) (\bs_j\cdot\bs_k)\bs_j \wp_2'(\ta_j-\ta_k) -\frac12 \sum_{k=1}^{\cN}\sum_{l\neq k}^{\cN} r_j r_l \bs_j\wedge(\bs_k\wedge\bs_l)    f_2'(\ta_k-\ta_l).
\end{align}
Symmetrizing the double sum (using the antisymmetry of $\wedge$ and the fact that $f_2'(z)$ is an odd function \eqref{eq:parity}) and inserting \eqref{eq:phidotSH} gives the result \eqref{eq:ajddotSH} after recalling that $\ta_j-\ta_k=a_j-a_k$ for $r_j=r_k$. 

\subsection{Proof of Theorem~\ref{thm:main}}\label{subsec:mainproof}

We first show that the assumptions of the theorem imply those of Proposition~\ref{prop:firstorder}.

By assumption, $\bphi$, $\{a_j,\bs_j\}_{j=1}^N$, and $\{b_j,\bt_j\}_{j=1}^M$ in the statement of the theorem is a solution of the following initial value problem (IVP) for some choice of initial conditions.

\begin{IVP}\label{IVP1}
Find $\bphi$, $\{a_j,\bs_j\}_{j=1}^N$, and $\{b_j,\bt_j\}_{j=1}^M$ such that
\begin{itemize}
\item \eqref{eq:sCM1}, \eqref{eq:sCM2}, and \eqref{eq:phidot} hold on a subset of $[0,T)$ containing $t=0$
\item the initial conditions
\begin{equation}\label{eq:picarddata1}
\begin{split}
&a_{j}(0)=a_{j,0},\qquad  \bs_j(0)=\bs_{j,0} \quad (j=1,\ldots,N), \\
&b_j(0)=b_{j,0},\qquad \bt_j(0)=\bt_{j,0} \quad  (j=1,\ldots,M),
\end{split}
\end{equation}
and 
\begin{equation}\label{eq:picarddata2}
 \dot{a}_j(0)=\dot{a}_{j,0} \quad (j=1,\ldots,N), \qquad \dot{b}_j(0)=\dot{b}_{j,0} \quad (j=1,\ldots,M)
\end{equation}
satisfy \eqref{eq:ajdot} at $t=0$
\end{itemize}
\end{IVP}

Because \eqref{eq:imaj}--\eqref{eq:ajakbjbk} hold, the functions of the form $F\big(\{a_j,\bs_j\}_{j=1}^N, \{b_j,\bt_j\}_{j=1}^M \big)$ defining the ODEs \eqref{eq:sCM1}, \eqref{eq:sCM2}, and \eqref{eq:phidot} are locally Lipshitz near the solution.  Standard uniqueness results on ODEs, for instance \cite[Theorem~4.18]{logemann2014}, thus guarantee that the solution is the unique solution of IVP~\ref{IVP1} on any interval $[0,T')\subseteq [0,T)$.

We will relate our known solution of IVP~\ref{IVP1} to a solution of the \eqref{eq:sCMs}, \eqref{eq:sCMt}, \eqref{eq:phidot}, and \eqref{eq:ajdot} in Theorem~\ref{thm:backlund}.

Note that when $\bs_j\neq \boldsymbol{0}$ and $\bt_j\neq \boldsymbol{0}$, \eqref{eq:ajdot} may be written as
\begin{equation}
\begin{split}\label{eq:ajdotIVP} 
\dot{a}_j=&\; -\frac{\bs_j}{\bs_j\cdot\bs_j^*}\wedge \Bigg(\ii\bphi-\sum_{k\neq j}^N\bs_k\zeta_2(a_j-a_k)+\sum_{k=1}^M \bt_k \zeta_2(a_j-b_k+\ii\delta)\Bigg)  \quad (j=1,\ldots,N),     \\
\dot{b}_j=&\; \frac{\bt_j}{\bt_j\cdot\bt_j^*}\wedge \Bigg(\ii\bphi+\sum_{k\neq j}^M \bt_k\zeta_2(b_j-b_k)-\sum_{k=1}^N \bs_k \zeta_2(b_j-a_k+\ii\delta)\Bigg) \quad (j=1,\ldots,M).
\end{split}
\end{equation}
We consider the following IVP. 

\begin{IVP}\label{IVP2}
Find $\bphi$, $\{a_j,\bs_j\}_{j=1}^N$, and $\{b_j,\bt_j\}_{j=1}^M$ such that
\begin{itemize}
\item \eqref{eq:sCMs}, \eqref{eq:sCMt}, \eqref{eq:phidot}, and \eqref{eq:ajdotIVP} hold on a subset of $[0,T)$ containing $t=0$
\item the initial conditions \eqref{eq:picarddata1} hold
\end{itemize}
\end{IVP}

By standard arguments (see, for instance, \cite[Chapters~4.2--4.3]{logemann2014}), IVP~\ref{IVP2} has a unique local solution which may be extended to a unique solution on a maximal interval $[0,T')\subseteq [0,T)$ where (i) the functions of the form $F\big(\{a_j,\bs_j\}_{j=1}^N, \{b_j,\bt_j\}_{j=1}^M \big)$ defining the ODEs  \eqref{eq:sCMs}, \eqref{eq:sCMt}, \eqref{eq:phidot}, and \eqref{eq:ajdotIVP} are locally Lipschitz; this is guaranteed when the conditions
\begin{equation}\label{eq:ajbk}
a_j-b_k+\ii\delta \neq 0 \bmod \Lambda \quad (j=1,\ldots,N,k=1,\ldots,M),
\end{equation}
where 
\begin{equation}\label{eq:Lambda}
\Lambda\coloneqq \{2n\ell+2m\ii\delta: n,m\in \Z\},
\end{equation}
\eqref{eq:ajakbjbk} ($\mathrm{mod}\,\Lambda$), and \eqref{eq:sjtj} hold and (ii) each function $\bphi$, $\{a_j,\bs_j\}_{j=1}^N$, and $\{b_j,\bt_j\}_{j=1}^M$ remains finite. In the case where $T'<T$, one of the conditions (i), (ii) must be violated as $t\to T'$. 

We denote the maximal solution of IVP~\ref{IVP2} by $\hat{\bphi}$, $\{\hat{a}_j,\hat{\bs}_j\}_{j=1}^N$, and $\{\hat{b}_j,\hat{\bt}_j\}_{j=1}^M$. Theorem~\ref{thm:backlund} shows that this solution of IVP~\ref{IVP2} is also a solution of IVP~\ref{IVP1} on $[0,T')$. 
Suppose $T'<T$. We now consider two cases. 

In the first case, suppose that each of the quantities $|\hat{a}_j-\hat{b}_k+2n\ell +(2m+1)\ii\delta|$, $\hat{\bs}_j\cdot\hat{\bs}_j^*$, and $\mspace{-1.5mu}\hat{\mspace{1.5mu} \bt}_k\cdot\mspace{-1.5mu}\hat{\mspace{1.5mu} \bt}_k^*$ (where $j=1,\ldots,N$, $k=1,\ldots,M$, $n,m\in\Z$, and $|\cdot|$ is the modulus) is bounded from below by some $\epsilon>0$ on $[0,T')$. By Theorem~\ref{thm:backlund}, this gives a solution of IVP~\ref{IVP1} that either violates \eqref{eq:ajakbjbk} or becomes unbounded as $t\to T'$. We have constructed a maximal solution of IVP~\ref{IVP1} on a proper subinterval of $[0,T)$, a contradiction.

In the second case, suppose that either \eqref{eq:ajbk} is violated or least one of the quantities $\hat{\bs}_j\cdot\hat{\bs}_j^*$ and $\mspace{-1.5mu}\hat{\mspace{1.5mu} \bt}_k\cdot\mspace{-1.5mu}\hat{\mspace{1.5mu} \bt}_k^*$ tends to $0$ as $t\to T'$. It follows that either \eqref{eq:imaj} or \eqref{eq:sjtj} fails to hold in the limit $t\to T'$. By Theorem~\ref{thm:backlund}, we have a solution of IVP~\ref{IVP1} on $[0,T')$ such that  \eqref{eq:imaj} or \eqref{eq:sjtj} is violated as $t\to T'$.
Because the known solution of IVP~\ref{IVP1} is unique and satisfies \eqref{eq:imaj} and \eqref{eq:sjtj} on $[0,T)$, this is a contradiction. 

We conclude that $T'=T$ and so IVP~\ref{IVP2} admits a unique maximal solution on $[0,T)$. By Theorem~\ref{thm:backlund} and the uniqueness of the known solution $\bphi$, $\{a_j,\bs_j\}_{j=1}^N$, and $\{b_j,\bt_j\}_{j=1}^M$ to IVP~\ref{IVP1}, we see that this known solution solves IVP~\ref{IVP2} on $[0,T)$. It follows that assumptions of Proposition~\ref{prop:firstorder} are satisfied. 

We now use Propositions~\ref{prop:constraints}, \ref{prop:firstorder}, and \ref{prop:conserved} to show that under the conditions of the theorem, the ansatz \eqref{eq:ansatz} solves the periodic ncIHF equation \eqref{eq:ncIHF} and each component of this solution has constant length $\rho$.

By Proposition~\ref{prop:firstorder}, \eqref{eq:ansatz} with the solution $\bphi$, $\{a_j,\bs_j\}_{j=1}^N$, and $\{b_j,\bt_j\}_{j=1}^M$ of IVP~\ref{IVP1} solves the periodic ncIHF equation at all $t\in [0,T)$ where the derivatives of $\bu$ and $\bv$ with respect to $x$ and $t$ exist. It remains to show that this solution satisfies $\bu(x,t)^2=\bv(x,t)^2=\rho^2$ for all $x\in [-\ell,\ell)$ and $t\in [0,T)$. By Proposition~\ref{prop:constraints}, this holds provided \eqref{eq:constraint1}--\eqref{eq:constraint3} and \eqref{eq:constraint4} are satisfied. Each of the constraints \eqref{eq:constraint1}--\eqref{eq:constraint3} and \eqref{eq:constraint4} are satisfied at $t=0$ by assumption. By Lemma~\ref{lem:totalspin} and Proposition~\ref{prop:conserved}, these constraints hold on $[0,T)$ and hence the theorem follows. 

\section{Examples of solutions}\label{sec:explicit}

We construct examples of solutions of the periodic ncIHF equation \eqref{eq:ncIHF} using Theorem~\ref{thm:main}. Analogs of one-soliton traveling wave solutions known for the ncIHF equation on the real line \cite{berntsonklabbers2020} are given in Section~\ref{subsec:tw}. We use an elliptic parameterization of $S^2$ to construct a class of real initial data for the periodic ncIHF equation satisfying the constraints of Theorem~\ref{thm:main} in Section~\ref{subsec:parameterization}. The results of Section~\ref{subsec:parameterization} are used to obtain a breather-type solution of the periodic ncIHF equation in Section~\ref{subsec:breather}. 

Sections~\ref{subsec:parameterization}--\ref{subsec:breather} are concerned with real-valued solutions
\begin{align}\label{eq:ansatzreal}
\left(\begin{array}{c} \bu(x,t) \\ \bv(x,t)   \end{array}\right)=  \bphi(t)\left(\begin{array}{c} 1 \\ 1    \end{array}\right) &\; +\ii \sum_{j=1}^N \bs_j(t)\left(\begin{array}{c} \zeta_2(x-a_j(t)+\ii\delta/2) \\ \zeta_2(x-a_j(t)-\ii\delta/2)    \end{array}\right) \nonumber \\
&\; -\ii \sum_{j=1}^N \bs_j^*(t) \left(\begin{array}{c} \zeta_2(x-a_j^*(t)-\ii\delta/2) \\ \zeta_2(x-a_j^*(t)+\ii\delta/2)    \end{array}\right), 
\end{align}
of the periodic ncIHF equation satisfying $\bu(x,t)=\bv(x,t)^2=1$. Such solutions are characterized by the following consistent reduction of Theorem~\ref{thm:main} where
\begin{equation}\label{eq:reduction}
N=M, \qquad \rho=1,\qquad  \bphi^*=\bphi, \qquad b_j=a_j^*, \qquad \bt_j=\bs_j^* \quad (j=1,\ldots,N). 
\end{equation}

\begin{corollary}\label{cor:main}
For $N\in \Z_{\geq 1}$ and $T>0$, let $\bphi$ and $\{a_j,\bs_j\}_{j=1}^N$ be a solution of the equations \eqref{eq:sCM1} and
\begin{equation}\label{eq:phidotreal}
\dbphi=\frac{\ii}{2} \sum_{j=1}^N\sum_{k\neq j}^N \bs_j\wedge\bs_k f_2'(a_j-a_k)- \frac{\ii}{2}\sum_{j=1}^N\sum_{k\neq j}^N \bs_j^*\wedge\bs_k^* f_2'(a_j^*-a_k^*)
\end{equation}
on the interval $[0,T)$ with initial conditions that satisfy the following equations at $t=0$,
\begin{align}
&\bs_j^2=0, \\
&\bs_j\cdot\Bigg(\ii\bphi-\sum_{k\neq j}^{N} \bs_k\zeta_2(a_j-a_k)+\sum_{k=1}^N \bs_k^* \zeta_2(a_j-a_k^*+\ii\delta)\Bigg)=0, \\
&\bs_j \dot{a}_j = -\bs_j\wedge\Bigg(\ii\bphi-\sum_{k\neq j}^N \bs_k\zeta_2(a_j-a_k)+\sum_{k=1}^N \zeta_2(a_j-a_k^*+\ii\delta)\Bigg) \label{eq:ajdotreal},
\end{align}
for $j=1,\ldots,N$ and 
\begin{equation}
\bphi^2=1+\frac12 \sum_{j=1}^N\sum_{k\neq j}^N \bs_j\cdot\bs_k f_2(a_j-a_k)+\frac12\sum_{j=1}^N\sum_{k\neq j}^N \bs_j^*\cdot\bs_k^* f_2(a_j^*-a_k^*)-\sum_{j=1}^N\sum_{k=1}^N \bs_j\cdot\bs_k^*f_2(a_j-a_k^*+\ii\delta).
\end{equation}
Moreover, suppose that the conditions
\begin{equation}
\frac{\delta}{2}< \im\,a_j(t)<\frac{3\delta}{2} \quad (j=1,\ldots,N),\qquad a_j(t)\neq a_k(t)\quad (1\leq j<k\leq N)
\end{equation}
hold for $t\in [0,T)$. Then, for all $t\in [0,T)$ such that the functions $\bu(x,t)$ and $\bv(x,t)$ in \eqref{eq:ansatzreal} are differentiable with respect to $x$ and $t$ for all $x\in [-\ell,\ell)$, \eqref{eq:ansatzreal} provides an exact solution to the periodic ncIHF equation \eqref{eq:ncIHF} satisfying $\bu(x,t)^2=\bv(x,t)^2=1$. 
\end{corollary}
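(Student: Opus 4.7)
The plan is to deduce Corollary~\ref{cor:main} from Theorem~\ref{thm:main} by verifying that the consistent reduction \eqref{eq:reduction} (i.e., $M=N$, $\rho=1$, $b_j=a_j^*$, $\bt_j=\bs_j^*$, $\bphi^*=\bphi$) carries the hypotheses of the corollary to those of the theorem. I would carry out three checks: (a) the reduction is consistent with the dynamical system \eqref{eq:sCM1}, \eqref{eq:sCM2}, and \eqref{eq:phidot}; (b) the four listed initial constraints together with the reduction imply \eqref{eq:constraint1}--\eqref{eq:constraint3} and \eqref{eq:constraint4} at $t=0$; (c) the ansatz \eqref{eq:ansatz} specializes to the real-valued ansatz \eqref{eq:ansatzreal}.

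The tool behind all three checks is the observation that the lattice $\{2n\ell+2m\ii\delta:n,m\in\Z\}$ is invariant under complex conjugation (as $\ell,\delta>0$), which yields $\zeta_2(z^*)=\zeta_2(z)^*$, $\wp_2(z^*)=\wp_2(z)^*$, $\wp_2'(z^*)=\wp_2'(z)^*$, and $f_2(z^*)=f_2(z)^*$ (one checks separately that the constants $\zeta(\ii\delta)/(\ii\delta)$ appearing in \eqref{eq:wp2} and \eqref{eq:zeta2} are real). For (a), I would conjugate \eqref{eq:sCMa}, \eqref{eq:sCMs}, and the first line of \eqref{eq:ajdot} under $(a_j,\bs_j)\to (a_j^*,\bs_j^*)$ to reproduce \eqref{eq:sCMb}, \eqref{eq:sCMt}, and the second line of \eqref{eq:ajdot}, using the $2\ii\delta$-periodicity of $\zeta_2$ to absorb the sign flip on the $\ii\delta$-shifts generated by conjugation. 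The right-hand side of \eqref{eq:phidotreal} has the form $\ii(X-X^*)$ with $X=\tfrac{1}{2}\sum_{j\neq k}\bs_j\wedge\bs_k f_2'(a_j-a_k)$, so it is real, and hence $\bphi(t)$ remains real on $[0,T)$. For (b), the four displayed conditions match the $j=1,\ldots,N$ halves of \eqref{eq:constraint1}, \eqref{eq:constraint2}, \eqref{eq:ajdot}, and \eqref{eq:constraint4}; the $j=N+1,\ldots,2N$ halves follow as their complex conjugates. The constraint \eqref{eq:constraint3} becomes $\sum_j(\bs_j-\bs_j^*)=\boldsymbol{0}$, an additional reality condition implicit in \eqref{eq:reduction} and preserved in time by Lemma~\ref{lem:totalspin}.

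For (c), direct substitution of the reduction into \eqref{eq:ansatz} immediately yields \eqref{eq:ansatzreal}, and a componentwise application of the same conjugation identities for $x\in\R$ confirms $\bu(x,t)^*=\bu(x,t)$ and $\bv(x,t)^*=\bv(x,t)$. Invoking Theorem~\ref{thm:main} then completes the argument, producing \eqref{eq:ansatzreal} as an exact solution of \eqref{eq:ncIHF} satisfying $\bu^2=\bv^2=1$. The main obstacle I anticipate is simply the bookkeeping around \eqref{eq:constraint3}, which is not explicitly written in the corollary but is implicit in the reduction \eqref{eq:reduction}; the remaining work is routine, consisting of the conjugation symmetry of the Weierstrass lattice and the $2\ii\delta$-periodicity of $\zeta_2$.
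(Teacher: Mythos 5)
Your proposal is correct and matches the paper's (implicit) argument: the paper offers no written proof of Corollary~\ref{cor:main}, simply declaring it to be the consistent reduction \eqref{eq:reduction} of Theorem~\ref{thm:main}, and your conjugation-symmetry checks (lattice invariance under $z\mapsto z^*$, $2\ii\delta$-periodicity of $\zeta_2$, reality of $\dbphi$) are exactly what is needed to justify that declaration. Your observation that the reality of $\bphi_0$ and the condition $\im\sum_j\bs_{j,0}=\boldsymbol{0}$ (needed for \eqref{eq:constraint3}) are left implicit in the corollary's hypotheses is apt, and your proof correctly supplies them from \eqref{eq:reduction}.
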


\subsection{Solutions that are sums of traveling waves}\label{subsec:tw}

As mentioned in Remark~\ref{rem:complex}, Theorem~\ref{thm:main} does not include real traveling wave solutions when $N=M=1$. In fact, as we will show, the class of solutions with $N=M=1$ does not contain any traveling waves but instead consists of solutions that are the sums of two traveling waves moving in opposite directions.

When $N=M=1$, the constraint \eqref{eq:constraint3} implies that $\bs_1=\bt_1$. Consequently, the constraints \eqref{eq:constraint1}, \eqref{eq:constraint2}, and \eqref{eq:constraint4} at $t=0$ are reduced to
\begin{equation}\label{eq:twconstraints}
\bs_{1,0}^2=0,\qquad \bs_{1,0}\cdot \bphi_0=0, \qquad \bphi_0^2=\rho^2,
\end{equation}
respectively. The general solution of the first constraint in \eqref{eq:twconstraints} is \cite[Lemma~B.1]{berntsonklabbers2020}
\begin{equation}\label{eq:sgensol}
\bs_{1,0}=s_{1,0}(\bn_1+\ii\bn_2),
\end{equation}
where $s_{1,0}\in \C$ and $\bn_1,\bn_2\in S^2$ such that $\bn_1\cdot\bn_2=0$. By expanding $\bphi_0$ in the basis $\{\bn_1,\bn_2,\bn_1\wedge\bn_2\}$ of $\C^3$, $\bphi_0=\phi_{0,1}\bn_1+\phi_{0,2}\bn_2+\phi_{0,12}\bn_{1}\wedge\bn_{2}$, we see that the general solution of the second and third constraints in \eqref{eq:twconstraints} with \eqref{eq:sgensol} is
\begin{equation}\label{eq:phi0}
\bphi_0=\phi_{1,0}(\bn_1+\ii\bn_2)+\rho \bn_1\wedge\bn_2,
\end{equation}
with $\phi_{1,0}\in\C$ arbitrary. Moreover, the equations of motion \eqref{eq:phidot}, \eqref{eq:sCM1}, and \eqref{eq:sCM2} reduce to 
\begin{equation}\label{eq:tweom}
\dbphi=\boldsymbol{0}, \qquad \dot{\bs}_1=0, \qquad \ddot{a}_1=0, \qquad \ddot{b}_1=0.
\end{equation}
which may be integrated to $\bphi=\bphi_0$, $\bs_1=\bs_{1,0}$, $a_1=a_{1,0}+\dot{a}_{1,0} t$, $b_1=b_{1,0}+\dot{b}_{1,0} t$. Imposing \eqref{eq:ajdot} and using $\bs_{1,0}\wedge(\ii\bphi_0)=-\rho\bs_{1,0}$, which follows from \eqref{eq:VecId} with \eqref{eq:sgensol}--\eqref{eq:phi0}, gives $\dot{a}_{1,0}=\rho$ and $\dot{b}_{1,0}=-\rho$. We have arrived at the following class of exact solutions of the periodic ncIHF equation \eqref{eq:ncIHF},
\begin{align}\label{eq:tw1}
\left(\begin{array}{c} \bu(x,t) \\ \bv(x,t) \end{array}\right)=&\;  \big(\phi_{1,0}(\bn_1+\ii\bn_2)+\rho \bn_1\wedge\bn_2\big)\left(\begin{array}{c} 1 \\ 1 \end{array}\right) \nonumber \\
&\; +\ii\big(s_{1,0}(\bn_1+\ii\bn_2)\big)\left(\begin{array}{c} \zeta_2(x-a_{1,0}-\rho t+\ii\delta/2)-\zeta_2(x-b_{1,0}+\rho t-\ii\delta/2) \\   \zeta_2(x-a_{1,0}-\rho t-\ii\delta/2)-\zeta_2(x-b_{1,0}+\rho t+\ii\delta/2)   \end{array}\right),
\end{align}
where $\phi_{1,0},s_{1,0},\rho\in \C$ and $\bn_1,\bn_2\in S^2$ are arbitrary and $a_{1,0}$ and $b_{1,0}$ satisfy \eqref{eq:ajakbjbk} (at $t=0$). We note that in the case $\rho\notin \R$, the condition \eqref{eq:imaj} will be violated in finite time, after which Theorem~\ref{thm:main} does not guarantee that \eqref{eq:tw1} provides a solution.

The argument above can be generalized to give the following solutions in the case $N=M$ ($N\geq 1$),
\begin{align}\label{eq:tw2}
\left(\begin{array}{c} \bu(x,t) \\ \bv(x,t) \end{array}\right)=&\;  \big(\phi_{1,0}(\bn_1+\ii\bn_2)+\rho \bn_1\wedge\bn_2\big)\left(\begin{array}{c} 1 \\ 1 \end{array}\right) \nonumber \\
&\; +\ii\big(s_{1,0}(\bn_1+\ii\bn_2)\big)\sum_{j=1}^N \left(\begin{array}{c} \zeta_2(x-a_{j,0}-\rho t+\ii\delta/2)-\zeta_2(x-b_{j,0}+\rho t-\ii\delta/2) \\   \zeta_2(x-a_{j,0}-\rho t-\ii\delta/2)-\zeta_2(x-b_{j,0}+\rho t+\ii\delta/2)   \end{array}\right).
\end{align}
where $\phi_{1,0},s_{1,0},\rho\in \C$ and $\bn_1,\bn_2\in S^2$ are arbitrary and the $a_{j,0}$ and $b_{j,0}$ ($j=1,\ldots,N$) satisfy \eqref{eq:ajakbjbk} (at $t=0$). Similar remarks as above, concerning the finite-time existence of the solution \eqref{eq:tw2} with $\rho\notin \R$, apply.

\begin{remark}
The absence of nontrivial traveling wave solutions for the periodic ncIHF equation in Theorem~\ref{thm:main} is surprising in view of the the rich structure of analogous solutions, obtainable via pole ansatz \cite{berntsonklabbers2020}, for the half wave maps equation \cite{zhou2015,lenzmann2018}. We regard the classification of traveling wave solutions of the periodic ncIHF equation as an interesting open problem. 
\end{remark}

\subsection{Initial data from an elliptic parameterization of the two-sphere}\label{subsec:parameterization}

One way to find initial data satisfying \eqref{eq:constraint1}--\eqref{eq:constraint3} and \eqref{eq:constraint4} is by considering the following parameterization of the two-sphere, defined by a map from $\R^2$ to $S^2$, 
\newcommand{\Q}{\mathbb{Q}}
\newcommand{\N}{\mathbb{N}}
\begin{equation}\label{eq:R2toS2}
(x_1,x_2)\mapsto\big( \sn(x_1 | m) \cn(x_2 |m), \sn(x_1|m)\sn(x_2|m), \cn(x_1|m) \big),
\end{equation} 
where $\sn(\cdot|m)$ and $\cn(\cdot|m)$ are the Jacobi sine and cosine functions with elliptic modulus $m$. The $S^2$-valuedness of \eqref{eq:R2toS2} can be shown using the identity \eqref{eq:sn2cn2}. Requisite details on the functions $\sn(z|m)$ and $\cn(z|m)$ and the elliptic integrals\footnote{In this context only, the prime in $K'=K'(m)$ does not indicate differentiation with respect to the argument; see \eqref{eq:Kp} for the definition of this function.} $K=K(m)$ and $K'=K'(m)$, which determine the periods of Jacobi elliptic functions, can be found in Appendix~\ref{app:elliptic}. The functions $\sn(z|m)$ and $\cn(z|m)$ are elliptic functions of $z$ with half-periods $(2K,\ii K')$ and $(2K,K+\ii K')$, respectively. Both functions have simple poles at 
\begin{equation}\label{eq:xijk}
\xi_{jk} \coloneqq 2j K + (2k+1)\ii K' \quad  (j,k\in \Z)
\end{equation}
with corresponding residues 
\begin{equation}\label{eq:residues}
\underset{z=\xi_{jk}}{\mathrm{Res}}\sn(z|m)=\frac{(-1)^j}{\sqrt{m}},\qquad \underset{z=\xi_{jk}}{\mathrm{Res}}\cn(z|m)=-\frac{\ii(-1)^{j+k}}{\sqrt{m}}.
\end{equation}

Below, in Proposition~\ref{prop:jacobi}, we show that a specialization of the map \eqref{eq:R2toS2},
\begin{equation}\label{eq:r}
\br(x)\coloneqq \big(\sn(px|m)\cn(q(x-x_0)|m),\sn(px|m)\sn(q(x-x_0)|m),\cn(px|m)\big),
\end{equation}
for positive integers $p,q$ and real $x_0$, can be used to construct real initial data satisfying the conditions of Theorem~\ref{thm:main}, where the parameters $\bphi_0$, $\bs_{j,0}$, and $a_{j,0}$ satisfy the constraints \eqref{eq:constraint1}--\eqref{eq:constraint3} and \eqref{eq:constraint4} with $N=M$, $\bt_j=\bs_j^*$, and $b_j=a_j^*$ at $t=0$. Note that the primitive periods of the function $\br(x)$ in \eqref{eq:r} are $4K(m)$ and $4\ii K'(m)$ and thus we set $\ell=2K(m)$, $\ii\delta=2\ii K'(m)$ as the half-periods of the $\zeta_2$-function in \eqref{eq:ansatzreal}.
\begin{proposition}\label{prop:jacobi}
Let $m \in (0,1)$, $p,q \in \Z_{\geq 1}$, and $x_0\in (0,4K(m))$ such that the sets
\begin{equation}\label{eq:A1}
\mathcal{A}_1\coloneqq \big\{\alpha_{jk}^{(1)}:0\leq j\leq 2p-1,0\leq k\leq p-1\big\},\qquad \alpha_{jk}^{(1)}\coloneqq \frac{\xi_{jk}}{p}
\end{equation}
and 
\begin{equation}\label{eq:A2}
\mathcal{A}_2\coloneqq \big\{\alpha_{jk}^{(2)}:0\leq j\leq 2q-1,0\leq k\leq q-1\big\},\qquad \alpha_{jk}^{(2)}\coloneqq \frac{\xi_{jk}}{q}+x_0
\end{equation}
are disjoint. Then, \eqref{eq:ansatzreal} with $N=2(p^2+q^2)$, $\ell=2K(m)$, $\ii\delta=2\ii K'(m)$,
\begin{equation}\label{eq:ajsj1}
\begin{split}
a_{jp+k+1,0}=&\; \alpha_{jk}^{(1)}+\frac{\ii\delta}{2}, \\
\bs_{jp+k+1,0}=&\; \frac{-\ii (-1)^j}{ p \sqrt{m}}\left(\begin{array}{c}
\cn\big(q\big(\alpha_{jk}^{(1)}-x_0\big)\big) \\
\sn\big(q\big(\alpha_{jk}^{(1)}-x_0\big)\big) \\
-\ii (-1)^{k}
\end{array}\right)
\end{split} \quad (0\leq j\leq 2p-1,0\leq k\leq p-1),
\end{equation}
\begin{equation}\label{eq:ajsj2}
\begin{split}
a_{2p^2+j q+k+1,0}=&\; \alpha_{jk}^{(2)}+\frac{\ii\delta}{2}, \\
\bs_{2p^2+j q+k+1,0}=&\; \frac{-\ii (-1)^{j}}{ q \sqrt{m}}\left(\begin{array}{c}
-\ii(-1)^k\sn\big( p \alpha_{jk}^{(2)}  \big) \\
\sn\big( p \alpha_{jk}^{(2)}\big)  \\
0
\end{array}\right)
\end{split} \quad (0\leq j\leq 2q-1,0\leq k\leq q-1),
\end{equation}
and
\begin{align}\label{eq:phi02}
\bphi_0= \left(\begin{array}{c} 0 \\ 0 \\ 1 \end{array}\right)+\frac{1}{\sqrt{m}} &\left(\sum_{j=0}^{2p-1}\sum_{k=0}^{p-1} \frac{(-1)^j}{p}\zeta_2\big(\alpha_{jk}^{(1)}\big)\left(\begin{array}{c}
\cn\big(q\big(\alpha_{jk}^{(1)}-x_0\big)\big) \\
\sn\big(q\big(\alpha_{jk}^{(1)}-x_0\big)\big) \\
-\ii(-1)^k
 \end{array}\right)\right. \nonumber \\
 & \left.+\sum_{j=0}^{2q-1}\sum_{k=0}^{q-1} \frac{(-1)^j}{q}\zeta_2\big(\alpha_{jk}^{(2)}\big)\left(\begin{array}{c}
-\ii(-1)^k\sn\big(p\alpha_{jk}^{(2)}\big) \\
\sn\big(p\alpha_{jk}^{(2)}\big) \\
0
 \end{array}\right)+\mathrm{c.c.}\right)
\end{align}
(where $\mathrm{c.c.}$ denotes the complex conjugate of the terms within the parentheses) provides initial data for the periodic ncIHF equation satisfying the conditions of Theorem~\ref{thm:main}.
\end{proposition}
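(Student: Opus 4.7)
The plan is to show that the ansatz \eqref{eq:ansatzreal} at $t=0$, with the parameters \eqref{eq:ajsj1}--\eqref{eq:phi02}, coincides with the explicit $S^2$-valued function $\br(x)$ of \eqref{eq:r}; all constraints of Theorem~\ref{thm:main} then follow via Proposition~\ref{prop:constraints}. Once $\bu(x,0) = \br(x)$ is established, the identity \eqref{eq:sn2cn2} gives $\bu(x,0)^2 = 1$, and since the ansatz satisfies $\bv(x,0) = \bu(x-\ii\delta,0)$ by the $2\ii\delta$-periodicity of $\zeta_2$, a direct shift computation using $\sn(z - 2\ii K'|m) = \sn(z|m)$ and $\cn(z - 2\ii K'|m) = -\cn(z|m)$ (with the integer multipliers $p,q$ producing sign factors $(-1)^{2p}$, $(-1)^{2q}$ that are unity) yields $\bv(x,0)^2 = \br(x - 2\ii K')^2 = 1$ as well.

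The equality $\bu(x,0) = \br(x)$ will be established by a Liouville argument applied to their difference $\bu(x,0) - \br(x)$. Both functions are meromorphic and $4\ii K' = 2\ii\delta$-periodic (the ansatz by $\zeta_2$-periodicity; $\br$ because $\sn(p\,\cdot|m)$ is $2\ii K'$-periodic and $\cn(p\,\cdot|m)$ is $4\ii K'$-periodic, and likewise for the factors of argument $q(\cdot-x_0)$). I would verify that (i) the poles and residues of $\bu(x,0) - \br(x)$ cancel in one fundamental parallelogram; (ii) this difference is $2\ell = 4K$-periodic; and (iii) it vanishes at $x=0$. Together these force $\bu(x,0) \equiv \br(x)$.

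Step (i) is the main computational core. Using the $2\ii K'$-periodicity of $\sn(pz|m)$, the pole set of $\sn(pz|m)$ in $[0,4K)\times[0,4\ii K')$ splits as $\mathcal{A}_1 \cup (\mathcal{A}_1+2\ii K')$ modulo $\Lambda$, and that of $\sn(q(z-x_0)|m)$ as $\mathcal{A}_2 \cup (\mathcal{A}_2+2\ii K')$. The ansatz has poles at the $\alpha_j \in \mathcal{A}_1\cup\mathcal{A}_2$ (from the $\bs_j$ terms) and at the conjugates $\alpha_j^*$ (from the $\bs_j^*$ terms), and one checks $\alpha_{jk}^{(1)*} \equiv \alpha_{j,\,p-1-k}^{(1)} + 2\ii K' \pmod{\Lambda}$ (and analogously for $\mathcal{A}_2$), so the two pole sets coincide. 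Residues of $\br(x)$ at each $\alpha_{jk}^{(1)}$, computed from \eqref{eq:residues} with the chain-rule factor $1/p$ and the regular values $\cn(q(\alpha_{jk}^{(1)}-x_0)|m)$, $\sn(q(\alpha_{jk}^{(1)}-x_0)|m)$, exactly reproduce $\ii\bs_{jp+k+1,0}$ of \eqref{eq:ajsj1}; an analogous computation at $\alpha_{jk}^{(2)}$ gives \eqref{eq:ajsj2}. Residues at the conjugate poles match automatically from the conjugation symmetry $\br(x^*)^* = \br(x)$ paired with the reduction $\bt_j = \bs_j^*$, $b_j = a_j^*$. For (ii), the sum of residues of the elliptic function $\br(x)$ in one period vanishes, forcing $\ii\sum_j(\bs_{j,0} - \bs_{j,0}^*) = \bzero$, which is exactly \eqref{eq:constraint3} and cancels the $2\ell$-quasi-periodicity jump of $\zeta_2$ in the ansatz. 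For (iii), $\br(0) = (0,0,1)^{\trans}$ (since $\sn(0|m)=0$, $\cn(0|m)=1$) and the oddness of $\zeta_2$ rewrite $\bu(0,0)$ as $\bphi_0 - \big(\ii\sum_j\bs_{j,0}\zeta_2(\alpha_j) + \mathrm{c.c.}\big)$; comparison with \eqref{eq:phi02} yields the required equality at $x=0$.

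Finally, the positional conditions \eqref{eq:imaj}--\eqref{eq:sjtj} at $t=0$ follow by direct inspection of \eqref{eq:ajsj1}--\eqref{eq:ajsj2}: $\im\,a_{j,0} \in (K',3K') = (\delta/2,3\delta/2)$; the $a_{j,0}$ are pairwise distinct by the structure of $\mathcal{A}_1, \mathcal{A}_2$ together with the disjointness hypothesis; and each spin is nonzero, since the first-block third components equal $(-1)^{j+k+1}/(p\sqrt{m})$, while the second-block spins are nonzero whenever $\sn(p\alpha_{jk}^{(2)}|m)\neq 0$---a condition equivalent to $\alpha_{jk}^{(2)}$ being a genuine pole of $\br(x)$. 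The hard part is step (i): the careful identification of the conjugate poles $\alpha_j^*$ with points in the upper half of the period via the $2\ii K'$-periodicity of $\sn$, and the consistent tracking of the sign factors $(-1)^j$, $(-1)^{j+k}$ from \eqref{eq:residues} through the chain rule and complex conjugation.
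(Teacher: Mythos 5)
Your proposal is correct and follows essentially the same route as the paper: the paper's Lemma~\ref{lem:jacobi} is exactly your Liouville/residue-matching argument (poles at $\mathcal{A}_1\cup\mathcal{A}_2$ and their conjugates, residues from \eqref{eq:residues} with the chain-rule factors $1/p$, $1/q$, the vanishing residue sum supplying $2\ell$-periodicity, and evaluation at $x=0$ to fix $\bphi_0$), after which both you and the paper invoke Proposition~\ref{prop:constraints} to obtain the constraints from $\bu_0(x)^2=\bv_0(x)^2=1$. The only cosmetic differences are that the paper deduces $\bv_0^2=1$ from the structure of \eqref{eq:UdotU3} rather than from your shift identity $\bv(x,0)=\bu(x-\ii\delta,0)=\br(x-2\ii K')$ (where, incidentally, the sign picked up by $\cn$ under the shift is $(-1)^p$, not $(-1)^{2p}$, though this is immaterial since it squares away), and that you additionally spell out the verification of \eqref{eq:imaj}--\eqref{eq:sjtj} at $t=0$.
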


\begin{remark}
The sets $\mathcal{A}_1$ and $\mathcal{A}_2$ contain the poles of the functions $\sn(px|m)$ and $\cn(q(x-x_0)|m)$ (equivalently $\sn(q(x-x_0)|m)$, see \eqref{eq:xijk}), respectively for $x$ in $[0,2\ell)\times \ii[0,\delta)$. The corresponding full sets of poles within the period parallelogram $[0,2\ell)\times \ii[-\delta,\delta)$ are $\mathcal{A}_1\cup \mathcal{A}_1^*$ and $\mathcal{A}_2\cup \mathcal{A}_2^*$, respectively. In view of \eqref{eq:xijk}, it is natural to label the elements of $\mathcal{A}_1$ by non-negative integers $j,k$ satisfying $0\leq j\leq 2p-1$, $0\leq k\leq p-1$ and similarly for $\mathcal{A}_2$, see \eqref{eq:A1}-\eqref{eq:A2}. On the other hand, the poles $a_j$ in \eqref{eq:ansatzreal} are labelled by a single index $j\in \{1,\ldots,N=2(p^2+q^2)\}$. To bridge this gap, the subscripts in \eqref{eq:ajsj1}--\eqref{eq:ajsj2} define a bijection between the underlying index sets of (i) the $\alpha_{jk}^{(1)}$, $\alpha_{jk}^{(2)}$ in \eqref{eq:A1}--\eqref{eq:A2} and (ii) the $a_j$ in \eqref{eq:ansatzreal}.
\end{remark}

\subsubsection{Proof of Proposition~\ref{prop:jacobi}}

We begin by writing the function $\br(x)$ in \eqref{eq:r} in terms of the function $\zeta_2(z)$ \eqref{eq:zeta2}.

\begin{lemma}\label{lem:jacobi}
The components $\big(r^1(x),r^2(x),r^3(x)\big)$ of the function $\br(x)$ in \eqref{eq:r} can be decomposed in terms of the function $\zeta_2(z;\ell,\ii\delta)$ with half-periods $\ell=2K(m)$ and $\ii\delta = 2\ii K'(m)$ as follows,
\begin{align}
\label{eq:jacobi_decomposition}
r^1(x)=&\;  \frac{1}{\sqrt{m}}\sum_{j=0}^{2p-1} \sum_{k=0}^{p-1}  \frac{(-1)^{j}}{p} \cn \big(q\big(\alpha_{jk}^{(1)} -x_0\big)\big)\big( \zeta_2\big(x-\alpha_{jk}^{(1)}    \big) +\zeta_2\big(\alpha_{jk}^{(1)}\big) \big)  \nonumber \\
&\;- \frac{\ii}{\sqrt{m}} \sum_{j=0}^{2q-1} \sum_{k=0}^{q-1} \frac{(-1)^{j+k}}{q} \sn\big(p \alpha_{jk}^{(2)}\big) \big( \zeta_2\big(x-\alpha_{jk}^{(2)}   \big)+\zeta_2\big( \alpha_{jk}^{(2)}    \big)\big) + \mathrm{c.c.},  \nonumber  \\
r^2(x) =&\;  \frac{1}{\sqrt{m}}\sum_{j=0}^{2p-1} \sum_{k=0}^{p-1} \frac{(-1)^{j}}{p} \sn\big(q\big(\alpha_{jk}^{(1)} -x_0\big)\big)\big( \zeta_2\big(x-\alpha_{jk}^{(1)} \big) +\zeta_2\big(\alpha_{jk}^{(1)}\big) \big)   \nonumber \\
&\;+ \frac{1}{\sqrt{m}}\sum_{j=0}^{2q-1} \sum_{k=0}^{q-1} \frac{(-1)^{j}}{q} \sn\big(p \big( \alpha_{jk}^{(2)} \big) \big) \big( \zeta_2\big(x-\alpha_{jk}^{(2)}\big) +\zeta_2\big(\alpha_{jk}^{(2)}\big) \big) + \mathrm{c.c.},   \nonumber \\
r^3(x)-1=&\; -\frac{\ii}{\sqrt{m}} \sum_{j=0}^{2p-1} \sum_{k=0}^{p-1}\frac{(-1)^{j+k}}{p} \big(\zeta_2\big(x-\alpha_{jk}^{(1)}\big)+\zeta_2\big(\alpha_{jk}^{(1)}\big)\big)+ \mathrm{c.c.},
\end{align}
where $\mathrm{c.c.}$ denotes the complex conjugate of the written terms.
\end{lemma}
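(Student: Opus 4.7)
The proof proposal rests on the classical partial-fraction representation of elliptic functions in terms of $\zeta$-type functions. Specifically, if $f$ is an elliptic function with periods $2\ell, 2\ii\delta$, simple poles $\{z_j\}$ in a fundamental parallelogram and residues $\{c_j\}$ (which necessarily satisfy $\sum_j c_j = 0$), then the sum $\sum_j c_j \zeta_2(z - z_j)$ is doubly periodic with the same principal parts as $f$: indeed, $\zeta_2$ is $2\ii\delta$-periodic and transforms by the constant $\pi/\delta$ under $z \mapsto z + 2\ell$ (by \eqref{eq:realperiod}--\eqref{eq:imperiod}), and the latter constant drops out precisely because $\sum_j c_j = 0$. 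Consequently $f - \sum_j c_j \zeta_2(\cdot - z_j)$ is entire and elliptic, hence a constant, determined by evaluation at $z = 0$ (where each $r^j$ is regular), using the oddness of $\zeta_2$. This gives the canonical form
\begin{equation*}
f(z) = f(0) + \sum_j c_j\bigl[\zeta_2(z - z_j) + \zeta_2(z_j)\bigr],
\end{equation*}
which matches the structure of \eqref{eq:jacobi_decomposition}.

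I would then identify the poles and residues for each of $r^1, r^2, r^3 - 1$. Since $\sn(\cdot|m)$ and $\cn(\cdot|m)$ both have $4K$ and $2\ii K'$ among their periods, the rescaled factors $\sn(p\cdot|m), \cn(p\cdot|m), \sn(q(\cdot - x_0)|m), \cn(q(\cdot - x_0)|m)$ are elliptic with half-periods $(\ell, \ii\delta) = (2K, 2\ii K')$. Within the fundamental parallelogram $[0, 2\ell) + \ii[-\delta, \delta)$, the poles of the first pair are $\mathcal A_1 \cup \mathcal A_1^*$ and those of the second pair are $\mathcal A_2 \cup \mathcal A_2^*$; the hypothesis $\mathcal A_1 \cap \mathcal A_2 = \emptyset$ guarantees that all poles of the products $r^1, r^2$ are simple. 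The residues at the poles of $\sn(p\cdot|m), \cn(p\cdot|m)$ are obtained from \eqref{eq:residues} by the chain rule (division by $p$), yielding $(-1)^j/(p\sqrt m)$ and $-\ii(-1)^{j+k}/(p\sqrt m)$ respectively at $\alpha_{jk}^{(1)}$, and analogously for the $q$-factors at $\alpha_{jk}^{(2)}$; multiplication by the value of the other (regular) factor at the pole produces the residues of $r^1$ and $r^2$, while $r^3 - 1$ inherits residues directly from $\cn(p\cdot|m)$.

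To obtain the specific form with the "$+\mathrm{c.c.}$", I would use that each $r^j$ is real on the real axis and that $\sn(z^*|m) = \sn(z|m)^*, \cn(z^*|m) = \cn(z|m)^*$ for $m \in (0,1)$; hence the residues at the conjugate poles in $\mathcal A_i^*$ are the complex conjugates of those at $\mathcal A_i$, and $\zeta_2(z^*; \ell, \ii\delta) = \zeta_2(z; \ell, \ii\delta)^*$ when $\ell, \delta$ are real. Thus the contribution from $\mathcal A_i^*$ collapses into the complex conjugate of the contribution from $\mathcal A_i$. Together with the values $r^1(0) = r^2(0) = 0$ (from $\sn(0|m) = 0$) and $r^3(0) = \cn(0|m) = 1$, this yields \eqref{eq:jacobi_decomposition}. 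The main obstacle is not analytic but combinatorial: one must carefully verify that the index ranges defining $\mathcal A_1, \mathcal A_2$ together with their conjugates exhaust the pole set of the corresponding elliptic function in the chosen fundamental parallelogram, and that the signs $(-1)^j, (-1)^{j+k}$ coming from \eqref{eq:residues} are correctly propagated through the chain rule and the pairing with conjugate poles. Beyond this bookkeeping, no nontrivial elliptic-function identities are required.
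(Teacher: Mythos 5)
Your proposal is correct and follows essentially the same route as the paper's proof: both arguments build the decomposition from the blocks $\zeta_2(x-\alpha)+\zeta_2(\alpha)$, match poles and residues (computed from \eqref{eq:residues} via the chain rule and the value of the regular cofactor at the pole), use the vanishing of the total residue sum to restore $2\ell$-periodicity, invoke Liouville's theorem, and fix the additive constant by evaluating at $x=0$ together with the conjugation symmetry of the pole data to produce the ``$+\,\mathrm{c.c.}$'' form. The only cosmetic slip is that $2\ii K'$ is an anti-period rather than a period of $\cn(\cdot|m)$; since $4\ii K'$ is a genuine period, the conclusion that the rescaled factors are elliptic with half-periods $(2K,2\ii K')$ is unaffected.
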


\begin{proof}
We consider the function $\br(z)$ for $z\in \Pi\coloneqq [0,2\ell)\times \ii[-\delta,\delta)$, a (primitive) period parallelogram. The function $\br(z)$ has a pole at each element of $\mathcal{A}_1\cup \mathcal{A}_2\cup \mathcal{A}_1^*\cup \mathcal{A}_2^*$, with $\mathcal{A}_1$ and $\mathcal{A}_2$ defined in \eqref{eq:A1}--\eqref{eq:A2}. It follows from \eqref{eq:xijk}, \eqref{eq:residues}, and the definition of $\br(x)$ \eqref{eq:r} that
\begin{equation}\label{eq:residuesymmetry}
\underset{z=\big(\alpha_{jk}^{(1)}\big)^*}{\mathrm{Res}} \br(z)=\bigg(\underset{z=\alpha_{jk}^{(1)}}{\mathrm{Res}} \br(z)\bigg)^*,\qquad \underset{z=\big(\alpha_{jk}^{(2)}\big)^*}{\mathrm{Res}} \br(z)=\bigg(\underset{z=\alpha_{jk}^{(2)}}{\mathrm{Res}} \br(z)\bigg)^*.
\end{equation}
We will use this symmetry to obtain \eqref{eq:jacobi_decomposition}. Let
\begin{equation}\label{eq:g2}
g(z;\alpha)\coloneqq \zeta_2(z-\alpha)-\zeta_2(\alpha),
\end{equation}
a $2\ii\delta$-periodic meromorphic function of $z$ with simple poles at $z=\alpha \bmod \Lambda$, with $\Lambda$ as in \eqref{eq:Lambda}. Additionally, $g(0;\alpha)=0$ when $\alpha\neq 0\bmod\Lambda$; this follows from \eqref{eq:g2} and the fact that $\zeta(z)$ is an odd function \eqref{eq:parity}.

We claim that
\begin{equation}\label{eq:ransatz}
\br(x)-\br(0)=\sum_{j=0}^{2p-1}\sum_{k=0}^{p-1} \bigg(\underset{z=\alpha_{jk}^{(1)}}{\mathrm{Res}} \br(z)\bigg)  g\big(x;\alpha_{jk}^{(1)}\big)+\sum_{j=0}^{2q-1}\sum_{k=0}^{q-1} \bigg(\underset{z=\alpha_{jk}^{(2)}}{\mathrm{Res}} \br(z)\bigg)  g\big(x;\alpha_{jk}^{(2)}\big)+\mathrm{c.c.}
\end{equation}
To see this, we note that, by \eqref{eq:residuesymmetry}, the right-hand-side of \eqref{eq:ransatz} has the same poles and residues as $\br(z)-\br(0)$ within $\Pi$. Moreover, the right-hand-side of \eqref{eq:ransatz} is elliptic: while $2\ii\delta$-periodicity follows from that of $g(z;\alpha)$ via \eqref{eq:imperiod}, $2\ell$-periodicity is a consequence of the identities $g(z+2\ell;\alpha)=g(z;\alpha)+\pi/\delta$, which follows from \eqref{eq:realperiod}, and 
\begin{equation}
\sum_{j=0}^{2p-1}\sum_{k=0}^{p-1} \underset{z=\alpha_{jk}^{(1)}}{\mathrm{Res}} \br(z) +\sum_{j=0}^{2q-1}\sum_{k=0}^{q-1} \underset{z=\alpha_{jk}^{(2)}}{\mathrm{Res}} \br(z) +\mathrm{c.c.}=\boldsymbol{0},
\end{equation}
which holds using \eqref{eq:residuesymmetry} and the fact that the sum of residues within $\Pi$ of the elliptic function $\br(z)$ vanishes. Because both sides of \eqref{eq:ransatz} evaluate to $\boldsymbol{0}$ at $x=0$, by Liouville's theorem, \eqref{eq:ransatz} holds.

The result \eqref{eq:jacobi_decomposition} follows from \eqref{eq:ransatz} after inserting $\br(0)=(0,0,1)$ (because $\sn(0|m)=0$ and $\cn(0|m)=1$) and computing the residues using \eqref{eq:r} and \eqref{eq:residues}.
\end{proof}

We set $\bu_0(x)=\br(x)$. By comparing \eqref{eq:ansatzreal} with \eqref{eq:jacobi_decomposition}, we obtain \eqref{eq:ajsj1}--\eqref{eq:ajsj2}, and \eqref{eq:phi02}. Because $\br(x)^2=1$ by construction, we have $\bu_0(x)^2=1$. From \eqref{eq:UdotU3}, it is clear that $\bv_0(x)^2=1$ (with $\bv_0(x)$ given by \eqref{eq:ansatzreal} with \eqref{eq:ajsj1}--\eqref{eq:ajsj2}, and \eqref{eq:phi02}) if and only if $\bu_0(x)^2=1$. We now apply Proposition~\ref{prop:constraints} directly in the special case $N=M$, $\bt_j=\bs_j^*$, $b_j=a_j^*$, and $\rho=1$. Because $\bu_0(x)^2=\bv_0(x)^2=1$, we have that the constraints \eqref{eq:constraint1}-\eqref{eq:constraint3} and \eqref{eq:constraint4} are satisfied by \eqref{eq:ajsj1}--\eqref{eq:ajsj2}, and \eqref{eq:phi02}.

\subsubsection{Numerical implementation}\label{subsubsec:numerics}

In the source file of our submission, we have included a Mathematica notebook to visualize solutions of the periodic ncIHF equation with initial data in the form \eqref{eq:r}. Using Proposition~\ref{prop:jacobi}, such data may be transformed into the form \eqref{eq:ansatzreal} (a special case of \eqref{eq:ansatz}) to which Theorem~\ref{thm:main} applies. For chosen $p$, $q$, $m$, and $x_0$, our Mathematica notebook performs the transformation of Proposition~\ref{prop:jacobi} and uses the resulting parameters $a_{j,0}$, $\bs_{j,0}$, and $\bphi_0$ as initial conditions for the reduction \eqref{eq:reduction} of the ODE system in Theorem~\ref{thm:main}. By numerically solving these ODEs, we obtain numerical solutions of the periodic ncIHF equation in the form \eqref{eq:ansatzreal}. Visualizations of a particular solution obtained using this method are presented in Section~\ref{subsec:breather}.

\subsection{A breather solution}\label{subsec:breather}

We study a particular instance of the solution of the periodic ncIHF equation with initial data constructed using Proposition~\ref{prop:jacobi}. This solution exhibits energy oscillations reminiscent of well-known breather solutions of the nonlinear Schr\"{o}dinger \cite{akhmediev1986} and sine-Gordon equations \cite{ablowitz1973}. To be more specific, we will present numerical evidence of a solution of the ncIHF equation where the energy density is time-periodic but the solution itself is not. An explicit formula for the energy density of a solution \eqref{eq:ansatzreal} of the ncIHF equation is presented in Section~\ref{sec:energydensity}.

To avoid misunderstanding, we emphasize that the results presented in this subsection are primarily numerical:  a particular exact solution of the constraints \eqref{eq:constraint1}--\eqref{eq:constraint3} and \eqref{eq:constraint4} given in Section~\ref{subsec:parameterization} provides admissible initial data for Theorem~\ref{thm:main}; we numerically solve the equations of motion of the spin CM system \eqref{eq:sCM1} and background dynamics \eqref{eq:phidot} to evolve the solution \eqref{eq:ansatz} in time, using the method described in Section~\ref{subsubsec:numerics}. 

We set $p=q=1$ and $x_0=K$ in \eqref{eq:r} to obtain the following map from $\R$ to $S^2$. 
\begin{equation}\label{eq:rbreather}
\br(x)\coloneqq \big(\sn(x|m)\cn(x-K|m),\sn(x|m)\sn(x-K|m),\cn(x|m)\big).
\end{equation}
We set $m=1/2$, yielding $\ell=\delta=2K(1/2) \approx 3.708$. Using Proposition~\ref{prop:jacobi}, \eqref{eq:rbreather} can be written as \eqref{eq:ansatzreal} with $N=4$ and
\begin{align}\label{eq:breatherdata}
&a_{1,0}=2\ii K(1/2), \qquad a_{2,0}=(2+2\ii) K(1/2), \qquad a_{3,0}=(1+2\ii) K(1/2),\qquad a_{4,0}=(3+2\ii) K(1/2), \nonumber \\
&\bs_{1,0}=\big(\sqrt{2},2\ii,-\sqrt{2}\big),\qquad \bs_{2,0}=\big(\sqrt{2},2\ii,-\sqrt{2}\big),\qquad \bs_{3,0}=(-2,-2\ii,0),\qquad \bs_{4,0}=(-2,-2\ii,0), \nonumber \\
&\bphi_0\approx (0,1.694,0).
\end{align}

In accordance with Corollary~\ref{cor:main}, we solve \eqref{eq:sCM1} and \eqref{eq:phidotreal} subject to the initial conditions \eqref{eq:breatherdata} and with initial velocities computed from \eqref{eq:ajdotreal} (at $t=0$). The resulting dynamics for the poles $a_j$ are time-periodic with period $T\approx 11.83$. A visualization of the dynamics of the poles is shown in Fig.~\ref{fig:breather_poles}.

\begin{figure}
\centering
\begin{tikzpicture}[scale=0.9]
\def\a{2.25};
\def\b{2.5};
\def\d{4.5};
\node at (\d,0)
{
\begin{tikzpicture}
\node at (0,0) {\includegraphics[scale=0.65]{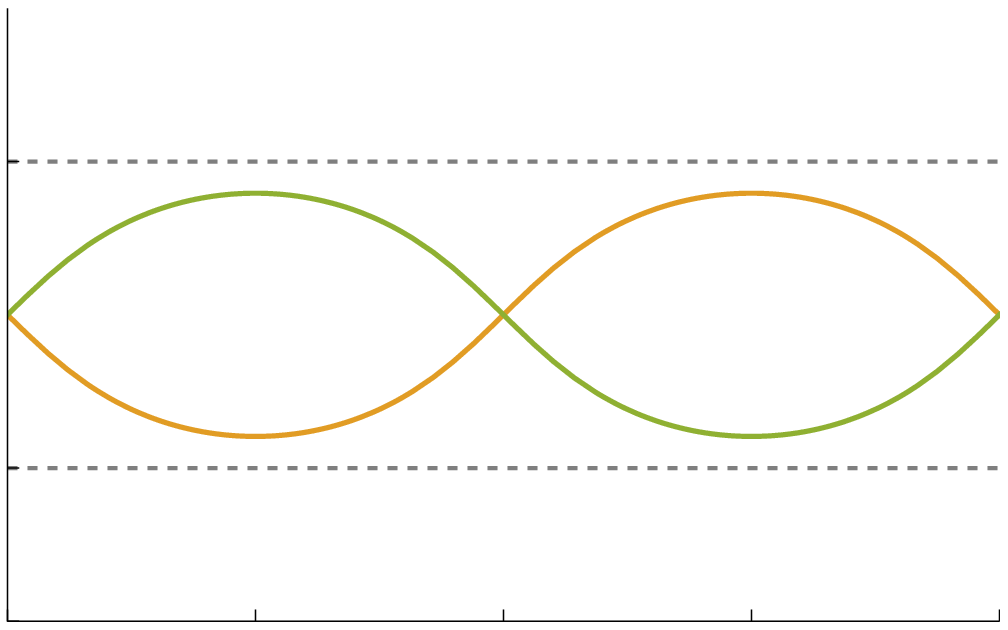}
};
\node at (-3.3,-\a+0.07) {\scriptsize{$0$}};
\node at (-1.6,-\a) {\scriptsize{$T/4$}};
\node at (0,-\a) {\scriptsize{$T/2$}};
\node at (1.6,-\a) {\scriptsize{$3T/4$}};
\node at (3.25,-\a) {\scriptsize{$T$}};

\node at (3.4,-1.65) {$t$};
\node at (-3.05,2.2) {\small $\mathrm{Im}\,a$};

\node at (-3.6,1) {\scriptsize $3\delta/2$};
\node at (-3.55,-1) {\scriptsize $\delta/2$};

\end{tikzpicture}
};
\node at (-\d,0)
{
\begin{tikzpicture}
\node at (0,0) {\includegraphics[scale=0.65]{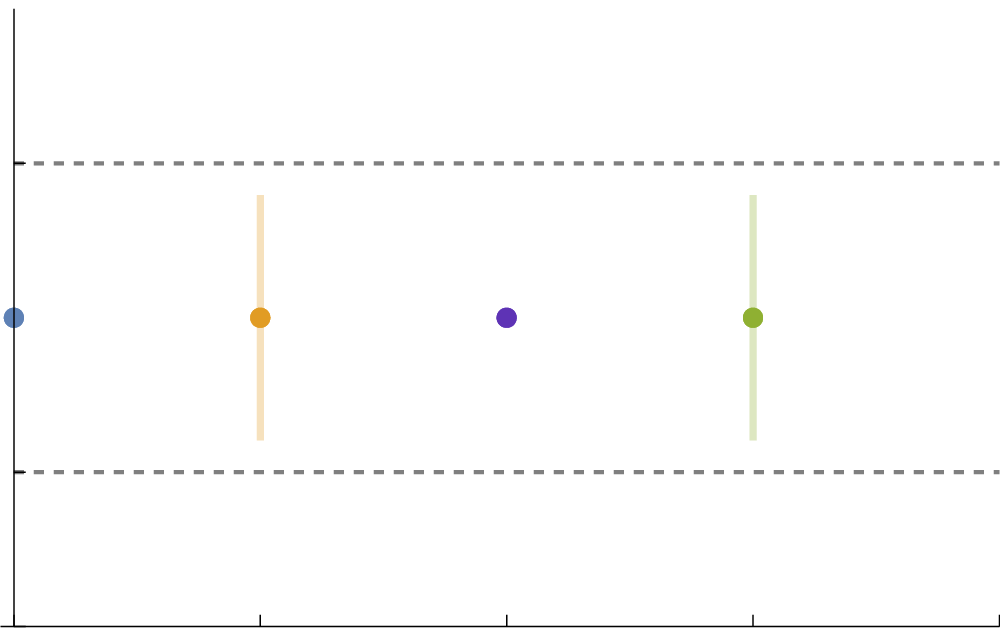}
};
\node at (-3.25,-\a+0.07) {\scriptsize{$0$}};
\node at (-1.6,-\a) {\scriptsize{$L/4$}};
\node at (0,-\a) {\scriptsize{$L/2$}};
\node at (1.6,-\a) {\scriptsize{$3L/4$}};
\node at (3.25,-\a) {\scriptsize{$L$}};

\node at (3.4,-1.65) {\small $\mathrm{Re}\,a$};
\node at (-3,2.2) {\small $\mathrm{Im}\,a$};

\node at (-3.6,1) {  \scriptsize $3\delta/2$};
\node at (-3.5,-1) {\scriptsize $\delta/2$};

\end{tikzpicture}
};

\end{tikzpicture}
\caption{Time evolution of the breather solution with initial data \eqref{eq:breatherdata} I: evolution of the poles. The left plot shows the location of the four poles $a_1$ (blue), $a_2$ (purple), $a_3$ (yellow), and $a_4$ (green) at $t=0$. In addition, the colored shadow indicates the path the poles trace as time evolves, showing that $a_1$ and $a_2$ are stationary, whereas $a_3$ and $a_4$ oscillate vertically. The right plot shows the imaginary part of the two moving poles during a full period from $t=0$ to $t=T \approx 11.83$. 
}
\label{fig:breather_poles}
\end{figure}

However, the dynamics of the spins $\bs_j$ and of the background vector $\bphi$ are not time-periodic, and correspondingly, the solution \eqref{eq:ansatzreal} of the ncIHF equation is not time-periodic. This solution is shown in Fig.~\ref{fig:breather}. We observe that at times $t=T/4+n T/2$ ($n\in \Z_{\geq 0}$), the solution has two points of non-differentiability. At these times, Corollary~\ref{cor:main} does not apply but rather guarantees a solution of the ncIHF equation on intervals with the times $\{T/4+n T/2:n\in \Z_{\geq 0}\}$ subtracted. 

\begin{figure}[t]
\centering
\begin{tikzpicture}[scale=0.9]
\def\a{3.4};
\def\b{2.5};
\def\d{2.1};
\node at (0,\d) {\includegraphics[scale=0.47]{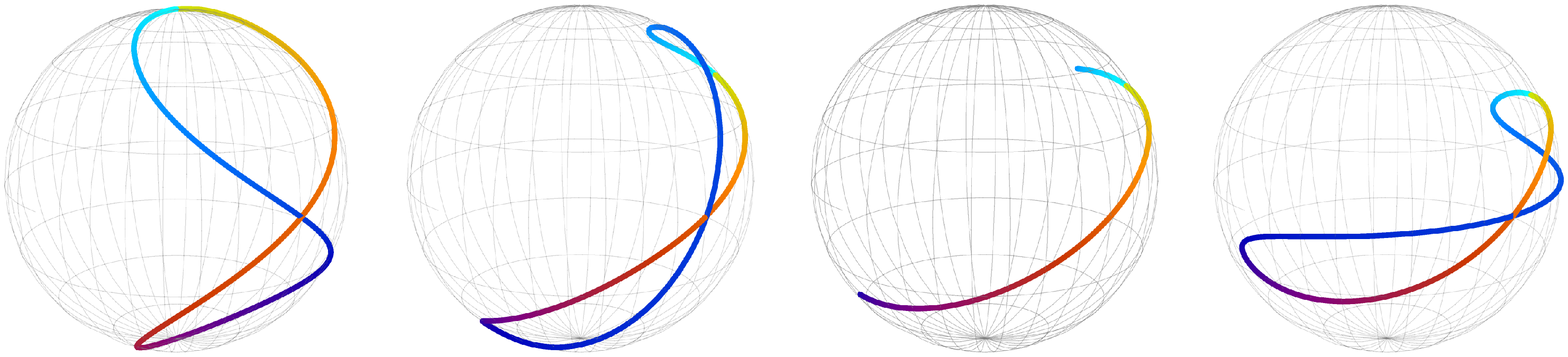}};
\node at (0,-\d) {\includegraphics[scale=0.47]{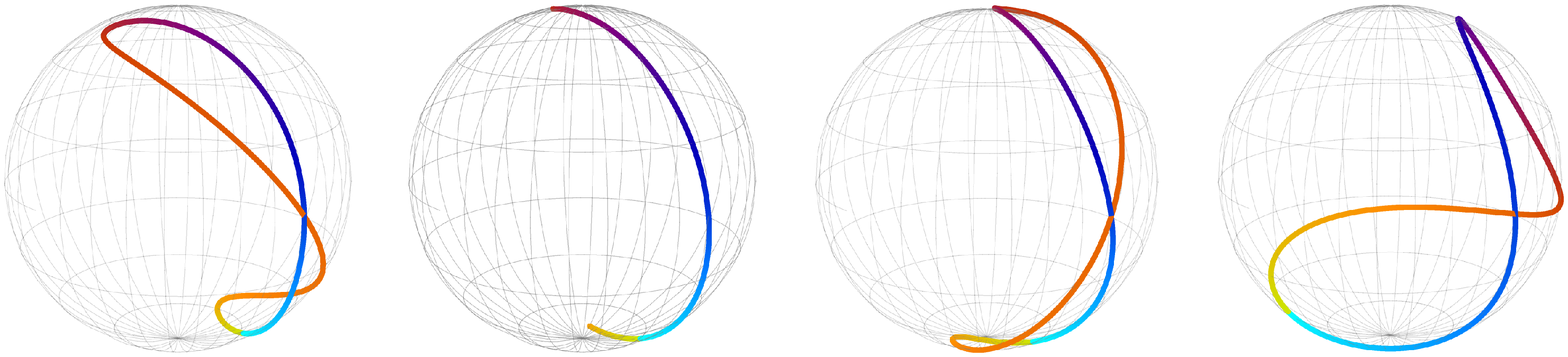}};
\node at (-0.05,-\d-\a) {\includegraphics[scale=1.3]{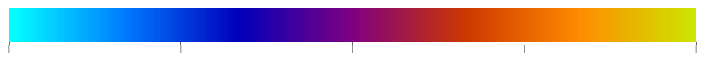}};

\node at (0,-\d-\a+0.7) {$x	$};
\node at (-5.07,-\d-\a-0.55) {\footnotesize $0$};
\node at (-2.57,-\d-\a-0.55) {\footnotesize $L/4$};
\node at (0,-\d-\a-0.55) {\footnotesize $L/2$};
\node at (2.4,-\d-\a-0.55) {\footnotesize $3L/4$};
\node at (5,-\d-\a-0.55) {\footnotesize $L$};

\draw[gray] (-8,\d+\b) -- (8,\d+\b);
\draw[gray] (-8,-\d-\b+0.2) -- (8,-\d-\b+0.2);

\foreach \x in {0,4,8,12,16}
{
\draw[gray] (-8+\x,-\d-\b+0.2) -- (-8+\x,\d+\b);
};

\node at (-6,\d+\b-0.3) {\scriptsize $t=0$};
\node at (-6+4,\d+\b-0.3) {\scriptsize $t=T/4-1/2$};
\node at (-6+8,\d+\b-0.3) {\scriptsize $t=T/4$};
\node at (-6+12,\d+\b-0.3) {\scriptsize $t=T/4+1/2$};

\node at (-6,-\d+\b-0.5) {\scriptsize $t=3T/4-1/2$};
\node at (-6+4,-\d+\b-0.5) {\scriptsize $t=3T/4$};
\node at (-6+8,-\d+\b-0.5) {\scriptsize $t=3T/4+1/2$};
\node at (-6+12,-\d+\b-0.5) {\scriptsize $t=T$};

\node at (-7.7,-\d-\b+0.6) {
\tdplotsetmaincoords{75}{90-41}
\begin{tikzpicture}[tdplot_main_coords,font=\sffamily,scale=0.4]

\draw[-latex] (0,0,0) -- (1,0,0)  node[yshift=0pt, xshift=2.3pt]  {\tiny $x$};
\draw[-latex] (0,0,0) -- (0,1,0)  node[yshift=2pt, xshift=0.5pt] {\tiny $y$};
\draw[-latex] (0,0,0) -- (0,0,1)  node[yshift=2.0pt, xshift=0pt]  {\tiny $z$};

\end{tikzpicture}
};

\end{tikzpicture}
\caption{Time evolution of the breather  solution with initial data \eqref{eq:breatherdata} II: spatial dependence of $\bu(x,t)$ at eight instances of time $t$ measured in the ``period" time $T\approx 11.83$, with colors indicating the position $x$ according to the legend on the bottom. Note that at $t=T/4$ and $t=3T/4$, when $\bu(x,t)$ is not differentiable at two points, $\bu(x,t)$ traces its image exactly twice as $x$ goes from $0$ to $L$. The plots only show one such tracing. By comparing $t=0$ and $t=T$ one sees that the image of $\bu$ is not periodic in time, in contrast to the pole and energy evolution (see Figs.~\ref{fig:breather_poles} and \ref{fig:breather_energy}). The time evolution of $\bv(x,t)$ is the reflection of $\bu$ in the $xz$-plane. The orientation of all plots is the same and indicated by the coordinate system in the bottom left corner.
}
\label{fig:breather}
\end{figure}

The energy density associated with this solution oscillates periodically in time, see Fig.~\ref{fig:breather_energy}. An explicit formula for the energy density is presented below in Section~\ref{sec:energydensity}. 

\begin{remark}
We expect that, by using the methods in \cite{krichever1995spin}, one could verify that the solution of \eqref{eq:sCM1} with the initial conditions \eqref{eq:breatherdata} and initial velocities satisfying \eqref{eq:ajdotreal} at $t=0$ exists on $[0,\infty)$ and that the poles $a_j$ are time-periodic. Then, Corollary~\ref{cor:main} would guarantee a solution of the periodic ncIHF equation on $[0,\infty)\setminus \{T/4+nT/2:n\in \Z_{\geq 0}\}$. We further expect that for a suitable notion of weak solutions of the periodic ncIHF equation, the ansatz \eqref{eq:ansatz} with the elliptic spin CM solution described above would solve the periodic ncIHF equation on $[0,\infty)$. These investigations are outside of the scope of the present paper.
\end{remark}

\subsubsection{Energy densities}\label{sec:energydensity}

\begin{figure}
\centering
\begin{tikzpicture}[scale=0.86]
\def\a{3};
\def\b{0.6};
\def\d{1.4};
\def\s{0.6};

\def\sc{0.46};
\def\scc{0.36};
\def\sccc{1.06};
\def\h{0.93};

\node at (-7,0) {
\includegraphics[scale=\sc]{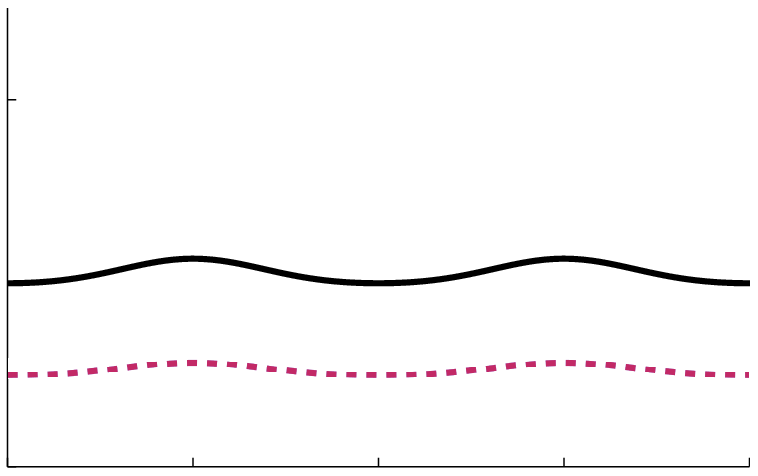}
};
\node at (-7,0) {
\begin{tikzpicture}[scale=\sccc]
\node at (1.85,-0.6) {\scalebox{0.6}{$x$}};
\node at (-1.48,1.5) {\small $\mathbf{\epsilon}$};
\node at (-1.55,-0.78) {\scalebox{\s}{$0$}};
\node at (-1.55,0.12) {\scalebox{\s}{$2$}};
\node at (-1.55,0.92) {\scalebox{\s}{$4$}};

\node at (-0.65,-0.85) {\scalebox{\s}{$L/4$}};
\node at (0.15,-0.85) {\scalebox{\s}{$L/2$}};
\node at (0.95,-0.85) {\scalebox{\s}{$3L/4$}};
\node at (1.8,-0.83) {\scalebox{\s}{$L$}};
\end{tikzpicture}
};

\node at (-2.4,0) {
\includegraphics[scale=\sc]{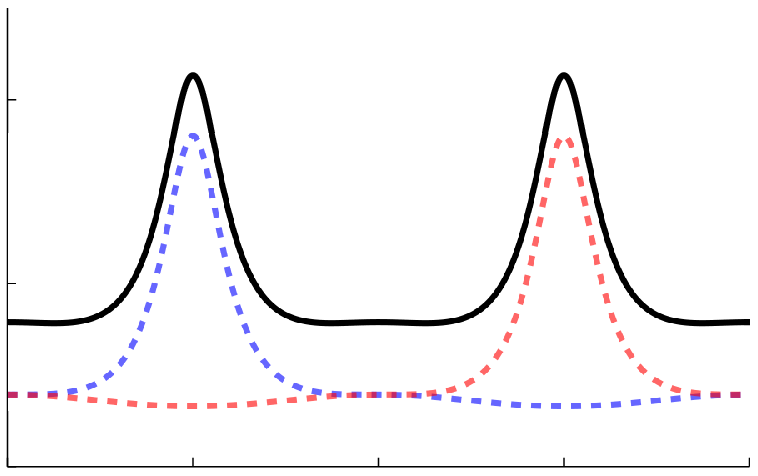}
};
\node at (-2.4,0) {
\begin{tikzpicture}[scale=\sccc]
\node at (1.85,-0.6) {\scalebox{0.6}{$x$}};
\node at (-1.48,1.5) {\small $\mathbf{\epsilon}$};
\node at (-1.55,-0.78) {\scalebox{\s}{$0$}};
\node at (-1.55,0.12) {\scalebox{\s}{$2$}};
\node at (-1.55,0.92) {\scalebox{\s}{$4$}};

\node at (-0.65,-0.85) {\scalebox{\s}{$L/4$}};
\node at (0.15,-0.85) {\scalebox{\s}{$L/2$}};
\node at (0.95,-0.85) {\scalebox{\s}{$3L/4$}};
\node at (1.8,-0.83) {\scalebox{\s}{$L$}};
\end{tikzpicture}
};

\node at (2.4,0) {
\includegraphics[scale=\sc]{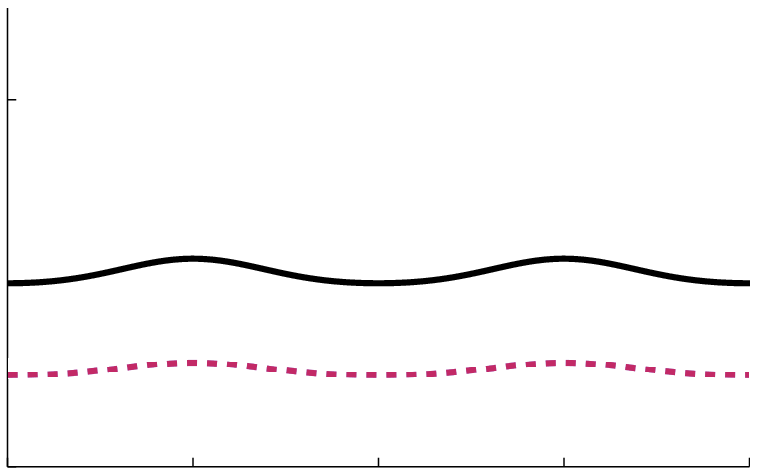}
};
\node at (2.4,0) {
\begin{tikzpicture}[scale=\sccc]
\node at (1.85,-0.6) {\scalebox{0.6}{$x$}};
\node at (-1.48,1.5) {\small $\mathbf{\epsilon}$};
\node at (-1.55,-0.78) {\scalebox{\s}{$0$}};
\node at (-1.55,0.12) {\scalebox{\s}{$2$}};
\node at (-1.55,0.92) {\scalebox{\s}{$4$}};

\node at (-0.65,-0.85) {\scalebox{\s}{$L/4$}};
\node at (0.15,-0.85) {\scalebox{\s}{$L/2$}};
\node at (0.95,-0.85) {\scalebox{\s}{$3L/4$}};
\node at (1.8,-0.83) {\scalebox{\s}{$L$}};
\end{tikzpicture}
};

\node at (7,0) {
\includegraphics[scale=\sc]{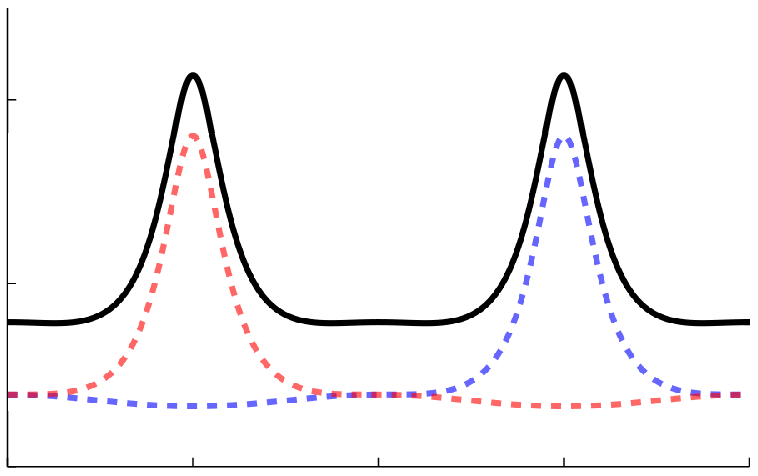}
};
\node at (7,0) {
\begin{tikzpicture}[scale=\sccc]
\node at (1.85,-0.6) {\scalebox{0.6}{$x$}};
\node at (-1.48,1.5) {\small $\mathbf{\epsilon}$};
\node at (-1.55,-0.78) {\scalebox{\s}{$0$}};
\node at (-1.55,0.12) {\scalebox{\s}{$2$}};
\node at (-1.55,0.92) {\scalebox{\s}{$4$}};

\node at (-0.65,-0.85) {\scalebox{\s}{$L/4$}};
\node at (0.15,-0.85) {\scalebox{\s}{$L/2$}};
\node at (0.95,-0.85) {\scalebox{\s}{$3L/4$}};
\node at (1.8,-0.83) {\scalebox{\s}{$L$}};
\end{tikzpicture}
};



\node at (-7,\d+\b-0.3) {\scriptsize $t=0$};
\node at (-2.4,\d+\b-0.3) {\scriptsize $t=T/4$};
\node at (2.4,\d+\b-0.3) {\scriptsize $t=T/2$};
\node at (7,\d+\b-0.3) {\scriptsize $t=3T/4$};

\end{tikzpicture}
\caption{Time evolution of the breather solution with initial data \eqref{eq:breatherdata} III: energy density at four instances of time $t$. 
At each time $t$, the total energy density $\epsilon(x,t)=\epsilon_{\bu}(x,t)+\epsilon_{\bv}(x,t)$ and the individual energy densities $\epsilon_{\bu}(x,t)$ (red) and $\epsilon_{\bv}(x,t)$ (blue) \eqref{eq:energy_uv} of the $\bu$- and the $\bv$-channels are shown. The plots illustrate that the total energy density $\epsilon(x,t)$ is periodic with period $T/2\approx 5.916$, but the $\bu$- and $\bv$-channel energy densities are periodic with period $T\approx 11.83$ only. 
At $t=T$ the energy densities are exactly the same as at $t=0$. }
\label{fig:breather_energy}
\end{figure}

It was shown in \cite[Appendix~A]{berntsonklabbers2021} that a Hamiltonian for the periodic ncIHF equation is given by
\begin{equation}\label{eq:hamiltonian}
\mathcal{H}= \int_{-\ell}^{\ell} (\epsilon_{\bu}+\epsilon_{\bv})\,\mathrm{d}x,\qquad     \left(\begin{array}{c} \epsilon_{\bu} \\ -\epsilon_{\bv} \end{array}\right)\coloneqq -\frac12 \bcu\dotcirc \cT\bcu_x    = -\frac12 \left(\begin{array}{c} \bu\cdot (T\bu_x-\tilde{T}\bv_x) \\ -\bv\cdot (T\bv_x-\tilde{T}\bu_x)        \end{array}\right),
\end{equation}
where the functions $\epsilon_\bu$ and $\epsilon_\bv$ can be interpreted as the energy densities associated with the $\bu$ and $\bv$ fields, respectively. By inserting \eqref{eq:ansatzreal} into \eqref{eq:hamiltonian} and using \eqref{eq:cTA}, \eqref{eq:ArjArk}, and \eqref{eq:constraint2}--\eqref{eq:constraint3}, a calculation similar to that in \cite[Section~5.3]{berntsonklabbers2021} gives the following result.

\begin{proposition}
The energy densities \eqref{eq:hamiltonian} associated with a real $N$-soliton solution \eqref{eq:ansatzreal} of the periodic ncIHF equation \eqref{eq:ncIHF} are given by
\begin{equation}
\label{eq:energy_uv}
\begin{split}
\epsilon_{\bu}=&\; -2\,\im\Bigg(\sum_{j=1}^N \sum_{k=1}^N\bs_j\cdot\bs_k^*\bigg(\wp_2(a_j-a_k^*+\ii\delta)\zeta_2(x-a_j+\ii\delta/2)+\frac12 f_2'(a_j-a_k^*+\ii\delta)  \bigg)\Bigg),        \\
\epsilon_{\bv}=&\; +2\,\im\Bigg(\sum_{j=1}^N\sum_{k=1}^N \bs_j\cdot\bs_k^*\bigg(\wp_2(a_j-a_k^*+\ii\delta)\zeta_2(x-a_j-\ii\delta/2)+\frac12 f_2'(a_j-a_k^*+\ii\delta)\bigg)\Bigg).
\end{split}
\end{equation}
\end{proposition}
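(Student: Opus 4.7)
The plan is to substitute the two-vector form \eqref{eq:ansatzSH} of the ansatz into the Hamiltonian density definition \eqref{eq:hamiltonian} and simplify systematically using the elliptic identities and constraints developed earlier in the paper. By the same reasoning used in the proof of Proposition~\ref{prop:firstorder}, the eigenfunction identity \eqref{eq:cTA} gives $\cT \bcu_x = \sum_{j=1}^{\cN} \bs_j A_{r_j}'(x - a_j)$, and forming the $\dotcirc$-product with $\bcu = \bphi E + \ii \sum_{j=1}^{\cN} r_j \bs_j A_{r_j}(x-a_j)$ yields
\begin{equation*}
\bcu \dotcirc \cT \bcu_x = \sum_{j=1}^{\cN} (\bphi \cdot \bs_j)\, A_{r_j}'(x-a_j) + \ii \sum_{j,k=1}^{\cN} r_j (\bs_j \cdot \bs_k)\, A_{r_j}(x-a_j) \circ A_{r_k}'(x-a_k),
\end{equation*}
where the diagonal $j = k$ contribution in the double sum vanishes by the nilpotency constraint \eqref{eq:constraint1}. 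Extracting the first and second components and multiplying by $-\tfrac{1}{2}$ recovers $\epsilon_{\bu}$ and $-\epsilon_{\bv}$ via \eqref{eq:hamiltonian}.

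The key computational step is to expand each $A_{r_j}(x-a_j) \circ A_{r_k}'(x-a_k)$ using the identity \eqref{eq:ArjArk} and rearrange terms by relabeling indices $j \leftrightarrow k$ where convenient, exploiting the parity properties of $\wp_2$, $\zeta_2$, and $f_2'$ listed in \eqref{eq:parity}. After this rearrangement, the coefficient of each $A_{r_j}'(x-a_j)$ collapses, when combined with the $\bphi \cdot \bs_j$ piece, to $\bs_j \cdot \bigl(\bphi + \ii\sum_{k\neq j} r_k \bs_k\,\zeta_2(\ta_j - \ta_k)\bigr)$, which vanishes by the orthogonality constraint \eqref{eq:constraint2} in the shorthand form $\bs_j \cdot \bb_j = 0$ of \eqref{eq:bj}; similarly, the coefficient of each $F_{r_j}'(x-a_j)$ collapses to $\tfrac{\ii}{2}\, \bs_j \cdot \sum_k r_k \bs_k$, which vanishes by the total-spin constraint \eqref{eq:constraint3}. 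What survives is a linear combination of the functions $A_{r_j}(x-a_j)$ with coefficients proportional to $\wp_2(\ta_j - \ta_k)$, together with a single $x$-independent $E$-term whose coefficient is proportional to $\sum_{j \neq k} r_j \bs_j \cdot \bs_k\, f_2'(\ta_j - \ta_k)$.

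The last step is to apply the real reduction \eqref{eq:reduction} to re-express the sums over the shorthand index set $\{1,\dots,\cN = 2N\}$ as sums over the physical index set $\{1,\dots,N\}$. In both surviving contributions, only the cross blocks with $r_j \neq r_k$ produce nonzero terms: in the $\wp_2$-contribution because of the factor $r_k - r_j$, and in the $f_2'$-contribution because within a same-$r$ block the sum vanishes by the odd parity of $f_2'$ combined with the symmetry of the dot product. For the cross blocks, substituting $\ta_j - \ta_k = a_j - a_k^* \mp \ii\delta$ and using the $2\ii\delta$-periodicity of $\wp_2$ and $f_2'$ to normalize the shift to $+\ii\delta$, one observes that the $(r_j, r_k) = (+,-)$ and $(-,+)$ blocks are complex conjugates of each other, since the reduction \eqref{eq:reduction} exchanges the two sectors under complex conjugation. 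Hence the cross-block sums combine into imaginary parts, and reading off the first and second components using $[A_+(x-a_j)]_{1,2} = \zeta_2(x - a_j \pm \ii\delta/2)$ produces the claimed formulas \eqref{eq:energy_uv}. The main bookkeeping challenge is tracking signs carefully through the numerous index relabelings and parity applications, particularly in verifying that the $A_{r_j}'$ and $F_{r_j}'$ coefficients reduce precisely to the constraints \eqref{eq:constraint2} and \eqref{eq:constraint3}; the overall structure parallels \cite[Section~5.3]{berntsonklabbers2021} for the real-line ncIHF equation, with elliptic functions replacing their hyperbolic counterparts.
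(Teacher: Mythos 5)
Your proposal follows essentially the same route as the paper, whose proof is only a one-sentence sketch: insert the ansatz into \eqref{eq:hamiltonian}, use the eigenfunction property \eqref{eq:cTA} to get $\cT\bcu_x=\sum_j\bs_j A_{r_j}'(x-a_j)$, expand the products via \eqref{eq:ArjArk}, kill the $A_{r_j}'$ and $F_{r_j}'$ coefficients with \eqref{eq:constraint2} and \eqref{eq:constraint3} (together with \eqref{eq:constraint1}), and reduce the surviving cross blocks to imaginary parts using \eqref{eq:reduction}. Your identification of which constraint eliminates which term and of why only the $r_j\neq r_k$ blocks survive is exactly the intended argument; the only thing left implicit is the final coefficient bookkeeping for the $x$-independent $f_2'$ term, which you should carry through explicitly to confirm the stated prefactor.
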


\appendix

\section{Special functions}\label{app:elliptic}

We collect identities for the special functions needed in the main text. 

\subsection{Weierstrass elliptic functions}

We refer to \cite[Chapter~23]{DLMF} for definitions of the standard Weierstrass functions $\zeta(z)$ and $\wp(z)$. The variations of these functions we use, $\zeta_2(z)$ and $\wp_2(z)$, are defined in terms of these basic functions in \eqref{eq:zeta2} and \eqref{eq:wp2}, respectively and the function $f_2(z)$ is defined in \eqref{eq:f2}. These functions satisfy the identities 
\begin{align}
\zeta_2(z)^2=&\; \wp_2(z)+f_2(z), \label{eq:IdV}\\
\zeta_2(z-a)\zeta_2(z-b)=&\;\zeta_2(a-b)\big(\zeta_2(z-a)-\zeta_2(z-b)\big) \nonumber\\
&\;+\frac12(f_2(z-a)+f_2(z-b)+f_2(a-b)\big)+\frac{3\zeta(\ii\delta)}{2\delta}. \label{eq:Idmain} 
\end{align}
for $z,a,b\in\C$. Moreover, the following periodicity properties hold,
\begin{equation}\label{eq:realperiod}
\zeta_2(z\pm 2\ell)=\zeta_2(z)\pm\frac{\pi}{\delta},\qquad \wp_2(z\pm 2\ell)=\wp(z), \qquad f_2(z\pm 2\ell)=f_2(z)\pm \frac{2\pi}{\delta}\zeta_2(z)+\bigg(\frac{\pi}{\delta}\bigg)^2
\end{equation}
and
\begin{equation}\label{eq:imperiod}
\zeta_2(z\pm 2\ii\delta)=\zeta_2(z),\qquad \wp_2(z\pm 2\ii\delta)=\wp_2(z),\qquad f_2(z\pm 2\ii\delta)=f_2(z).
\end{equation}
Proofs of each identity \eqref{eq:IdV}--\eqref{eq:imperiod}, excepting the periodicity properties of $f_2(z)$, can be found in \cite[Appendix~A]{berntsonlangmann2021}. The periodicity properties of $f_2(z)$ follow from those of $\zeta_2(z)$ and $\wp_2(z)$ in \eqref{eq:realperiod}--\eqref{eq:imperiod} and the definition of $f_2(z)$ \eqref{eq:f2}.

The following parity properties hold as consequences of the fact that $\zeta(z)$ is an odd function, $\zeta(-z)=-\zeta(z)$, and $\wp(z)$ is an even function, $\wp(-z)=\wp(z)$, and the definitions \eqref{eq:zeta2}, \eqref{eq:wp2}, and \eqref{eq:f2}, 
\begin{equation}\label{eq:parity}
\zeta_2(-z)=-\zeta_2(z),\qquad \wp_2(-z)=\wp_2(z),\qquad f_2(-z)=f_2(z), \qquad f_2'(-z)=-f_2'(z).
\end{equation}

\subsection{Jacobi elliptic functions and elliptic integrals}

We refer to \cite[Chapter~16]{abramowitz1964} for definitions of the Jacobi functions $\sn(z|m)$ and $\cn(z|m)$. These functions are elliptic in $z$; when the \textit{elliptic parameter} $m$ satisfies $0<m<1$, the functions are real-valued for $z\in\R$. The functions satisfy the identity
\begin{equation}\label{eq:sn2cn2}
\sn^2(z|m)+\cn^2(z|m)=1.
\end{equation}
The elliptic modulus is associated with certain elliptic integrals which determine the periods of the Jacobi elliptic functions. The complete elliptic integrals of the first kind are defined by\footnote{As in Sections~\ref{subsec:parameterization}--\ref{subsec:breather}, we depart from the convention that primes indicate differentiation with respect to the argument when defining $K'=K'(m)$.}
\begin{equation}\label{eq:K}
K(m)\coloneqq \int_{0}^{\frac{\pi}{2}} \frac{\mathrm{d}\theta}{\sqrt{1-m\sin^2\theta}}
\end{equation}
and
\begin{equation}\label{eq:Kp}
K'(m)\coloneqq K(m'),\qquad m'\coloneqq 1-m.
\end{equation}

\section{Equivalent forms of the spin Calogero-Moser system} \label{app:rotation}

We prove the claim in Remark~\ref{rem:equivalent}. 

\begin{proposition}
Suppose that $\{a_j,\bs_j\}_{j=1}^N$ is a solution of the spin Calogero-Moser system \eqref{eq:sCM1} with total spin $\bS=(S^1,S^2,S^3)$ as defined in \eqref{eq:totalspin}. Let $c\in\C$ and $\mR$ be the time-dependent matrix defined by
\begin{equation}\label{eq:R}
\mR(t)\coloneqq \exp(2c\mS t)
\end{equation}
where $\mS\in \mathfrak{so}(3;\C)$ is defined to be
\begin{equation}\label{eq:S}
\mS\coloneqq \left(\begin{array}{ccc}
0 & -S^3 & S^2 \\
S^3 & 0 & -S^1 \\
-S^2 & S^1 & 0
\end{array}\right).
\end{equation}
Then, $\{\mR\bs_j,a_j\}_{j=1}^N$ is a solution of \eqref{eq:sCM1} with $\wp_2(z)\to \wp_2(z)+c$. 
\end{proposition}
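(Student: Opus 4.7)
The plan is a direct verification that the transformed variables satisfy the shifted system, exploiting two facts: the total spin $\bS$ is conserved (so $\mR(t)$ is well-defined and genuinely $\mathrm{SO}(3;\C)$-valued), and $\mR$ intertwines the cross product with itself.

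First I would record the elementary properties of $\mS$ and $\mR$. Since $\mS^{\trans}=-\mS$, $\mR(t)$ lies in $\mathrm{SO}(3;\C)$ for every $t$, hence preserves the symmetric bilinear form $\cdot$ and acts as an algebra automorphism of the cross product, i.e.\ $\mR(\bx\wedge\by)=(\mR\bx)\wedge(\mR\by)$ and $(\mR\bx)\cdot(\mR\by)=\bx\cdot\by$ for all $\bx,\by\in\C^3$. Moreover $\mS$ implements the cross product with $\bS$: $\mS\bx=\bS\wedge\bx$ for all $\bx\in\C^3$. By Lemma~\ref{lem:totalspin} applied to the full sum $\bS=\sum_j\bs_j$ (the lemma's argument goes through verbatim, as $\wp_2$ is even), $\bS$ is constant along the flow, so $\mS$ is time-independent and the exponential \eqref{eq:R} is unambiguous.

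Next I would check the position equations. Set $\tilde\bs_j\coloneqq\mR\bs_j$. Because $\mR$ preserves the dot product, $\tilde\bs_j\cdot\tilde\bs_k=\bs_j\cdot\bs_k$, and because the shift $c$ is a constant, $(\wp_2(z)+c)'=\wp_2'(z)$. Therefore \eqref{eq:sCMa} for the original data immediately yields \eqref{eq:sCMa} with potential $\wp_2(z)+c$ for the data $\{a_j,\tilde\bs_j\}_{j=1}^N$, with no extra work.

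The main step is the spin equation. Differentiating $\tilde\bs_j=\mR\bs_j$ and using $\dot\mR=2c\mS\mR$ together with \eqref{eq:sCMs} for $\bs_j$, I get
\begin{equation*}
\dot{\tilde\bs}_j=2c\mS\tilde\bs_j+\mR\dot\bs_j=2c\,\bS\wedge\tilde\bs_j-2\sum_{k\neq j}^N(\mR\bs_j)\wedge(\mR\bs_k)\,\wp_2(a_j-a_k).
\end{equation*}
Since $\mR\bS=\bS$ (because $\mS\bS=\bS\wedge\bS=\mathbf{0}$), one has $\bS=\sum_k\tilde\bs_k$, hence $\bS\wedge\tilde\bs_j=\sum_{k\neq j}\tilde\bs_k\wedge\tilde\bs_j=-\sum_{k\neq j}\tilde\bs_j\wedge\tilde\bs_k$, which combines with the second term to give exactly $-2\sum_{k\neq j}\tilde\bs_j\wedge\tilde\bs_k(\wp_2(a_j-a_k)+c)$, i.e.\ \eqref{eq:sCMs} for the shifted potential.

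There is no real obstacle; the only subtle point is the consistency check that $\mR$ is well-defined, which rests on conservation of $\bS$, and the observation $\mR\bS=\bS$, which ensures that the image $\tilde\bs_j$ generates the same antisymmetric matrix $\mS$ and closes the calculation. The whole argument amounts to noting that conjugation by the one-parameter group $\exp(2c\mS t)\subset\mathrm{SO}(3;\C)$ acts on the Hamiltonian structure of \eqref{eq:sCM1} precisely as the additive shift $\wp_2\mapsto\wp_2+c$ of the potential.
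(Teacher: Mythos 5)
Your proposal is correct and follows essentially the same route as the paper's proof: both rest on $\mR\in\mathrm{SO}(3;\C)$ preserving the dot and cross products, the identity $\mS\bx=\bS\wedge\bx$, and the rewriting $\bS\wedge\bs_j=-\sum_{k\neq j}\bs_j\wedge\bs_k$ to cancel the $c$-shift; the only cosmetic difference is that you differentiate $\mR\bs_j$ forward using $\mR\bS=\bS$, while the paper substitutes into the shifted system and multiplies by $\mR^{-1}$. Your explicit remark that $\bS$ is conserved (so that $\mS$ is time-independent and $\dot{\mR}=2c\mS\mR$ holds) makes precise a point the paper leaves implicit.
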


\begin{proof}
By construction, $\mR\in \mathrm{SO}(3;\C)$ and is thus invertible with inverse $\mR^{-1}=\mR^{\trans}=\exp(-c\mS t)$. Under the transformations $\bs_j\to \mR\bs_j$ ($j=1,\ldots,N$) and $\wp_2(z)\to \wp_2(z)+c$,  \eqref{eq:sCM1} becomes
\begin{subequations}\label{eq:sCM3}
\begin{align}
\ddot{a}_j=&\; -2\sum_{k\neq j}^N (\mR\bs_j)\cdot(\mR\bs_k) \wp_2'(a_j-a_k), \label{eq:sCM3a} \\
\mR\dot{\bs}_j+\dot{\mR}\bs_j=&\;  -2\sum_{k\neq j}^N\ (\mR\bs_j)\wedge(\mR\bs_k)(\wp_2(a_j-a_k)+c). \label{eq:sCM3s}
\end{align}
\end{subequations}
We show that \eqref{eq:sCM3} holds if and only if \eqref{eq:sCM1} holds. Using the invariance of the dot product under orthogonal transformations, $(\mR\bs_j)\cdot(\mR\bs_k)=\bs_j\cdot\bs_k$, we see that \eqref{eq:sCMa} and \eqref{eq:sCM3a} are equivalent.
Cross products transform under orthogonal transformations as $(\mR\bs_j)\wedge(\mR\bs_k)=\mR(\bs_j\wedge\bs_k)$; using this fact and multiplying by $\mR^{-1}$ in \eqref{eq:sCM3s} gives
\begin{equation}
\dot{\bs}_j+\mR^{-1}\dot{\mR}\bs_j=  -2\sum_{k\neq j}^N\ \bs_j\wedge\bs_k(\wp_2(a_j-a_k)+c)
\end{equation}
We observe that $\mR$ satisfies the differential equation
\begin{equation}
\mR^{-1}\dot{\mR}=2c\mS,
\end{equation}
and that
\begin{equation}\label{eq:Ssj}
\mS\bs_j=\bS\wedge\bs_j=\sum_{k=1}^N \bs_k\wedge\bs_j=-\sum_{k\neq j}^N \bs_j\wedge\bs_k.
\end{equation}
It follows that
\begin{equation}
\dot{\bs}_j-2c\sum_{k\neq j}^N \bs_j\wedge\bs_k    =  -2\sum_{k\neq j}^N\ \bs_j\wedge\bs_k(\wp_2(a_j-a_k)+c),
\end{equation}
which becomes \eqref{eq:sCMs} after cancellations. 
\end{proof}

\begin{remark}
The matrix $\mR$ in \eqref{eq:R} can be written explicitly as 
\begin{equation}\label{eq:R2}
\mR(t) =  I+\frac{\sin\big(2c\sqrt{\bS\cdot\bS}t\big)}{\sqrt{\bS\cdot\bS}}\mS+\frac{1-\cos\big(2c\sqrt{\bS\cdot\bS}t\big)}{\bS\cdot\bS}\mS^2,
\end{equation}
where $I$ is the $3\times 3$ identity matrix and the choice of branch in $\sqrt{\bS\cdot\bS}$ is immaterial. This is equivalent to Rodrigues' formula for the exponential map from $\mathfrak{so}(3;\C)$ to $\mathrm{SO}(3;\C)$ \cite{murray1994}.
\end{remark}

\section*{Acknowledgements}

BKB thanks Jonatan Lenells for helpful discussions and collaboration on closely related projects. BKB and RK gratefully acknowledge valuable input from Edwin Langmann at several stages of this project. The work of BKB was supported by the Olle Engkvist  Foundation, Grant 211-0122.

\bibliographystyle{unsrt}

\bibliography{BKsubmit}

\end{document}